\newtheorem{theorem}{Theorem}
\newcommand{\tabincell}[2]{\begin{tabular}
		{@{}#1@{}}#2
\end{tabular}} %表格内换行
\journal{Information Sciences}
\begin{document}

\begin{frontmatter}

%% Title, authors and addresses

%% use the tnoteref command within \title for footnotes;
%% use the tnotetext command for theassociated footnote;
%% use the fnref command within \author or \address for footnotes;
%% use the fntext command for theassociated footnote;
%% use the corref command within \author for corresponding author footnotes;
%% use the cortext command for theassociated footnote;
%% use the ead command for the email address,
%% and the form \ead[url] for the home page:
%% \title{Title\tnoteref{label1}}
%% \tnotetext[label1]{}
%% \author{Name\corref{cor1}\fnref{label2}}
%% \ead{email address}
%% \ead[url]{home page}
%% \fntext[label2]{}
%% \cortext[cor1]{}
%% \address{Address\fnref{label3}}
%% \fntext[label3]{}

\title{Coupling Chaotic System Based on Unit Transform and Its Applications in Image Encryption}

%% use optional labels to link authors explicitly to addresses:
%% \author[label1,label2]{}
%% \address[label1]{}
%% \address[label2]{}

\author[label1]{Guozhen Hu}
\author[label1]{Baobin Li}

\address[label1]{School of Computer Science and Technology,
University of Chinese Academy of Sciences, Beijing,100049, China}

\begin{abstract}
%% Text of abstract
Chaotic maps are very important for establishing chaos-based image encryption systems. This paper introduces a coupling chaotic system based on a certain unit transform, which can combine any two 1D chaotic maps to generate a new one with excellent performance. The chaotic behavior analysis has verified this coupling system’s effectiveness and progress. In particular, we give a specific strategy about selecting an appropriate unit transform function to enhance chaos of generated maps. Besides, a new chaos based pseudo-random number generator, shorted as CBPRNG, is designed to improve the distribution of chaotic sequences. We give a mathematical illustration on the uniformity of CBPRNG, and test the randomness of it. Moreover, based on CBPRNG, an image encryption algorithm  is  introduced. Simulation results and security analysis indicate that the proposed image encryption scheme is competitive with some advanced existing methods.
\end{abstract}

%%Graphical abstract
%\begin{graphicalabstract}
%\includegraphics{grabs}
%\end{graphicalabstract}

%%Research highlights
\begin{highlights}
\item A novel chaotic system named UT-CCS is proposed as a general framework which has covered several existing chaotification models. The proposed UT-CCS combines two chaotic maps to generate a new one with excellent performance. A feasible way on how to select UTFs in UT-CCS to generate preferable chaotic maps is discussed.
\item A new chaos-based pseudo random number generator (CBPRNG) with good randomness is proposed. The uniformity of proposed CBPRNG is proved theoretically.
\item A novel digital image encryption algorithm based on CBPRNG is designed and tested.
\end{highlights}

\begin{keyword}
%% keywords here, in the form: keyword \sep keyword
Chaotic system \sep Random number generator \sep Image encryption \sep Confusion-diffusion \sep Security analysis

%% PACS codes here, in the form: \PACS code \sep code

%% MSC codes here, in the form: \MSC code \sep code
%% or \MSC[2008] code \sep code (2000 is the default)

\end{keyword}

\end{frontmatter}

%% \linenumbers

%% main text
\section{Introduction}
\label{SEC_Intro}
With the rapid development of the Internet and communication technologies, especially the great expansion of transmission bandwidth, multimedia data mainly consisting of digital images, audios and videos, has taken a large proportion of information exchanged on the Internet. More and more people choose to communicate with each other through multimedia information instead of classical text messages due to its intuition and vividness. Wide applications of multimedia communication bring great convenience to people's daily life. In the meantime, a series of problems also arise in obtaining, transmitting, processing and storing the multimedia data. One of the most serious problems among these issues is the data security during the transmission on common channels, because  those multimedia data includes personal privacy, trade secret and even military secret. The examples of huge losses caused by data leakage are not rare, thus, 
it needs to be protected carefully against potential attackers \cite{chenSymmetricImageEncryption2004,pareekImageEncryptionUsing2006,liaoNovelImageEncryption2010,langImageEncryptionBased2012}.

Considering that videos are essentially composed of sequential image frames, the image information security on which this paper focused, is one of the main branches of multimedia information security. A straightforward approach to protect digital images in transmission is image encryption. That is, the sender encrypts the image into a meaningless cipher, typically a noise-like image, and the receiver decrypts the ciphertext image to recover the original image using pre-negotiated keys. According to the kerckhoffs principle \cite{petitcolasKerckhoffsPrinciple2011}, the security of this cryptosystem shouldn't depend on keeping secret for encrypting scheme, but secret keys only. Thus, it is often difficult to design an image encryption algorithm with excellent performance as attackers will know every details of your method.

From the perspective of computer storage and communication, the image data has no difference from text information, as they are all represented by binary bit streams. Therefore, traditional text encryption algorithms such as Data Encryption Standard \cite{coppersmithDataEncryptionStandard1994}, are applicable for digital image encryption. However, this way brings low efficiency and poor safety because image data has its own characteristics such as high correlation among adjacent pixels, lager amount of data and high redundancy. Over the years, a number of algorithms designed for image data encryption have been proposed  \cite{chenSymmetricImageEncryption2004,pareekImageEncryptionUsing2006,liaoNovelImageEncryption2010,langImageEncryptionBased2012,wuDesignImageCipher2014,
kangDoubleRandomScrambling2017,huaCosinetransformbasedChaoticSystem2019,kangRealityPreservingMultipleParameter2019,
guanChaosbasedImageEncryption2005,belaziNovelImageEncryption2016}. Among these algorithms, the chaos based methods in \cite{chenSymmetricImageEncryption2004,pareekImageEncryptionUsing2006,huaCosinetransformbasedChaoticSystem2019,guanChaosbasedImageEncryption2005,belaziNovelImageEncryption2016}, attract growing attentions from researchers owing to the good nature of chaotic system. A dynamical system is said to be chaotic when it is sensitive to initial value, aperiodic and non-convergence \cite{strogatzNonlinearDynamicsChaos2018}. These characteristics make the chaotic system meet all the needs for encryption \cite{ozkaynakBriefReviewApplication2018}.

In today's image encryption fields, it is no exaggeration to say that most image encryption algorithms are designed based on chaotic maps\iffalse \cite{} \fi. A typical way is to set the chaotic map as a pseudo-random number generator (PRNG). As a result, the security of these encryption methods depends largely on the performance of their underlying chaotic maps\iffalse \cite{} \fi. The existing chaotic maps with excellent performance (e.g. the Lorenz system and Chen system) are often relatively sophisticated, which increases the complexity of encryption process. Researchers tend to adapt simple 1D chaotic maps such as Logistic map to achieve high efficiency. However, these simple systems \iffalse \cite{} \fi have some drawbacks which may cause security hazards to encryption process. Firstly, most of them have a narrow and discontinuous chaotic range on their control parameter that is often served as a secret key in encryption. It means that the key space is limited, which makes the cryptosystem vulnerable to be attacked. Secondly, some chaotic maps are quite sensitive to the finite precision effect of computer and may degrade to non-chaotic immediately. Besides, the chaotic sequences generated by iterating the chaotic maps, usually have a non-uniform distribution, which reduces the randomness of cipher.

To address above problems, researchers have proposed a series of new chaotic systems by modifying and combining existing chaotic maps. For example, Zhang constructed the mixed linear–nonlinear coupled map lattices  to gain outstanding cryptography features in dynamics \cite{zhangSymmetricImageEncryption2014}. Zhou proposed a parametric switching chaotic system  by combining three common 1D chaotic maps to enhance complexity \cite{zhouImageEncryptionUsing2013}. A homogenized Chebyshev-Arnold map was discussed by Luo in \cite{luoNewImageEncryption2018}, which greatly improved the chaotic behavior of the original Chebyshev map.
%
%Luo proposed a homogenized Chebyshev-Arnold map  which greatly improved the chaotic behavior of the original Chebyshev map\cite{luoNewImageEncryption2018}. 
Hua and Zhou combined two 1D chaotic maps to construct a 2D chaotic system with superior chaotic property for image encryption and improved multiple versions \cite{hua2DSineLogistic2015,huaImageEncryptionUsing2016,hua2DLogisticSinecouplingMap2018}. Zahmoul proposed a new Beta-function-base chaotic map  for image encryption with high efficiency \cite{zahmoulImageEncryptionBased2017}. Hua modified the chaotic sine map to improve the chaotic behavior and achieved amazing results \cite{huaSineChaotificationModel2019}. Gayathri proposed spatiotemporal mixed linear–nonlinear coupling with the logistic-sine system for chaotic orbit generation \cite{asgari-chenaghluNovelImageEncryption2019}. All these work has proved the feasibility of chaos modification and laid a solid foundation for chaos application in image enryption.

As aforementioned, each case enumerated above developed a specific chaotic map with clear definition. Recently, some researchers are not content with improving concrete chaotic maps manually. Instead, they tend to build a general framework which could process arbitrary existing chaotic maps and generate new chaotic maps with excellent performance automatically. In this way, a number of new chaotic maps could be easily get for image encryption when we set different existing chaotic maps as seed maps. Up to now, several chaotic systems of this kind have been proposed. Zhou proposed a new 1D chaotic system by simply adding two existing maps followed by a modulo operation \cite{zhouNew1DChaotic2014}. Hua did further research on Zhou's method and improved it by transforming the sum of two chaotic maps through a cosine function to enhance nonlinearity \cite{huaCosinetransformbasedChaoticSystem2019}. Asgari-Chenaghlu proposed a polynomial  chaotic system to bind any number of seed chaotic maps by polynomial coupling and achieved complex chaotic behavior \cite{asgari-chenaghluNovelImageEncryption2019}. Alawida developed a new hybrid digital chaotic system by the composition of two chaotic maps as well as modulo operation \cite{alawidaNewHybridDigital2019}. Parvaz also combined different chaotic maps by composition and proposed a combination chaotic system using lots of parameters to improve performance \cite{parvazCombinationChaoticSystem2018}. Lan proposed an integrated chaotic system by mixing three existing maps with addition, composition, and modulo operation \cite{lanIntegratedChaoticSystems2018}. The research regarding chaotic frameworks is becoming a hotspot.

Although some chaotic frameworks which can generate new chaotic maps from existing seed maps have been proposed, they still have some defects that need to be improved. At first, many existing methods introduce parameters into their frameworks, but they cannot point out how these parameters will affect the generated chaotic maps. This makes the frameworks difficult to use, as users don't know how to set parameters properly. Next, the universality of some proposed chaotic systems is not good. That is, the generated new maps are of poor chaotic property when changing seed maps from given examples. Additionally, the uneven distribution of chaotic sequences gained by iterating chaotic maps are still unresolved thoroughly in most methods. At last, the chaotic behavior of some generated chaotic maps are not complex enough when applied in cryptography fields.

According to the analysis above, this work proposes a coupling chaotic system based on unit transform, named as UT-CCS, to address these problems. UT-CCS has a simple form and can combine any two 1D maps to gain a new chaotic map with excellent performance, with the aid of  an unit transform function(UTF). And in this chaotic framworks, the UTF plays an important role, and has a great influence on the behavior of generated chaotic maps. Thus,  we give a specific guidance on how to select an appropriate UTF  to enhance chaos of generated maps in UT-CCS. Based on this strategy, some new chotic maps have been constructed, and perform better chotic behaviors than some konwn cases in terms of Lyapunov exponent and sample entropy. Besides, we design a new PRNG based on three superior chaotic maps generated by UT-CCS for image encryption. The chaos based PRNG (CBPRNG) greatly improves the numerical distribution of chaotic sequence and the uniformity of CBPRNG is proved theoretically in this paper. At last, a digital image encryption algorithm based on proposed CBPRNG is developed as an application of UT-CCS. Some general tests for security analysis demonstrate the high security of our image encryption scheme and the progress compared to existing works.

The rest of this paper is organized as follows. Section \ref{sec:chaos} introduces the structure of UT-CCS, and analyzes the selection strategy of UTF. Moreover, a series of new chaotic maps are generated as examples. The performance of generated chaotic maps will be tested in Section \ref{sec:test_of_chaos}, which is also compared with other chaotic systems.
%
%Section \ref{sec:test_of_chaos} tests the performance of generated chaotic maps and compares it with other chaotic systems. 
Section \ref{sec:PRNG} introduces, discusses, and evaluates the CBPRNG, which will be used to establish a new image encryption algorithm  in Section \ref{sec:IE}. 
%describes the procedures of proposed image encryption algorithm. 
Section \ref{sec:test_IE} analyzes the security through several tests and the results are compared with a number of existing methods, while Section \ref{sec:conclusion} concludes this paper.

\section{A new chaotic framework:  UT-CCS}
\label{sec:chaos}
In this section, a coupling chaotic system based on unit transform, shorted as UT-CCS, is proposed and described in details. 
To test the performance of the system, several new chaotic maps are generated as examples using three existing chaotic maps. Besides, a feasible way on how to select an appropriate UTF is discussed.

\subsection{Structure of UT-CCS}
\label{subsec:struct}
To build the structure of UT-CCS, we first introduce the concept of unit transform function(UTF) .  A function $f$ is named  a unit transform function if it
satisfying 
$$f:[0,1] \to [0,1].$$ 
That is, A UTF is a function which maps unit interval $[0,1]$ to itself. 

%That is, any function $f$ satisfying 
%$$f:[0,1] \to [0,1]$$ 
%can be seen as a UTF.  
\begin{figure}[htbp]
	\centering
	\includegraphics[scale=0.8]{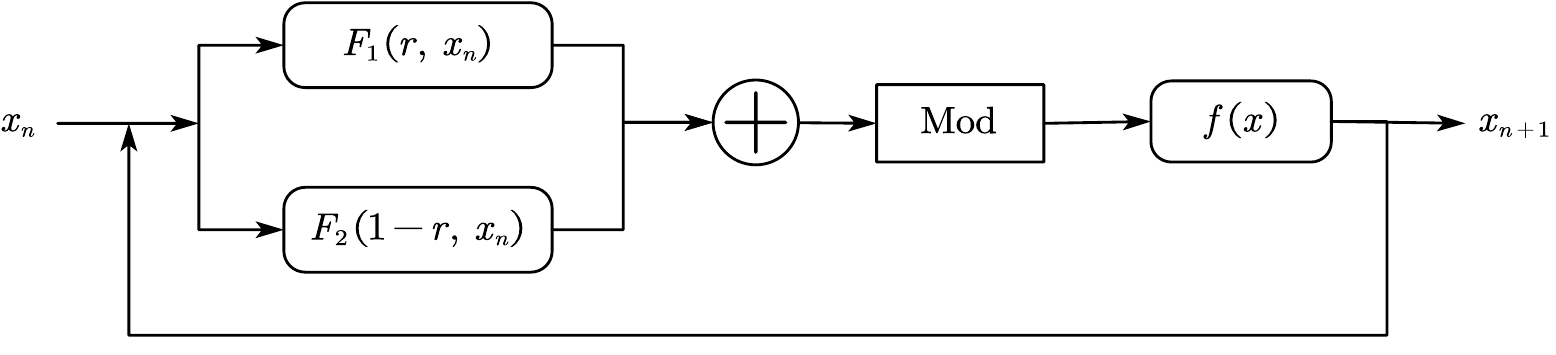}
	\caption{The structure of proposed chaotic system}
	\label{fig:framework_structure}
\end{figure}

For a unit transform function $f(x)$, the structure of  UT-CCS is   depicted in Figure \ref{fig:framework_structure}, which is defined as following:
\begin{equation}
\label{eq:chaos_def}
x_{n+1}=f((F_1(r,x_n)+F_2(1-r,x_n))\ mod \ 1),
\end{equation}
where $F_1$ and $F_2$ are two  1D chaotic maps (seed maps), $r$ and $1-r$ are their normalized control parameters and $r\in [0,1]$, and $n$ is the iteration number. In this proposed chaotic systeym, two 1D  chaotic maps are 
%From Eq.(\ref{eq:chaos_def}), we can see that two existing chaotic maps are 
coupled by their normalized control parameter at first, which will mix the chaotic behaviors of these two seed maps, and  effectively expand the chaotic range of parameter $r$. Then, modulo  operation is performed to ensure the output value is in $[0,1]$. At last, a UTF $f(x)$ is used to further increase the complexity and nonlinearity of chaotic system.

In the above chaotic framwork, different seed maps and unit transforms functions will produce various chaotic maps with different behaviors.
Since the seed maps $F_1$, $F_2$ in Eq.(\ref{eq:chaos_def}) can be set as any existing chaotic maps, users can easily generate customized chaotic maps to meet various needs. Besides,  selecting different functions as UTF can also improve the performance of generated chaotic maps. Even for the same seed maps, when setting different UTFs in our proposed UT-CCS, the behaviors of generated chaotic systems change dramatically, which will be illustrated in the following section. And an effective guidance on how to choose a proper UTF will be also offered later in this paper. 

%The selection of UTF is determined by users and we will offer an effective guidance on how to choose a proper UTF later in this paper.

Actually, the UT-CCS is proposed as an unified chaotic framework which has covered several existing chaotification models. For example, when we choose the UTF as $f(x)=x$, then UT-CCS becomes the combined chaotic system proposed by Zhou et al. in \cite{zhouNew1DChaotic2014}; when setting $f(x)=\sin (\pi x)$, it is identified with the STBCS proposed by Hua et al. in \cite{huaSineTransformBasedChaoticSystem2018}; when $f(x)=\cos (\pi x)$ ,  the CTBCS proposed  in  \cite{huaCosinetransformbasedChaoticSystem2019} has the similar structure of UT-CCS. In this regard, our work is of great generality.
\subsection{Some Examples of UT-CCS}
\label{subsec:exp}
As examples, we select three common 1D chaotic maps as seed maps. The definitions of three seed maps with their parameters normalized are listed below:
\begin{enumerate}[1)]
	\item Logistic map: 
	\begin{equation}
	\label{eq:L}
	x_{n+1}=4rx_n(1-x_n),
	\end{equation}
	\item Tent map: 
	\begin{equation}
	\label{eq:T}
	x_{n+1}=\left\{
	\begin{aligned}
	&2rx_n, &x_n \in [0,0.5), \\
	&2r(1-x_n), &x_n \in [0.5,1], \\
	\end{aligned}
	\right.
	\end{equation}
	\item Sine map: 
	\begin{equation}
	\label{eq:S}
	x_{n+1}=r\sin(\pi x_n).
	\end{equation}
\end{enumerate}
In Eq.(\ref{eq:L}) - (\ref{eq:S}), $r$ is the normalized control parameter of chaotic maps, where $r \in [0,1]$. Besides, it can be trivially proved that these three maps always get outputs in $[0,1]$ if inputs are in $[0,1]$, which means the modulo 1 operation can be omitted when they are set as seed maps of the proposed UT-CCS.

Set any two of these maps as seed maps, then there are three combinations of coupling, naming Logistic-Sine coupling map(LSCM), Tent-Logistic coupling map(TLCM) and Sine-Tent coupling map(STCM). 
They are formally defined as Eq.(\ref{eq:CM}):
\begin{equation}
\label{eq:CM}
\begin{aligned}
&LSCM: x_{n+1}=4rx_n(1-x_n)+(1-r)\sin(\pi x_n), \\
&TLCM: x_{n+1}=\left\{
\begin{aligned}
&2rx_n+4(1-r)x_n(1-x_n), &x_n \in [0,0.5), \\
&2r(1-x_n)+4(1-r)x_n(1-x_n), &x_n \in [0.5,1], \\
\end{aligned}
\right.\\
&STCM: x_{n+1}=\left\{
\begin{aligned}
&r\sin(\pi x)+2(1-r)x_n, &x_n \in [0,0.5), \\
&r\sin(\pi x)+2(1-r)(1-x_n), &x_n \in [0.5,1]. \\
\end{aligned}
\right.\\
\end{aligned}
\end{equation}

The last step of UT-CCS is to select an appropriate UTF $f(x)$. For the sake of simplicity, now we consider five basic elementary functions (BEFs) as $f(x)$ respectively, then we can get 15 new chaotic maps totally. Note that $f(x)$ should be defined properly on unit interval [0,1]. The definitions of these new chaotic maps based on LSCM, TLCM, and STCM are listed as Table \ref{tab:def_LSCM} - \ref{tab:def_STCM}, respectively.

\begin{table}[htbp]
	\centering
	\caption{Definitions of five new chaotic maps based on LSCM}
	\renewcommand\arraystretch{1.5}
	\begin{tabular}{c|c}
		\hline
		$f(x)$ & corresponding LSCM \\
		\hline
		$f(x)=x$ & $x_{n+1}=4rx_n(1-x_n)+(1-r)\sin(\pi x_n)$ \\
		$f(x)=2^x-1$ & $x_{n+1}=2^{4rx_n(1-x_n)+(1-r)\sin(\pi x_n)}-1$ \\
		$f(x)=\frac{\ln(1+x)}{\ln 2}$ & $x_{n+1}=\frac{\ln (1+4rx_n(1-x_n)+(1-r)\sin(\pi x_n))}{\ln 2}$ \\
		$f(x)=\sin(\pi x)$ & $x_{n+1}=\sin(\pi(4rx_n(1-x_n)+(1-r)\sin(\pi x_n)))$ \\
		$f(x)=\frac{2}{\pi}\arcsin x$ & $x_{n+1}=\frac{2}{\pi}\arcsin(4rx_n(1-x_n)+(1-r)\sin(\pi x_n))$ \\[3mm]
		\hline
	\end{tabular}
	\label{tab:def_LSCM}
\end{table}

\begin{table}[htbp]
	\scriptsize
	\centering
	\caption{Definitions of five new chaotic maps based on TLCM}
	\renewcommand\arraystretch{4}
	\begin{tabular}{c|c}
		\hline
		$f(x)$ & corresponding TLCM \\
		\hline
		$f(x)=x$ & $x_{n+1}=\left\{
		\begin{aligned}
		&2rx_n+4(1-r)x_n(1-x_n), &x_n \in [0,0.5) \\
		&2r(1-x_n)+4(1-r)x_n(1-x_n), &x_n \in [0.5,1] \\
		\end{aligned}
		\right.$ \\
		
		$f(x)=2^x-1$ & $x_{n+1}=\left\{
		\begin{aligned}
		&2^{2rx_n+4(1-r)x_n(1-x_n)}-1, &x_n \in [0,0.5) \\
		&2^{2r(1-x_n)+4(1-r)x_n(1-x_n)}-1, &x_n \in [0.5,1] \\
		\end{aligned}
		\right.$ \\
		
		$f(x)=\frac{\ln(1+x)}{\ln 2}$ & $x_{n+1}=\left\{
		\begin{aligned}
		&\frac{\ln (1+2rx_n+4(1-r)x_n(1-x_n))}{\ln 2}, &x_n \in [0,0.5) \\
		&\frac{\ln (1+2r(1-x_n)+4(1-r)x_n(1-x_n))}{\ln 2}, &x_n \in [0.5,1] \\
		\end{aligned}
		\right.$ \\
		
		$f(x)=\sin(\pi x)$ & $x_{n+1}=\left\{
		\begin{aligned}
		&\sin(\pi(2rx_n+4(1-r)x_n(1-x_n))), &x_n \in [0,0.5) \\
		&\sin(\pi(2r(1-x_n)+4(1-r)x_n(1-x_n))), &x_n \in [0.5,1] \\
		\end{aligned}
		\right.$ \\
		
		$f(x)=\frac{2}{\pi}\arcsin x$ & $x_{n+1}=\left\{
		\begin{aligned}
		&\frac{2}{\pi}\arcsin(2rx_n+4(1-r)x_n(1-x_n)), &x_n \in [0,0.5) \\
		&\frac{2}{\pi}\arcsin(2r(1-x_n)+4(1-r)x_n(1-x_n)), &x_n \in [0.5,1] \\
		\end{aligned}
		\right.$ \\[3mm]
		
		\hline
	\end{tabular}
	\label{tab:def_TLCM}
\end{table}

\begin{table}[htbp]
	\scriptsize
	\centering
	\caption{Definitions of five new chaotic maps based on STCM}
	\renewcommand\arraystretch{4}
	\begin{tabular}{c|c}
		\hline
		$f(x)$ & corresponding TLCM \\
		\hline
		$f(x)=x$ & $x_{n+1}=\left\{
		\begin{aligned}
		&r\sin(\pi x)+2(1-r)x_n, &x_n \in [0,0.5) \\
		&r\sin(\pi x)+2(1-r)(1-x_n), &x_n \in [0.5,1] \\
		\end{aligned}
		\right.$ \\
		
		$f(x)=2^x-1$ & $x_{n+1}=\left\{
		\begin{aligned}
		&2^{r\sin(\pi x)+2(1-r)x_n}-1, &x_n \in [0,0.5) \\
		&2^{r\sin(\pi x)+2(1-r)(1-x_n)}-1, &x_n \in [0.5,1] \\
		\end{aligned}
		\right.$ \\
		
		$f(x)=\frac{\ln(1+x)}{\ln 2}$ & $x_{n+1}=\left\{
		\begin{aligned}
		&\frac{\ln (1+r\sin(\pi x)+2(1-r)x_n)}{\ln 2}, &x_n \in [0,0.5) \\
		&\frac{\ln (1+r\sin(\pi x)+2(1-r)(1-x_n))}{\ln 2}, &x_n \in [0.5,1] \\
		\end{aligned}
		\right.$ \\
		
		$f(x)=\sin(\pi x)$ & $x_{n+1}=\left\{
		\begin{aligned}
		&\sin(\pi(r\sin(\pi x)+2(1-r)x_n)), &x_n \in [0,0.5) \\
		&\sin(\pi(r\sin(\pi x)+2(1-r)(1-x_n))), &x_n \in [0.5,1] \\
		\end{aligned}
		\right.$ \\
		
		$f(x)=\frac{2}{\pi}\arcsin x$ & $x_{n+1}=\left\{
		\begin{aligned}
		&\frac{2}{\pi}\arcsin(r\sin(\pi x)+2(1-r)x_n), &x_n \in [0,0.5) \\
		&\frac{2}{\pi}\arcsin(r\sin(\pi x)+2(1-r)(1-x_n)), &x_n \in [0.5,1] \\
		\end{aligned}
		\right.$ \\[3mm]
		
		\hline
	\end{tabular}

	\label{tab:def_STCM}
\end{table}

\subsection{Selecting strategy of UTFs}
\label{subsec:select_UTF}
In above examples,  the cases for $f(x)=x,\ \sin (\pi x)$, have been discussed in \cite{zhouNew1DChaotic2014} and \cite{huaSineTransformBasedChaoticSystem2018}, respectively, which also show that different UTFs will bring distinct chaotic systems. Thus, how to select an appropriate UTF $f(x)$ is an important issue in our UT-CCS.
 %the main issue of UT-CCS is how to select an appropriate . 
Here are some basic principles:
\begin{enumerate}[1)]
	\item $f(x)$ should be a surjection on unit interval [0,1], otherwise the system may degrade into non-chaotic with iteration.
	\item $f(x)$ should expand the chaotic range of parameter $r$. The new map is chaotic for all $r \in [0,1]$ is the best.
	\item $f(x)$ should make the new map exhibit more complex chaotic behaviors than its underlying seed maps.
	\item $f(x)$ should be as simple as possible so that chaotic sequence could be generated efficiently on computers.
\end{enumerate}
As we can see above, 1) and 4) are easy to meet while satisfying 2) and 3) need to do experiments on a number of potential $f(x)$. 

In order to get more specific guides on selecting $f(x)$, we analyze the UT-CCS in terms of Lyapunov exponent (LE) \cite{wolfDeterminingLyapunovExponents1985}, which is widely used to measure the chaotic level of a dynamical system. Suppose a dynamical system is defined by $x_{n+1}=F(x_n)$, where $F(x)$ is piecewise differentiable, then LE of the system is defined as:
\begin{equation}
\lambda_F = \lim\limits_{n \to \infty} \frac{1}{n} \sum_{i=0}^{n-1} \ln |F'(x_i)|.
\end{equation}
Generally, a positive LE demonstrates chaos of the system. Larger LE indicates faster diverging rate of two close trajectories, that is, the more complex chaotic behaviors of the system.
Therefore, the UTF $f(x)$ should be chosen to get large LE of the new chaotic map. Accordingly, a theorem regarding the selection of $f(x)$ is given below.
\begin{theorem}
	\label{th:1}
	Let $F_1$, $F_2$ be two chaotic maps generated by UT-CCS using the same seed maps, $f_1$, $f_2$ are their corresponding piecewise differentiable UTFs respectively. Then 
	$$\lambda_{F_1} \geqslant  \lambda_{F_2},$$ 
	if
	$$\forall x \in [0,1], |f'_1(x)| \geqslant |f'_2(x)|, $$
	where $\lambda_{F_i}$ denotes the Lyapunov exponent of $F_i$, $i=1,2$.
\end{theorem}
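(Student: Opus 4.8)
The plan is to exploit the compositional structure of UT-CCS. Since the two maps share the same seed maps, write $g$ for the common inner map $(\,\cdot\,)\bmod 1$ appearing inside $f$ in Eq.(\ref{eq:chaos_def}); then each generated map factors as a composition $F_i = f_i \circ g$, $i=1,2$, and only the outer UTF differs. By the chain rule, $F_i'(x) = f_i'(g(x))\,g'(x)$ wherever both factors are differentiable, so $\ln|F_i'(x)| = \ln|f_i'(g(x))| + \ln|g'(x)|$. Substituting this into the defining limit for the Lyapunov exponent splits it additively:
$$\lambda_{F_i} = \lim_{n\to\infty}\frac{1}{n}\sum_{k=0}^{n-1}\ln|f_i'(g(x_k))| \;+\; \lim_{n\to\infty}\frac{1}{n}\sum_{k=0}^{n-1}\ln|g'(x_k)|.$$

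First I would isolate the first term, the \emph{transform contribution}. Because $g$ maps into $[0,1]$, every argument $g(x_k)$ lies in the unit interval, so the hypothesis $|f_1'(\cdot)|\geqslant|f_2'(\cdot)|$ applies pointwise and yields $\ln|f_1'(g(x_k))|\geqslant\ln|f_2'(g(x_k))|$ term by term; hence the transform contribution of $F_1$ dominates that of $F_2$. The second term, the \emph{seed contribution}, depends only on $g$, which is identical for both maps, so one is tempted to cancel it and finish immediately.

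The hard part is precisely that this cancellation is not automatic: the orbit $\{x_k\}$ in $\lambda_{F_1}$ is generated by iterating $F_1$, whereas the orbit in $\lambda_{F_2}$ is generated by iterating $F_2$, so the two seed contributions are time averages of the \emph{same} integrand $\ln|g'(\cdot)|$ but along \emph{different} trajectories. To make the comparison rigorous I would invoke ergodicity: assuming each generated map is ergodic with invariant measure $\mu_{F_i}$, the Birkhoff ergodic theorem replaces the time averages by space averages, giving $\lambda_{F_i} = \int_0^1 \ln|f_i'(g(x))|\,d\mu_{F_i}(x) + \int_0^1 \ln|g'(x)|\,d\mu_{F_i}(x)$ for $\mu_{F_i}$-almost every initial point. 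The cleanest sufficient condition under which the seed contributions cancel is that both invariant measures coincide with a common reference measure (for instance the uniform measure on $[0,1]$, as holds when the maps are well mixed); then the two seed integrals are literally equal and subtracting gives
$$\lambda_{F_1} - \lambda_{F_2} = \int_0^1\bigl(\ln|f_1'(g(x))| - \ln|f_2'(g(x))|\bigr)\,d\mu(x) \geqslant 0,$$
which is the claim.

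I therefore expect the substantive obstacle to be justifying the invariant-measure equality, or, at the statistical level used in chaos-based cryptography, arguing that the distribution of the iterates $g(x_k)$ is insensitive to the choice of outer UTF. If one is content with a heuristic, the argument collapses to treating the seed contribution as a common term along a shared reference orbit and concluding from the pointwise derivative inequality alone; the rigorous version needs the ergodic-measure step sketched above, and that is where the real work lies.
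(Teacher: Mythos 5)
Your route is essentially the paper's: it factors $F_i=f_i\circ\sigma$ where $\sigma$ is the coupled seed map followed by the mod-1 reduction, applies the chain rule to get $\ln|F_i'(x)|=\ln|f_i'(\sigma(x))|+\ln|\sigma'(x)|$, and uses $\sigma(x)\in[0,1]$ to invoke the hypothesis pointwise, concluding $\ln|F_1'(x)|\geqslant\ln|F_2'(x)|$ for all $x$. The difference lies entirely in the last step. The paper passes directly from this pointwise inequality to $\lambda_{F_1}\geqslant\lambda_{F_2}$ by writing both Birkhoff sums over the same points $x_i$ --- exactly the cancellation you flag as ``not automatic.'' You are right that this is the genuine gap: the two time averages run along different orbits (one generated by iterating $F_1$, one by iterating $F_2$), and the pointwise inequality only shows that the average of $\ln|F_1'|$ along the $F_2$-orbit dominates $\lambda_{F_2}$, which is not $\lambda_{F_1}$. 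Your ergodic repair (Birkhoff's theorem plus coincidence of the invariant measures) is a legitimate way to close the gap, but it imports hypotheses --- ergodicity and a common invariant density --- that appear nowhere in the theorem statement and are not verified for the generated maps; without them the conclusion can fail, e.g.\ if $\mu_{F_1}$ concentrates where $\ln|F_1'|$ is small while $\mu_{F_2}$ concentrates where $\ln|F_2'|$ is large. So: same decomposition and same key inequality as the paper, but you have correctly located the step at which the paper's own proof is only heuristic, and your completed argument establishes a properly qualified version of the claim rather than the theorem exactly as written.
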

\begin{proof}
	Suppose $h(x)$, $g(x)$ are two seed maps. Denote $$\sigma(x) = (h(r,x)+g(1-r,x)) \  mod \ 1,$$  then $F_1(x)=f_1(\sigma(x))$, $F_2(x)=f_2(\sigma(x))$. Considering $\sigma(x) \in [0,1]$, from the condition of Theorem \ref{th:1}, it is easy to  get
	\begin{equation*}
	\begin{aligned}
	&\forall x \in [0,1],\  |\frac{\mathrm{d} f_1}{\mathrm{d} \sigma}| \geqslant |\frac{\mathrm{d} f_2}{\mathrm{d} \sigma}| \\
	\Rightarrow & \ln |\frac{\mathrm{d} f_1}{\mathrm{d} \sigma}| + \ln |\frac{\mathrm{d} \sigma}{\mathrm{d} x}| \geqslant \ln |\frac{\mathrm{d} f_2}{\mathrm{d} \sigma}| + \ln |\frac{\mathrm{d} \sigma}{\mathrm{d} x}| \\
	\Rightarrow & \forall x \in [0,1], \ \ln |F'_1(x)| \geqslant \ln |F'_2(x)| \\
	\Rightarrow & \lim\limits_{n \to \infty} \frac{1}{n} \sum_{i=0}^{n-1} \ln |F_1'(x_i)| \geqslant \lim\limits_{n \to \infty} \frac{1}{n} \sum_{i=0}^{n-1} \ln |F_2'(x_i)| \\
	\Rightarrow & \lambda_{F_1} \geqslant  \lambda_{F_2}
	\end{aligned}
	\end{equation*}
\end{proof}

According to Theorem \ref{th:1}, the larger $|f'(x)|$ brings larger LE of the new chaotic map. 
Thus, we can choose UTF $f(x)$ with greater $|f'(x)|$ on $[0,1]$ to improve the chaotic behavior of UT-CCS. In this way, an appropriate $f(x)$ is expected to be found easily.

Based on Theorem \ref{th:1}, we construct another three UTFs defined as following to generate preferable chaotic maps:
\begin{enumerate}[I:]
	\item $|f'(x)| \equiv 2$:
	\begin{equation}
	\label{eq:fx=2x}
		f(x) = \left\{
		\begin{array}{cl}
		2x, &x \in [0,0.5), \\
		-2x+2, &x \in [0.5,1], \\
		\end{array}
		\right.
	\end{equation} \\
	\item $|f'(x)| \equiv 4$:
	\begin{equation}
	\label{eq:fx=4x}
		f(x) = \left\{
		\begin{array}{cl}
		4x, &x \in [0,0.25), \\
		-4x+2, &x \in [0.25,0.5), \\
		4x-2, &x \in [0.5,0.75), \\
		-4x+4, &x \in [0.75,1], \\
		\end{array}
		\right.
	\end{equation} \\
	
	\item $|f'(x)| \equiv 8$:
	\begin{equation}
	\label{eq:fx=8x}
		f(x) = \left\{
		\begin{array}{cl}
		8x, &x \in [0,0.125), \\
		-8x+2, &x \in [0.125,0.25), \\
		8x-2, &x \in [0.25,0.375), \\
		-8x+4, &x \in [0.375,0.5), \\
		8x-4, &x \in [0.5,0.625), \\
		-8x+6, &x \in [0.625,0.75), \\
		8x-6, &x \in [0.75,0.875), \\
		-8x+8, &x \in [0.875,1]. \\
		\end{array}
		\right.
	\end{equation} \\
\end{enumerate}

For above three UTFs and three coupling maps---LSCM, TLCM, STCM, we can get another 9 new chaotic maps, which can be classified into three types according to which UTF above is used,
\begin{itemize}
%\centering
\item[]Type I:\quad LSCM-I,\ TLCM-I,\ STCM-I;
\item[]Type II:\ \ LSCM-II,\ TLCM-II,\ STCM-II;
\item[]Type III:\ LSCM-III,\ TLCM-III,\ STCM-III.
\end{itemize}
That is, the generated chaotic maps LSCM-I, TLCM-I, and STCM-I that based on the first UTF (Eq.(\ref{eq:fx=2x})) are collectively called type-I coupling chaotic maps, and so on. The spceific mathematical definitions of these new maps are omitted.

\section{Performance analysis of generated chaotic maps}
\label{sec:test_of_chaos}
In this section, we analyze the chaotic behaviors of the generated chaotic maps in Section \ref{subsec:exp} - \ref{subsec:select_UTF} to demonstrate the effectiveness of UT-CCS. Since the type-III coupling chaotic maps are expected to gain the best performance, we will focus on them in all test items. We evaluate chaotic maps in terms of Lyapunov exponent (LE), bifurcation diagram (BD), cobweb diagram, and sample entropy (SE). The results are also compared with Zhou's method \cite{zhouNew1DChaotic2014}, Hua's CTBCS \cite{huaCosinetransformbasedChaoticSystem2019}, and the underlying seed maps.

\subsection{Lyapunov exponent}
The definition and implication of LE have been introduced in Section \ref{subsec:select_UTF}. Figure \ref{fig:LE_seed_maps} has shown the LEs of three seed maps with the change of normalized parameter $r$. Figure \ref{fig:LE_BEF} illustrates the results of the 15 new chaotic maps where five BEFs are selected as UTFs. As we can see, the LEs of Logistic map and Sine map are positive only for some $r \in [0.89,1]$, and the LE of Tent map is positive only when $r>0.5$.  That is, all the three seed maps have a narrow and discontinuous chaotic range. Figure \ref{fig:LE_BEF} shows that the 15 new chaotic maps have positive LEs for all $r \in [0,1]$, indicating that all the generated chaotic maps are superior than their underlying seed maps, which  also illustrates that our UT-CCS  is effective.

%Therefore. the UT-CCS is effective.

\begin{figure}[htbp]
	\centering
	\subfigure[]{
		\label{fig:LE_Logistic_map}
		\includegraphics[width=0.3\textwidth]{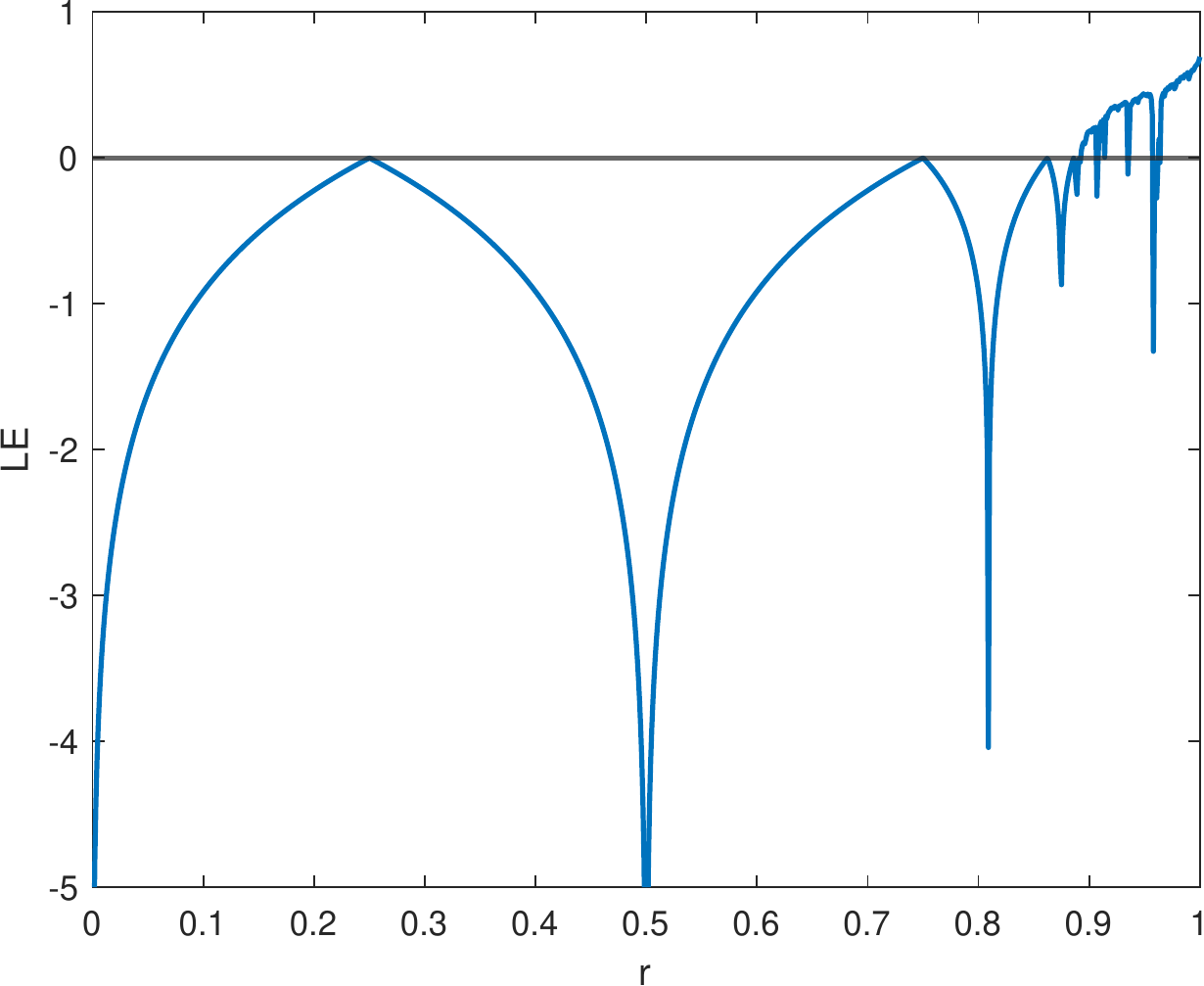}}
	\subfigure[]{
		\label{fig:LE_Tent_map}
		\includegraphics[width=0.3\textwidth]{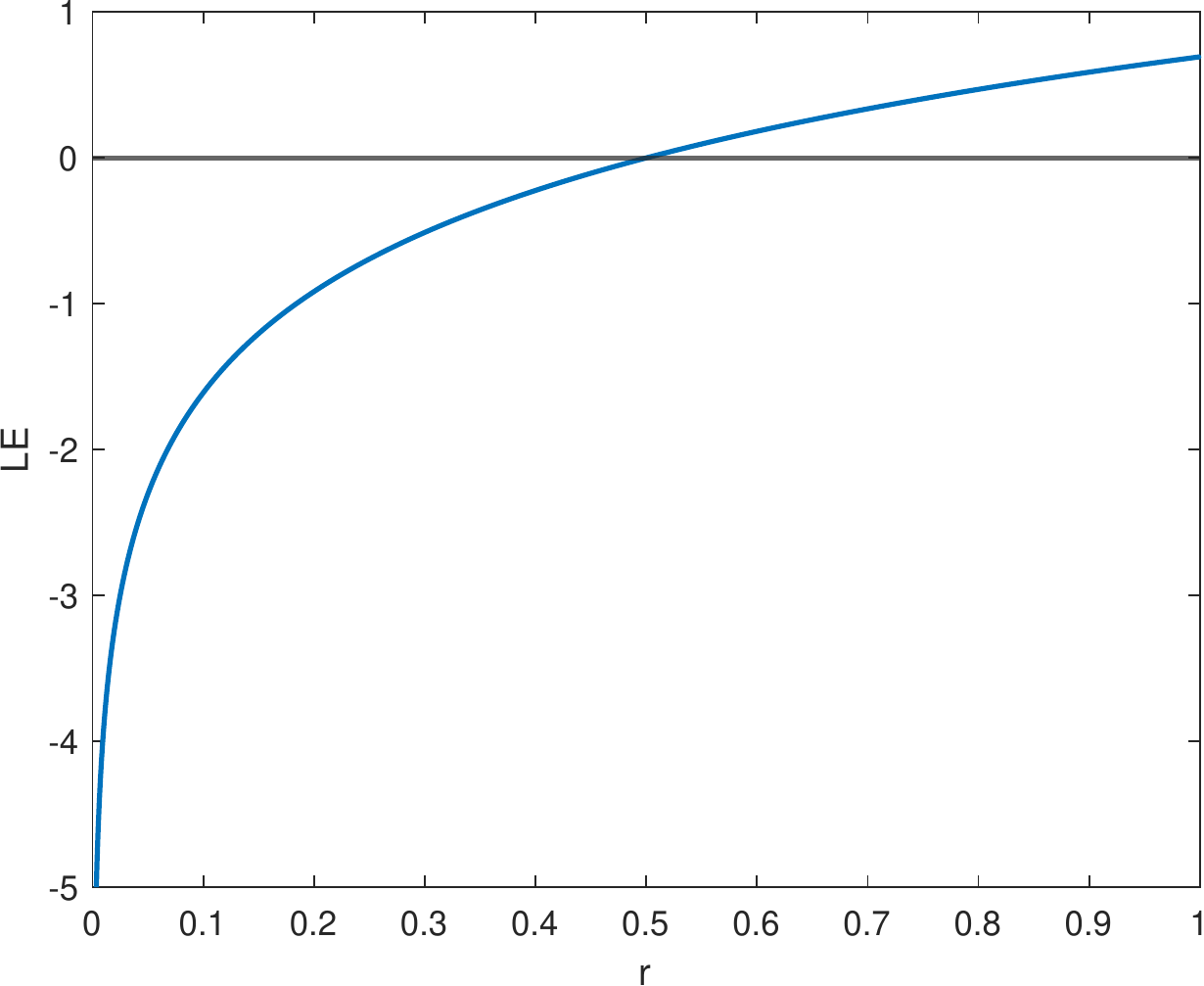}}
	\subfigure[]{
		\label{fig:LE_Sine_map}
		\includegraphics[width=0.3\textwidth]{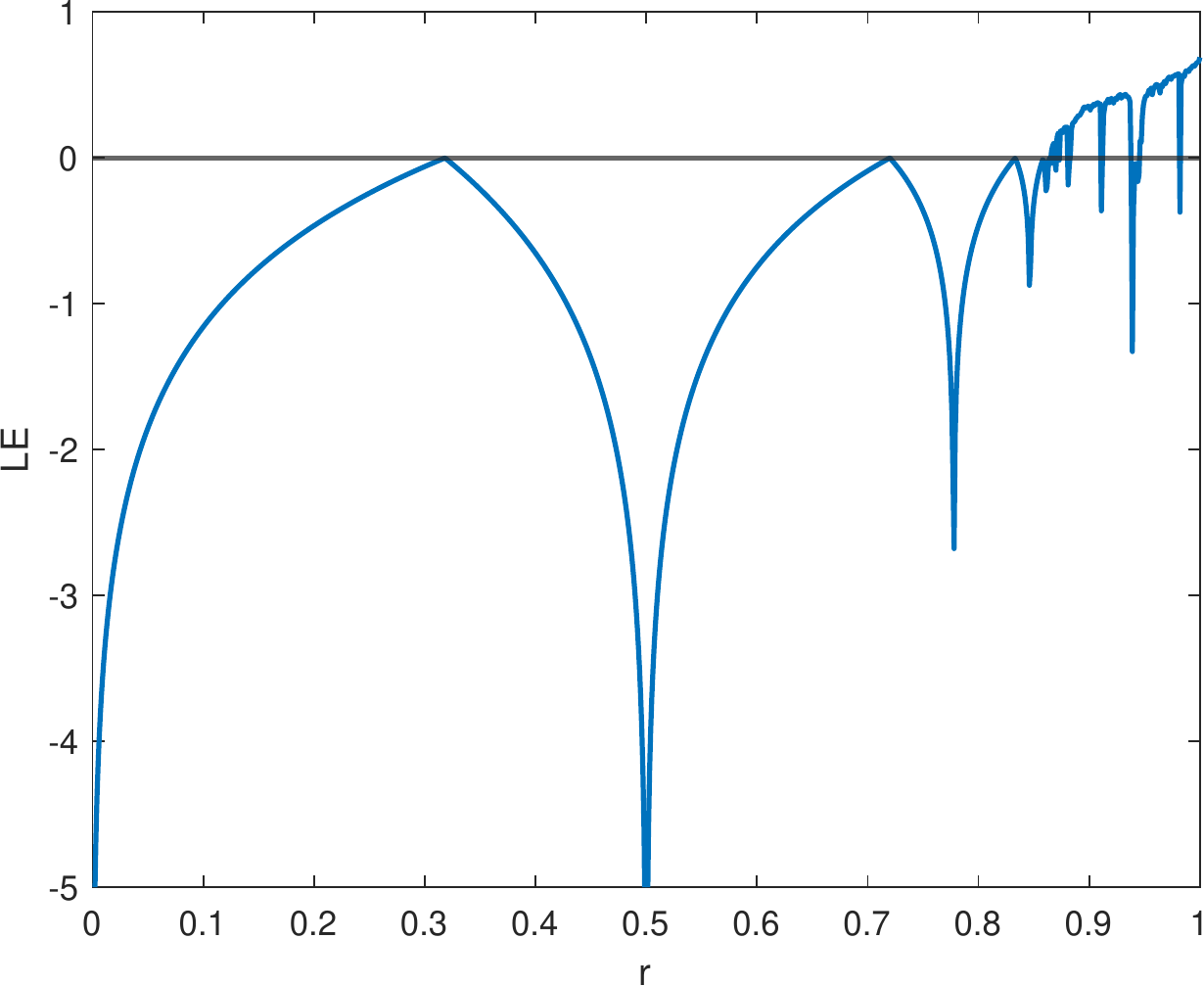}}
	\caption{Lyapunov exponents of three seed maps: (a) Logistic map; (b) Tent map; (c) Sine map.}
	\label{fig:LE_seed_maps}
\end{figure}

\begin{figure}[htbp]
	\centering
	\subfigure[]{
		\includegraphics[width=0.3\textwidth]{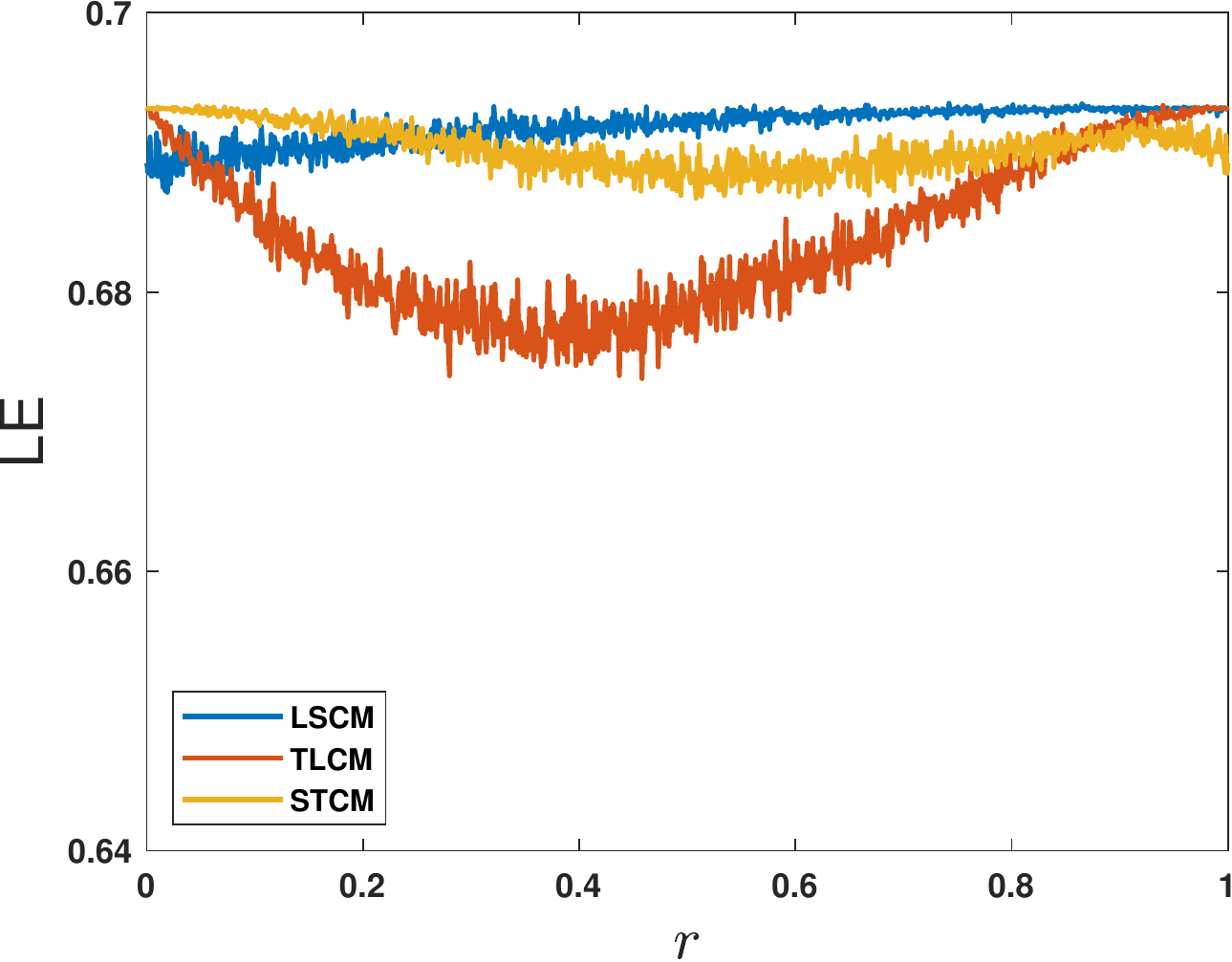}}
	\subfigure[]{
		\includegraphics[width=0.3\textwidth]{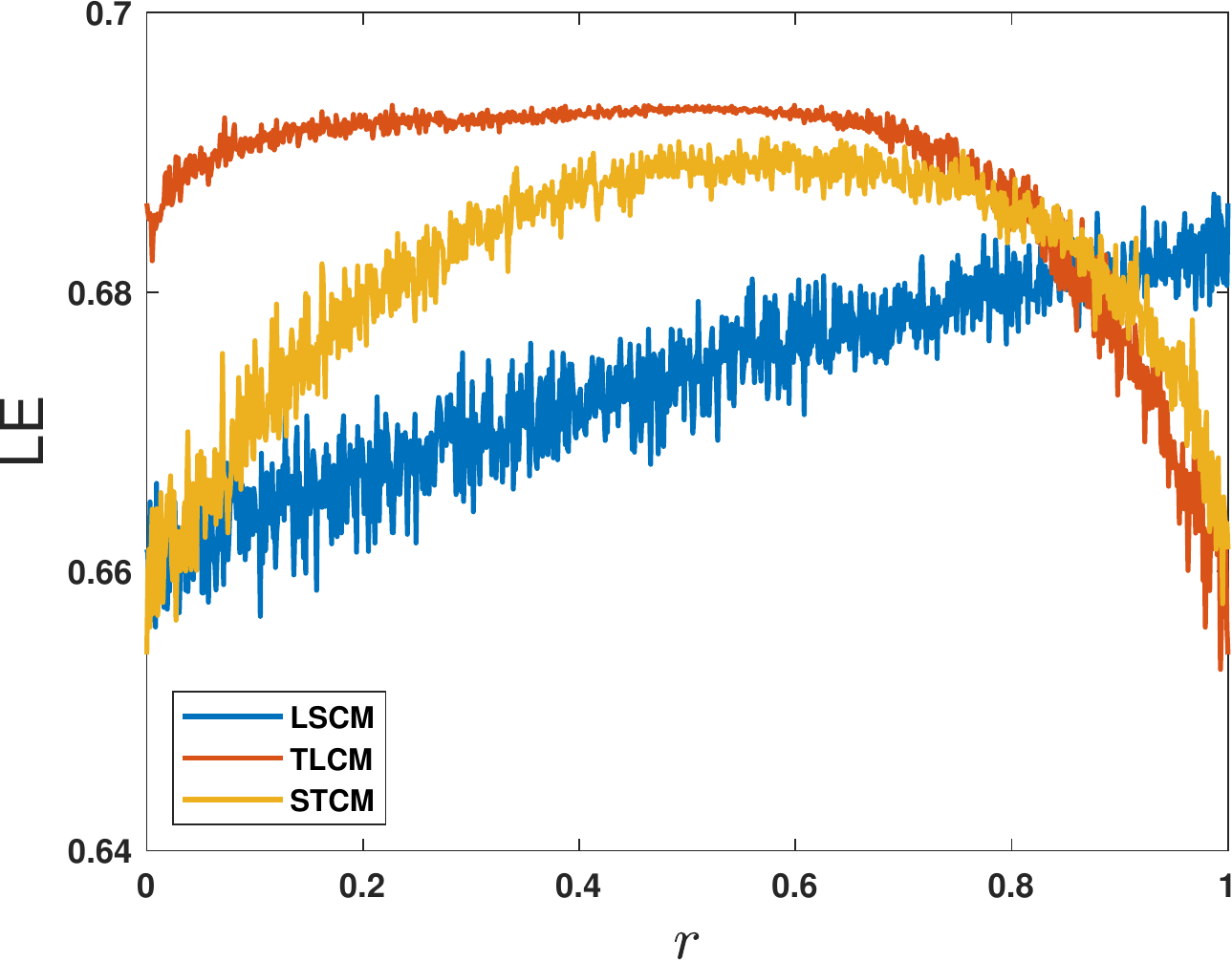}}
	\subfigure[]{
		\includegraphics[width=0.3\textwidth]{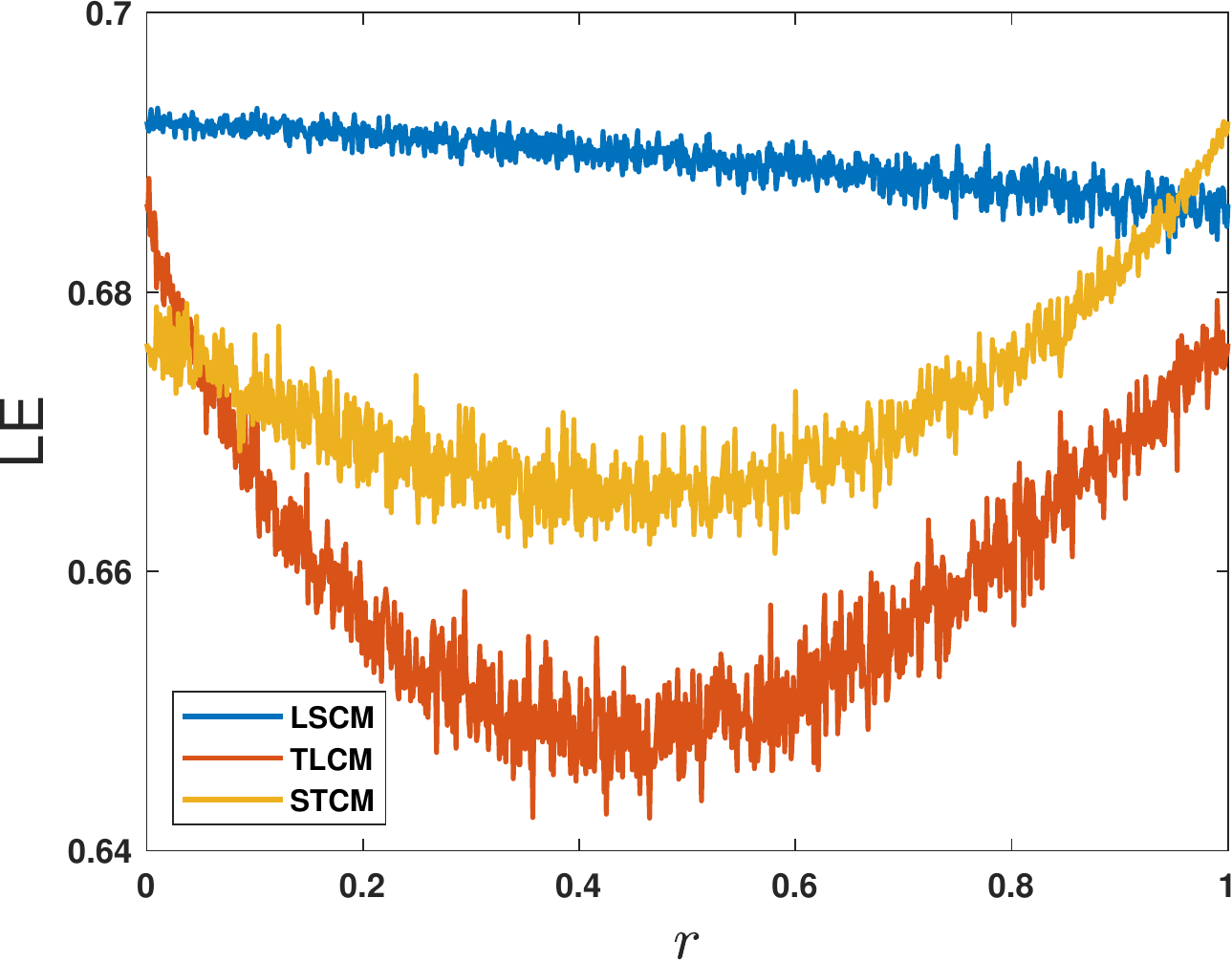}}
	\subfigure[]{
		\includegraphics[width=0.3\textwidth]{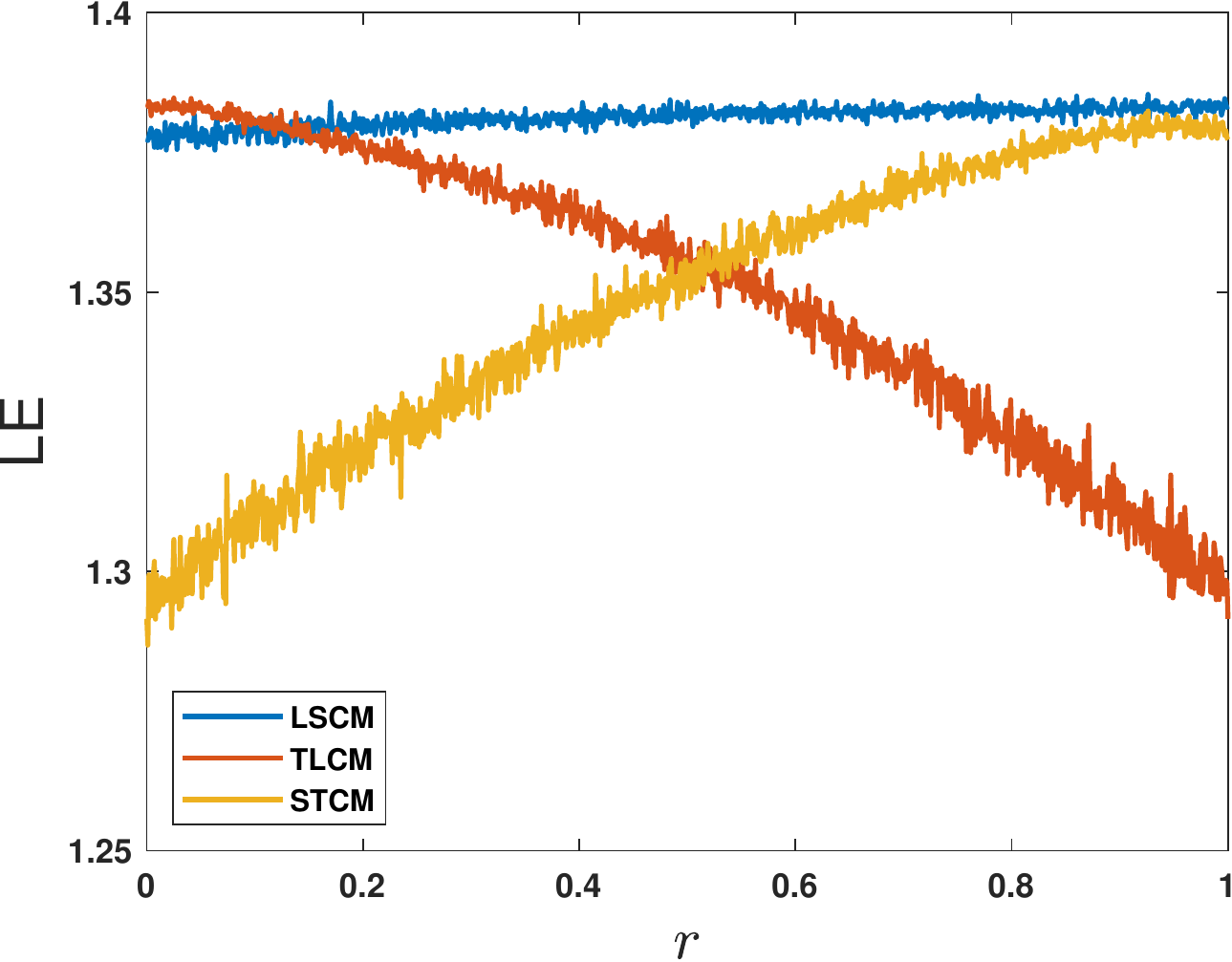}}
	\subfigure[]{
		\includegraphics[width=0.3\textwidth]{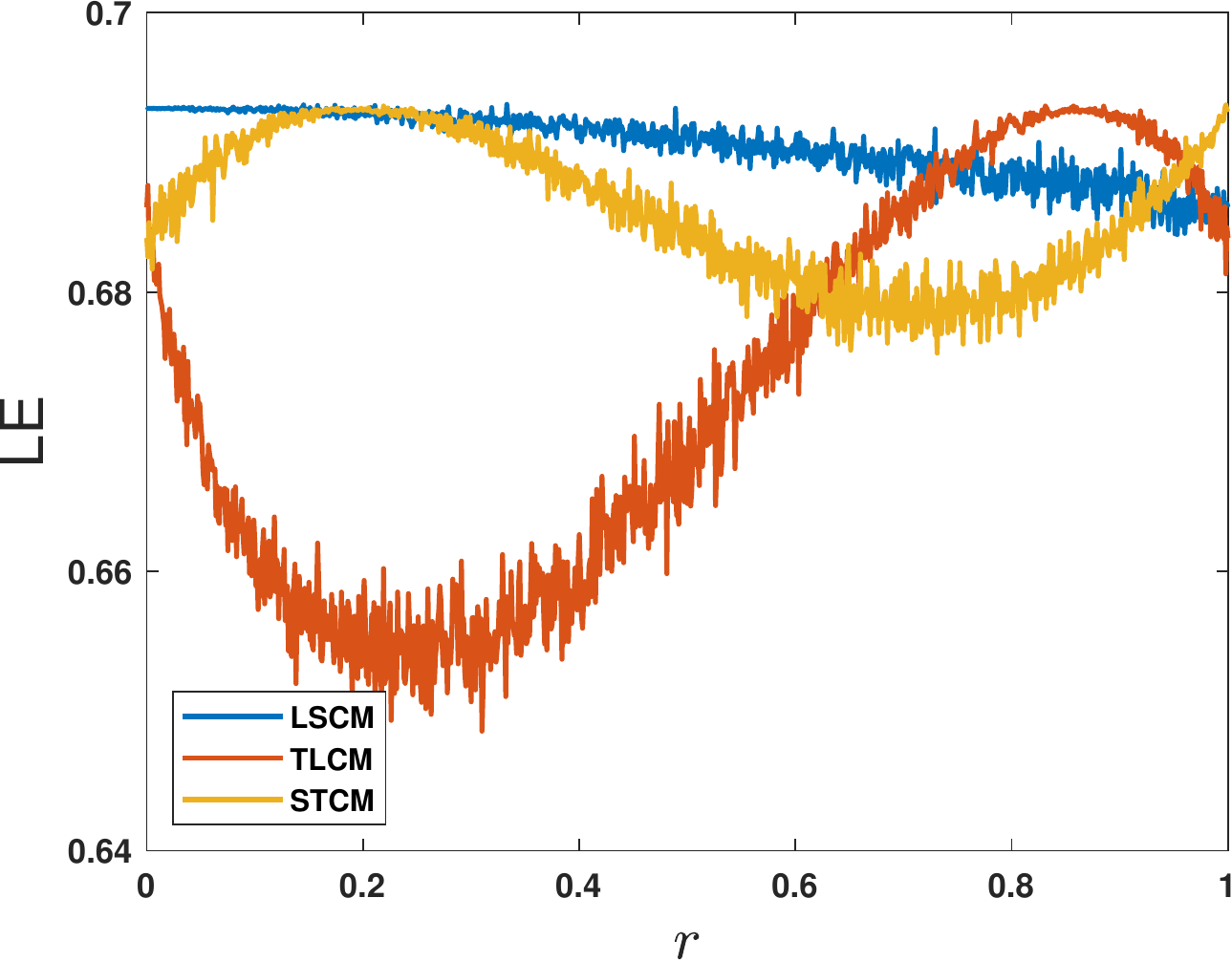}}
	\caption{Lyapunov exponents of new chaotic maps based on five BEFs: (a) Power function; (b) Exponential function; (c) Logarithmic function; (d) Trigonometric function; (e) Inverse trigonometric function.}
	\label{fig:LE_BEF}
\end{figure}

The LEs of new chaotic maps based on Eq.(\ref{eq:fx=2x})-(\ref{eq:fx=8x}) are shown in Figure \ref{fig:LE_COMP}. The results of Zhou's method \cite{zhouNew1DChaotic2014} and Hua's method \cite{huaCosinetransformbasedChaoticSystem2019} using the same seed maps are illustrated as comparison. As we can see, all the generated type-I, type-II, and type-III coupling maps are chaotic for all $r \in [0,1]$ and their LEs are lager than Zhou's method. Also, the type-I with LSCM, TLCM, and STCM have LEs close to Hua's method while the LEs of type-II and type-III coupling maps are far larger than Hua's method. Because larger LE indicates more complex chaotic behavior, our UT-CCS is of the best performance. Most importantly, Figure \ref{fig:LE_COMP} also tells us that all the type-II chaotic maps have larger LEs than type-I and all the type-III chaotic maps have larger LEs than type-II, which demonstrates the correctness of Theorem \ref{th:1}. Thus, we can even get new chaotic maps with further larger LEs by constructing proper $f(x)$.

\begin{figure}[htbp]
	\centering
	\subfigure[]{
		\label{fig:LE_LSCM}
		\includegraphics[width=0.3\textwidth]{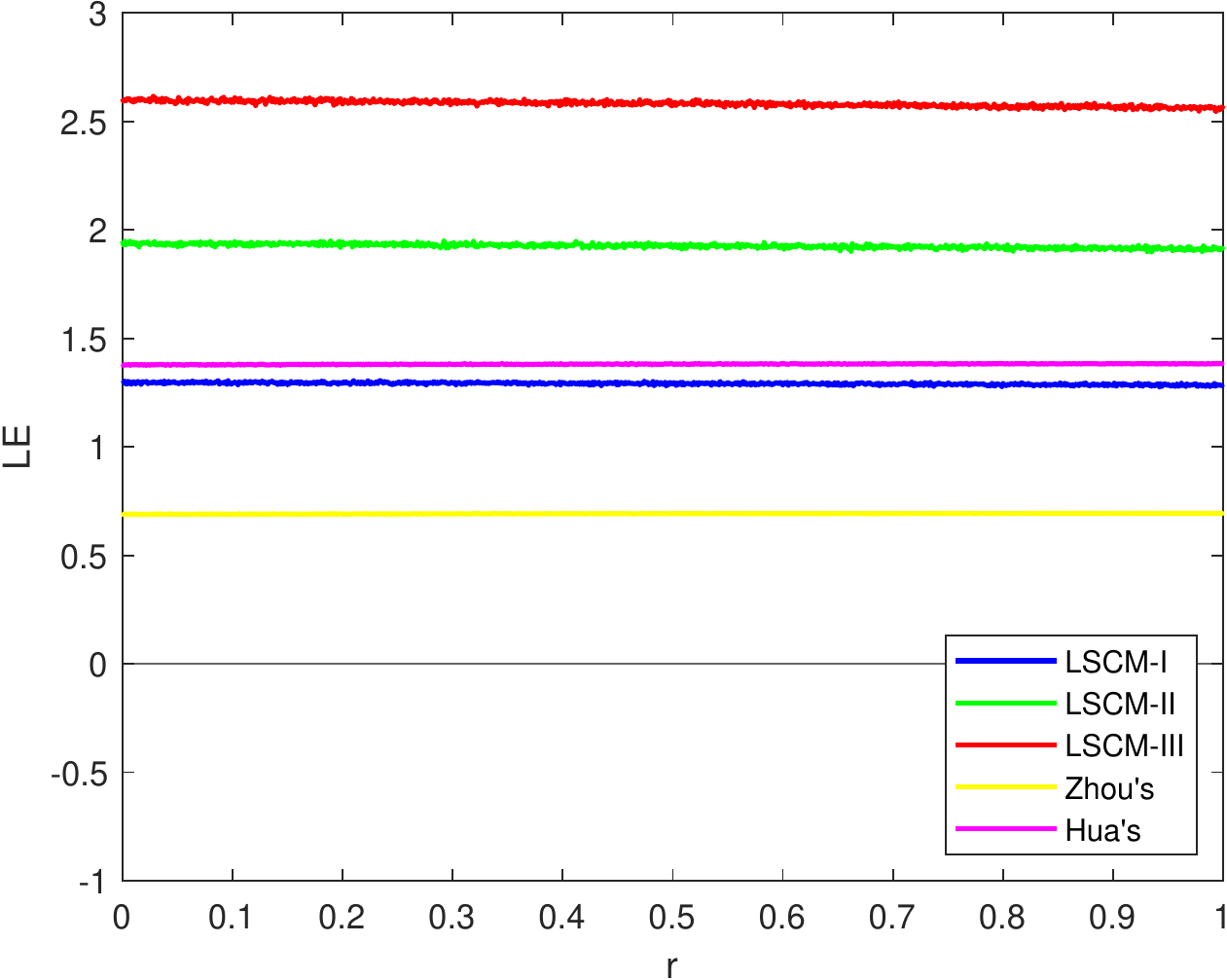}}
	\subfigure[]{
		\label{fig:LE_TLCM}
		\includegraphics[width=0.3\textwidth]{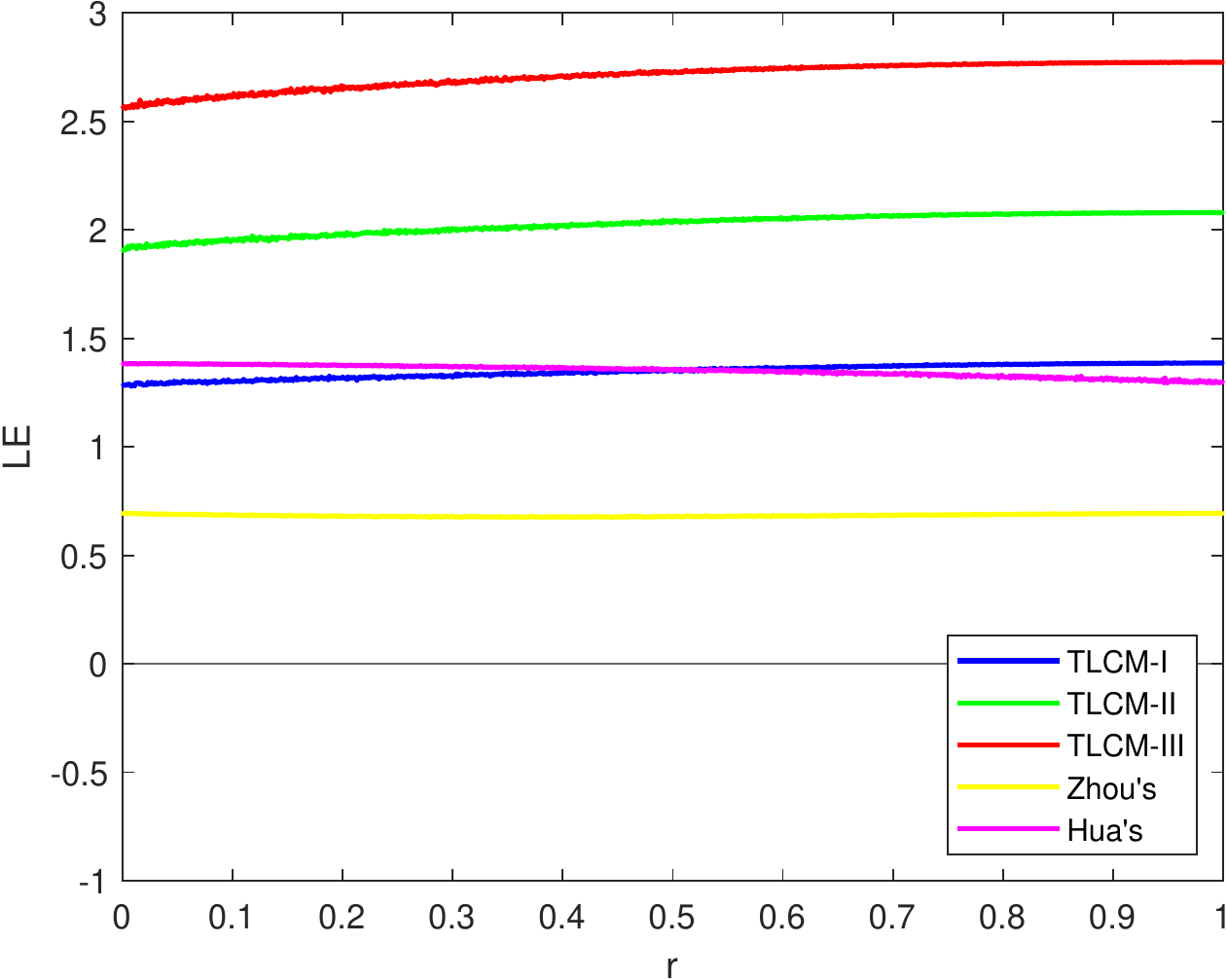}}
	\subfigure[]{
		\label{fig:LE_STCM}
		\includegraphics[width=0.3\textwidth]{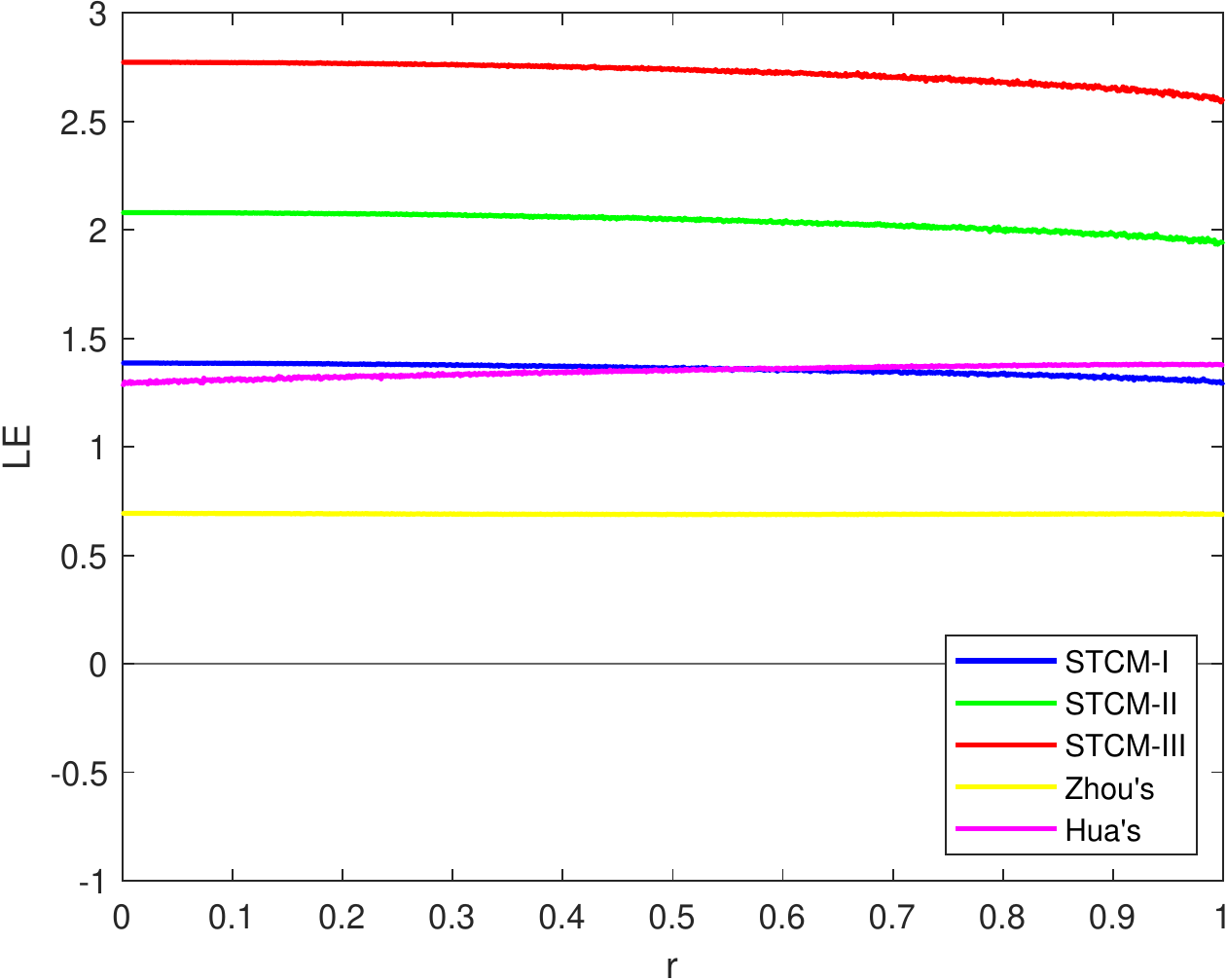}}
	\caption{Comparison of LEs between different generated chaotic maps based on same seed maps: (a) LSCM; (b) TLCM; (c) STCM.}
	\label{fig:LE_COMP}
\end{figure}

\subsection{Bifurcation diagram}
Bifurcation diagram presents the distribution of points in different trajectories of a dynamic system visually with the change of parameter. According to the ergodicity of visited points of an orbit, we will know whether the system is chaotic. The BDs of type-III coupling maps as well as three seed maps are shown in Figure \ref{fig:BD_COMP}, where the initial value is set as $x_0=0.1$. It follows that the seed maps are non-chaotic for most $r \in [0,1]$, since most orbits are fixed or periodic. However, the points of all type-III coupling maps fill the whole phase plane, suggesting the strong chaotic behaviors for all $r \in [0,1]$. 

\begin{figure}[htbp]
	\centering
	\subfigure[]{
		\label{fig:BD_Logistic}
		\includegraphics[width=0.3\textwidth]{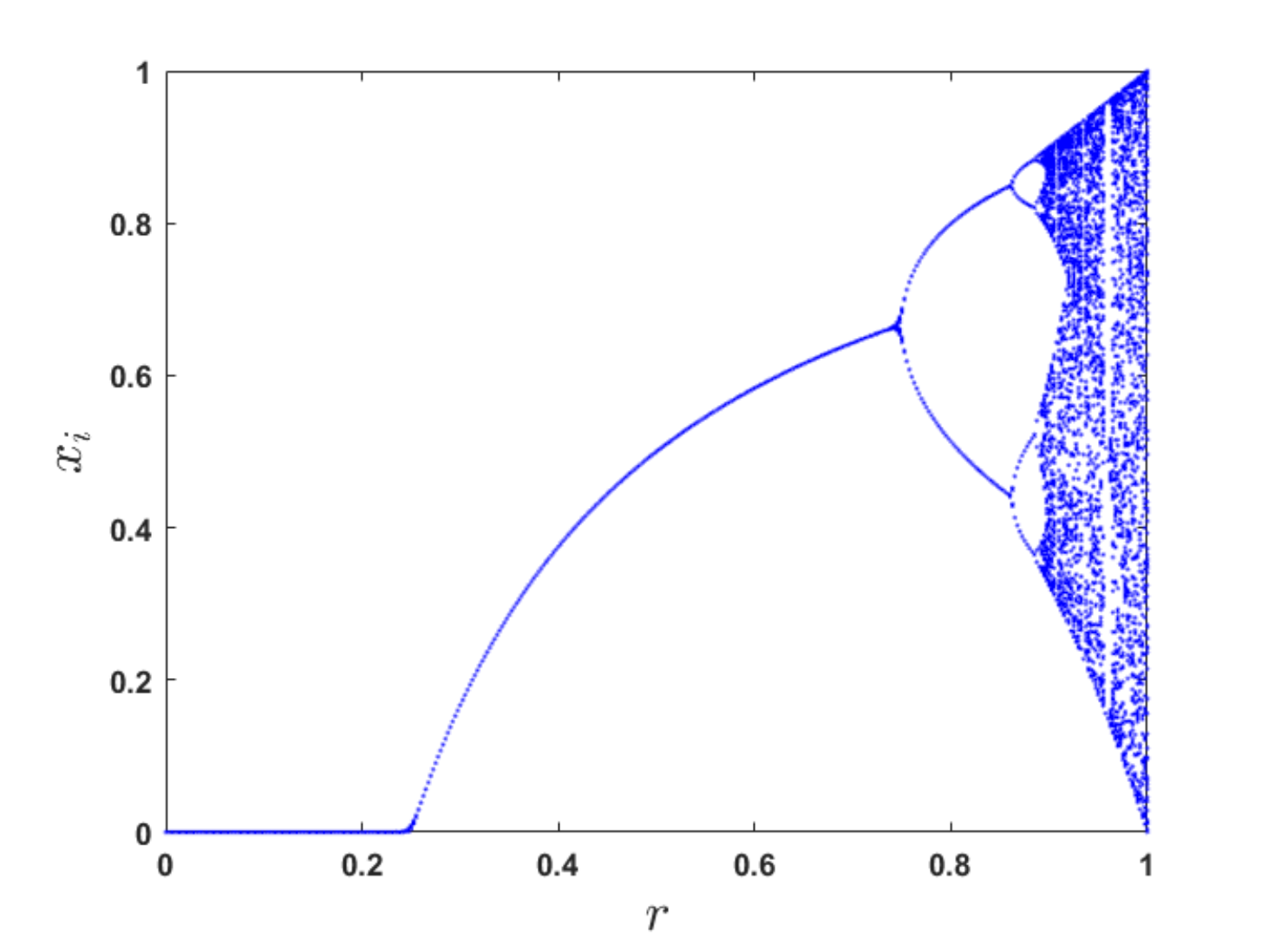}}
	\subfigure[]{
		\label{fig:BD_Tent}
		\includegraphics[width=0.3\textwidth]{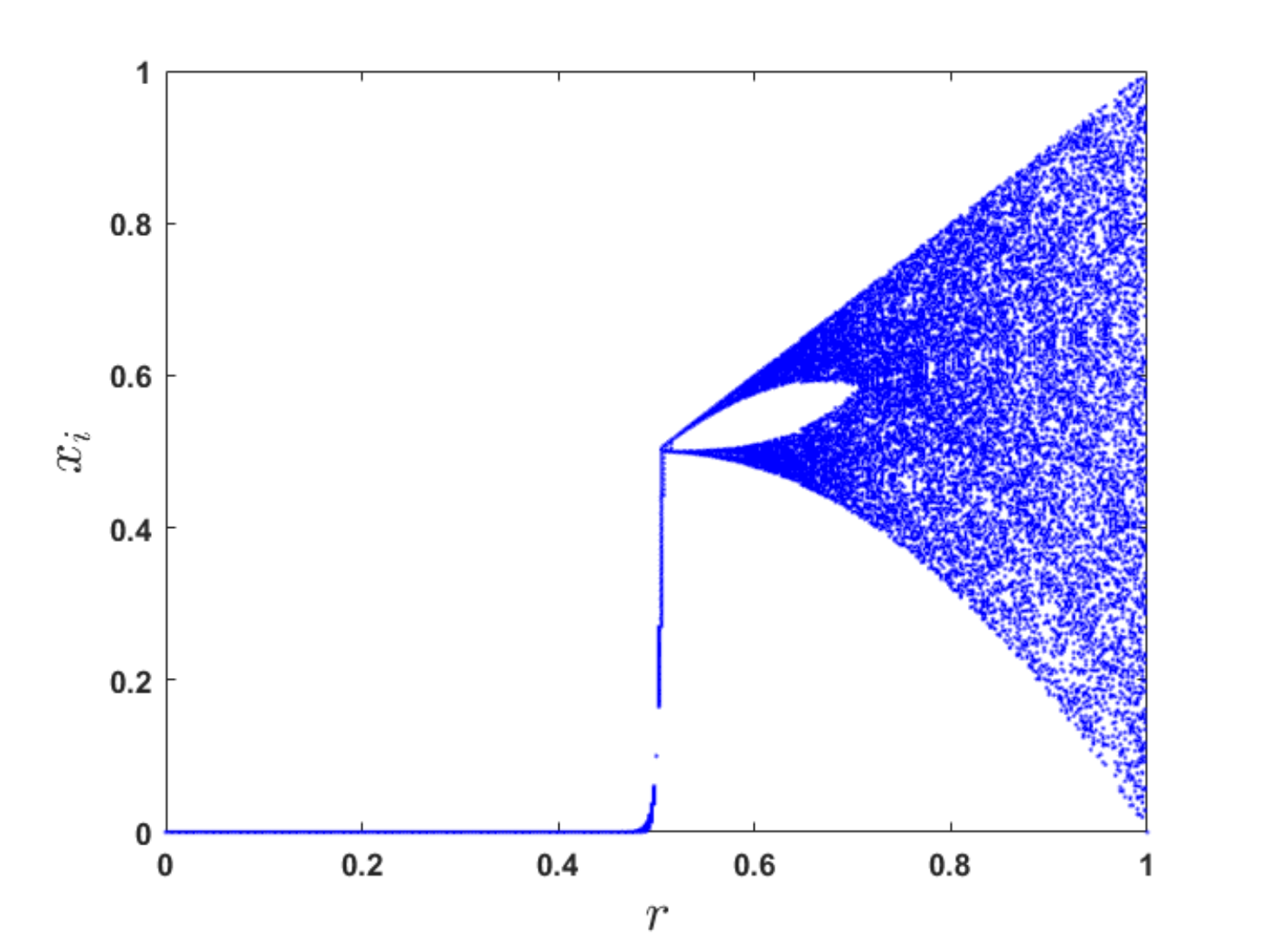}}
	\subfigure[]{
		\label{fig:BD_Sine}
		\includegraphics[width=0.3\textwidth]{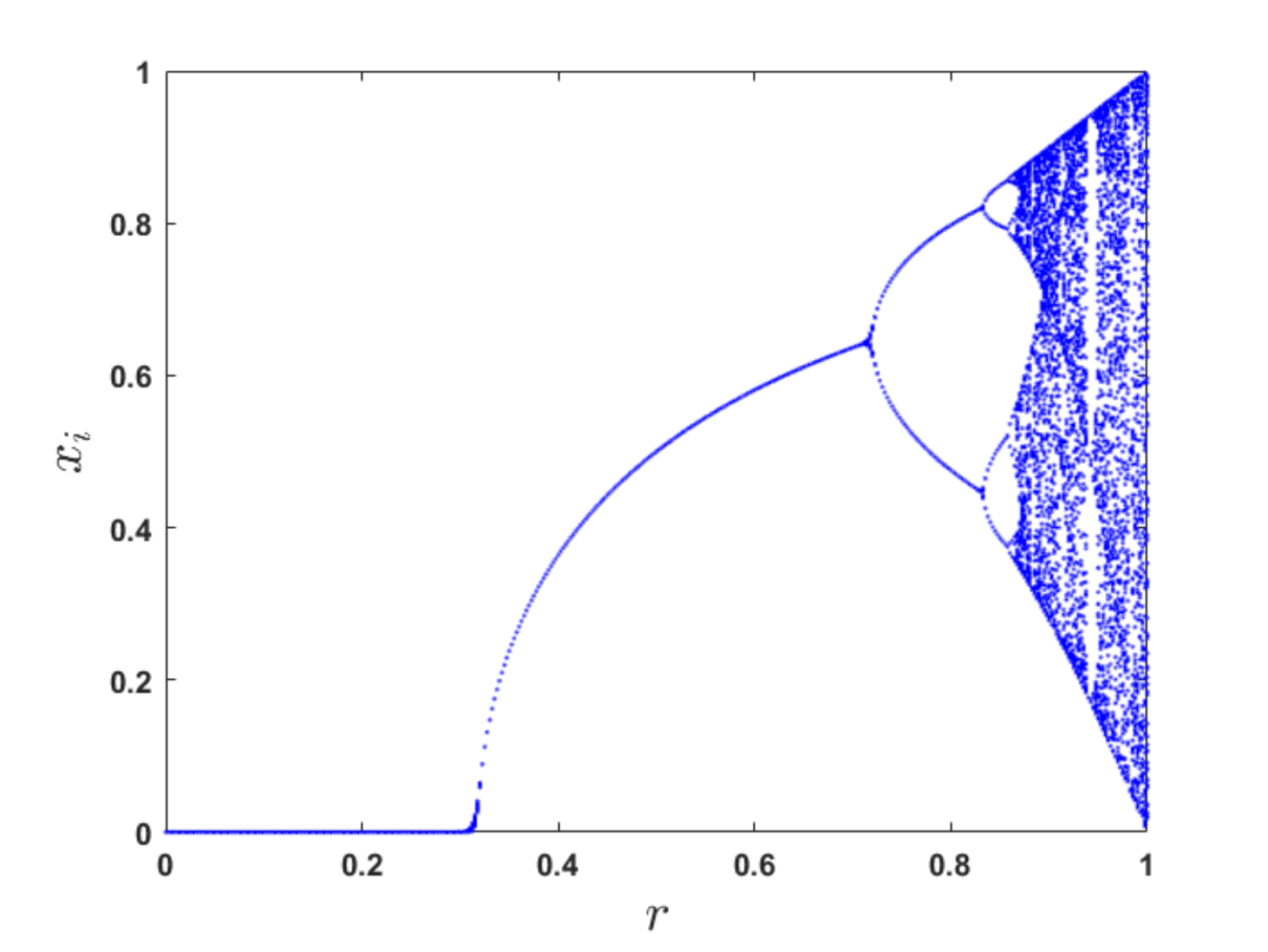}}
	\subfigure[]{
		\label{fig:BD_LSCM}
		\includegraphics[width=0.3\textwidth]{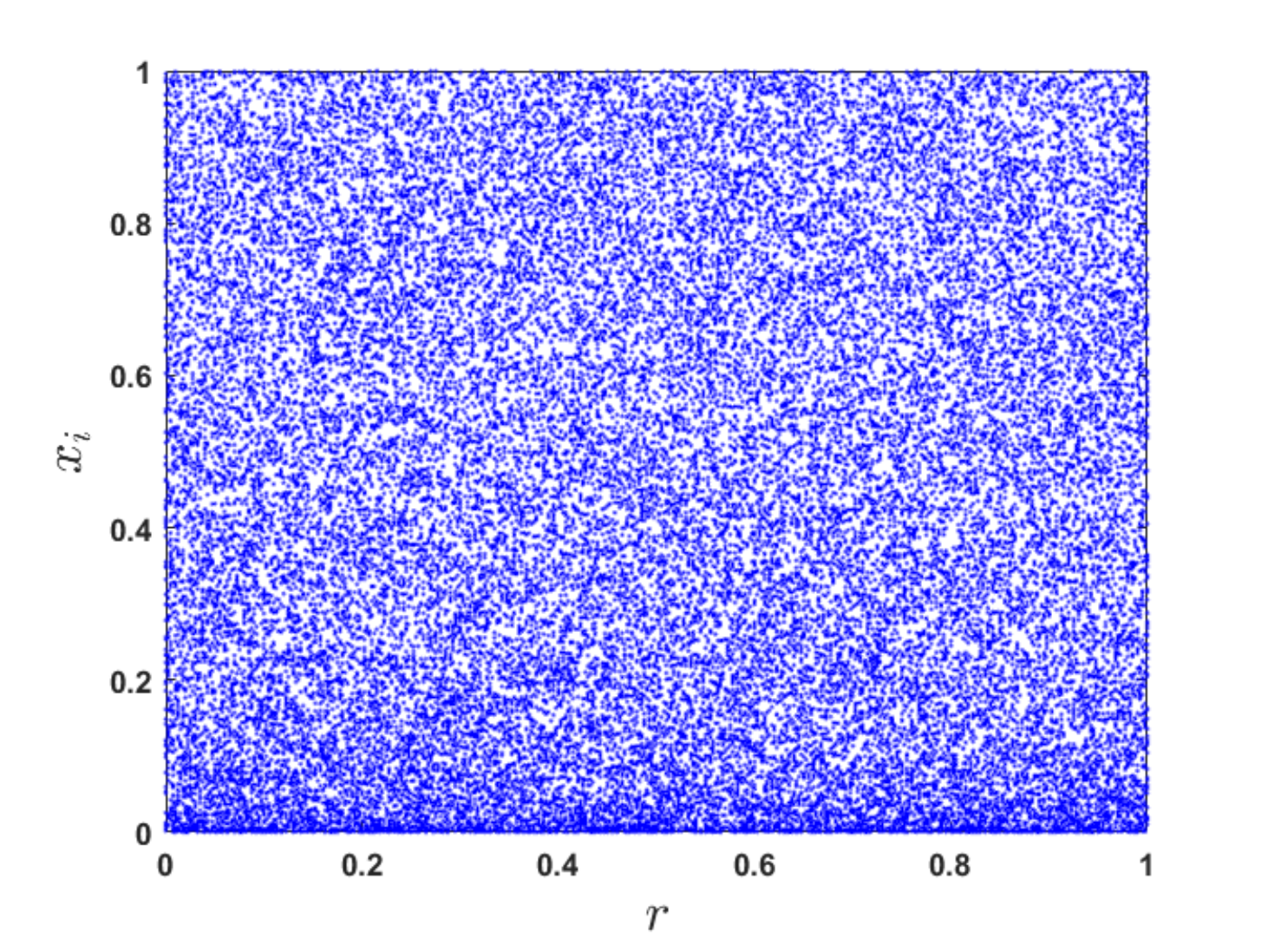}}
	\subfigure[]{
		\label{fig:BD_TLCM}
		\includegraphics[width=0.3\textwidth]{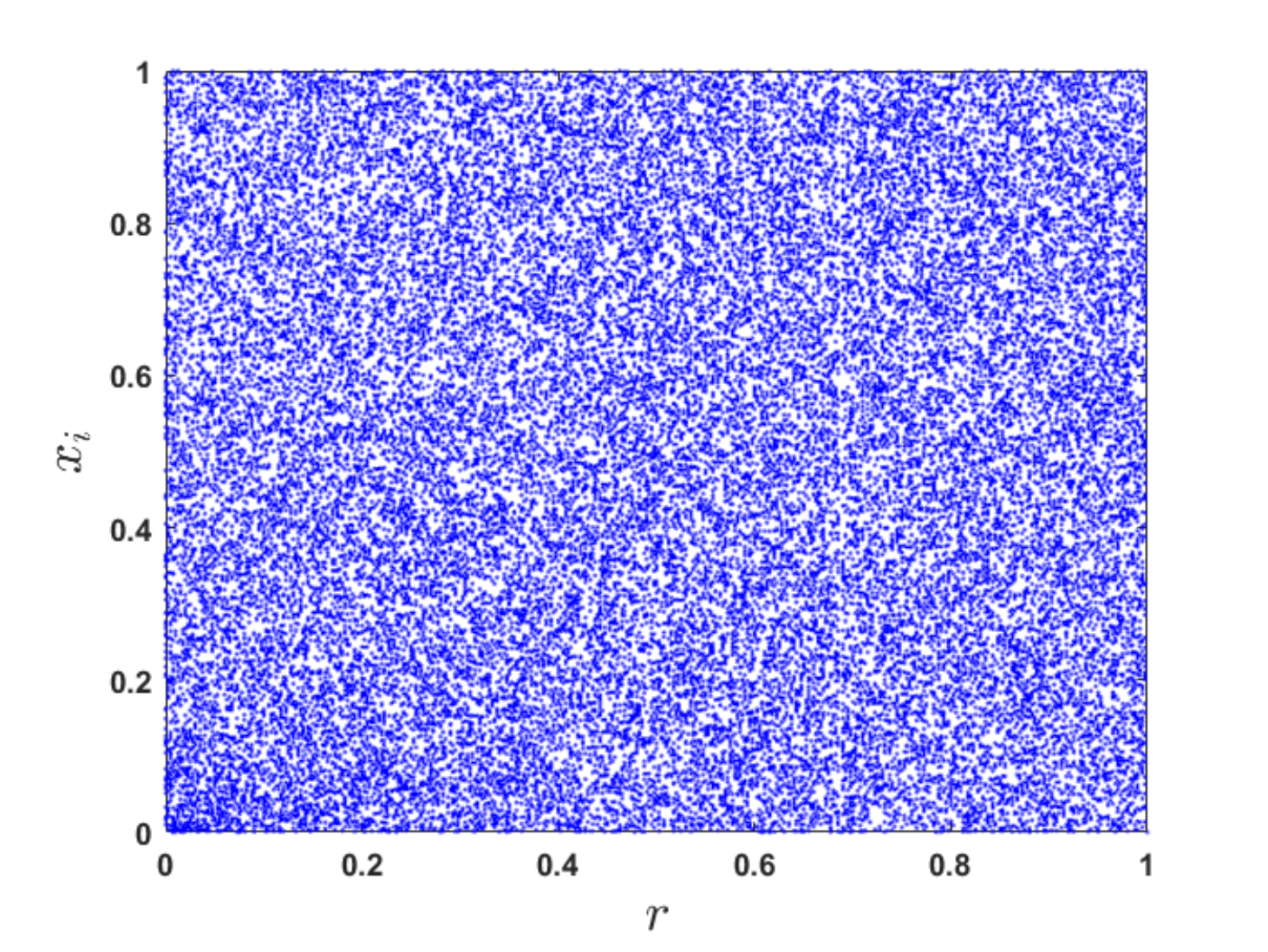}}
	\subfigure[]{
		\label{fig:BD_STCM}
		\includegraphics[width=0.3\textwidth]{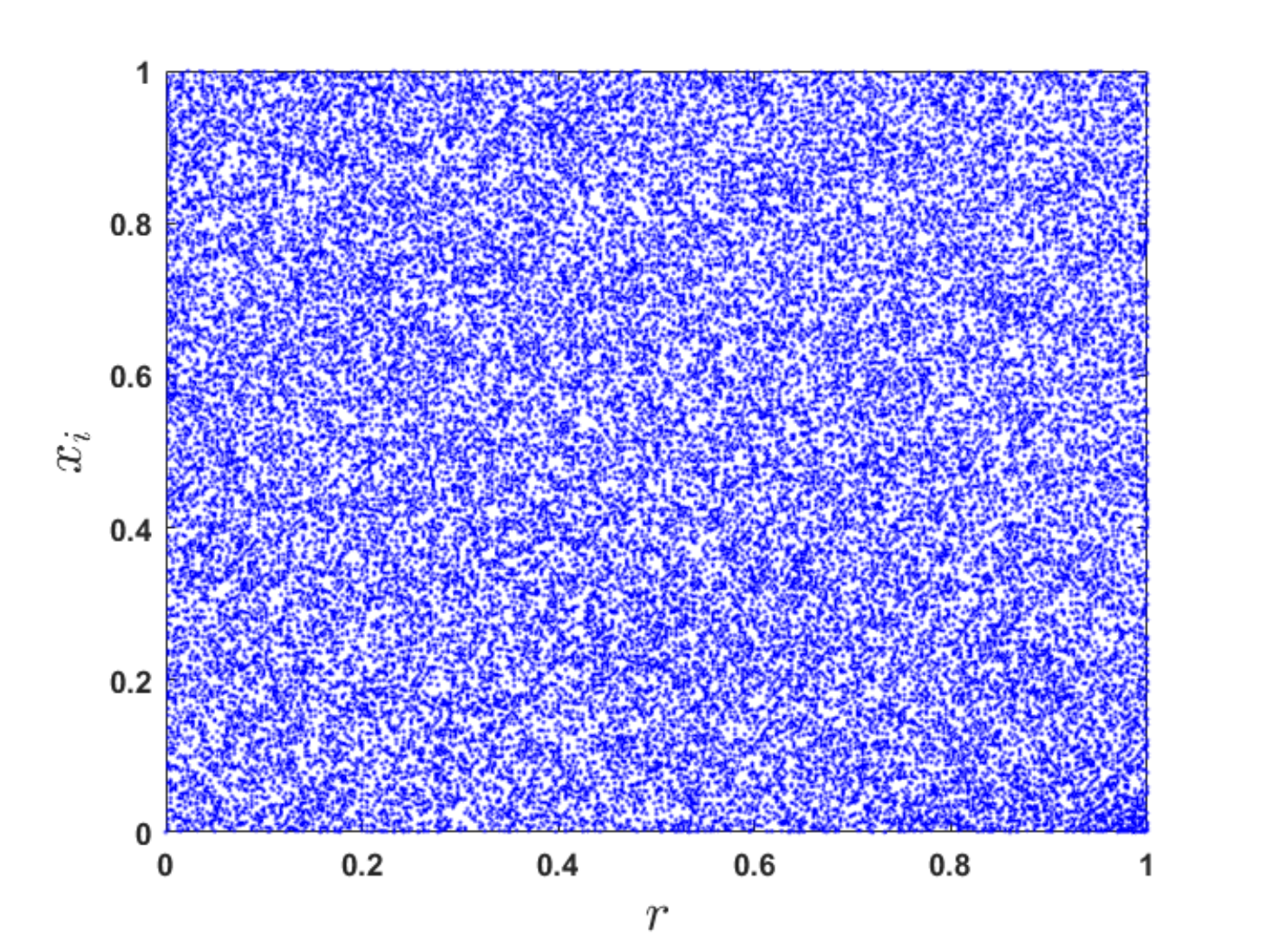}}
	\caption{Bifurcation diagrams of different chaotic maps: (a) Logistic map; (b) Tent map; (c) Sine map; (d) LSCM-III; (e) TLCM-III; (f) STCM-III.}
	\label{fig:BD_COMP}
\end{figure}

\subsection{Cobweb diagram}
Cobweb diagram is a plotting scheme developed to present the behavior of 1D recursions intuitively. Suppose a recursion is defined by $x_{n+1}=F(x_n)$, then the cobweb diagram can be gained by following steps:
\begin{enumerate}[Step 1.]
	\setlength{\itemindent}{1.5em}
	\item Plot the diagonal line $y=x$ and the graph of $y=F(x)$.
	\item Plot the vertical line segment from $(x_i,x_i)$ to $(x_i,x_{i+1})$.
	\item Plot the horizontal line segment from $(x_i,x_{i+1})$ to $(x_{i+1},x_{i+1})$.
	\item Repeat Step 2 and 3 with iteration.
\end{enumerate}
Generally, there are three results for any recursion after long-run iteration: converging to a fixed point, being cyclic, and reaching chaotic. Correspondingly, the cobweb plot will show inward square spiral, fixed track, and largely filled plane, respectively. Figure \ref{fig:CP_COMP} has illustrated the Cobweb plots of three type-III coupling maps and their seed maps with parameter $r=0.8$. It can be seen easily that all the generated coupling maps are chaotic while Logistic map and Sine map degrade into cyclic. The Tent map is chaotic when $r=0.8$ but it can't traverse $[0,1]$ with iteration.
\begin{figure}[htbp]
	\centering
	\subfigure[]{
		\label{fig:CP_Logistic}
		\includegraphics[width=0.3\textwidth]{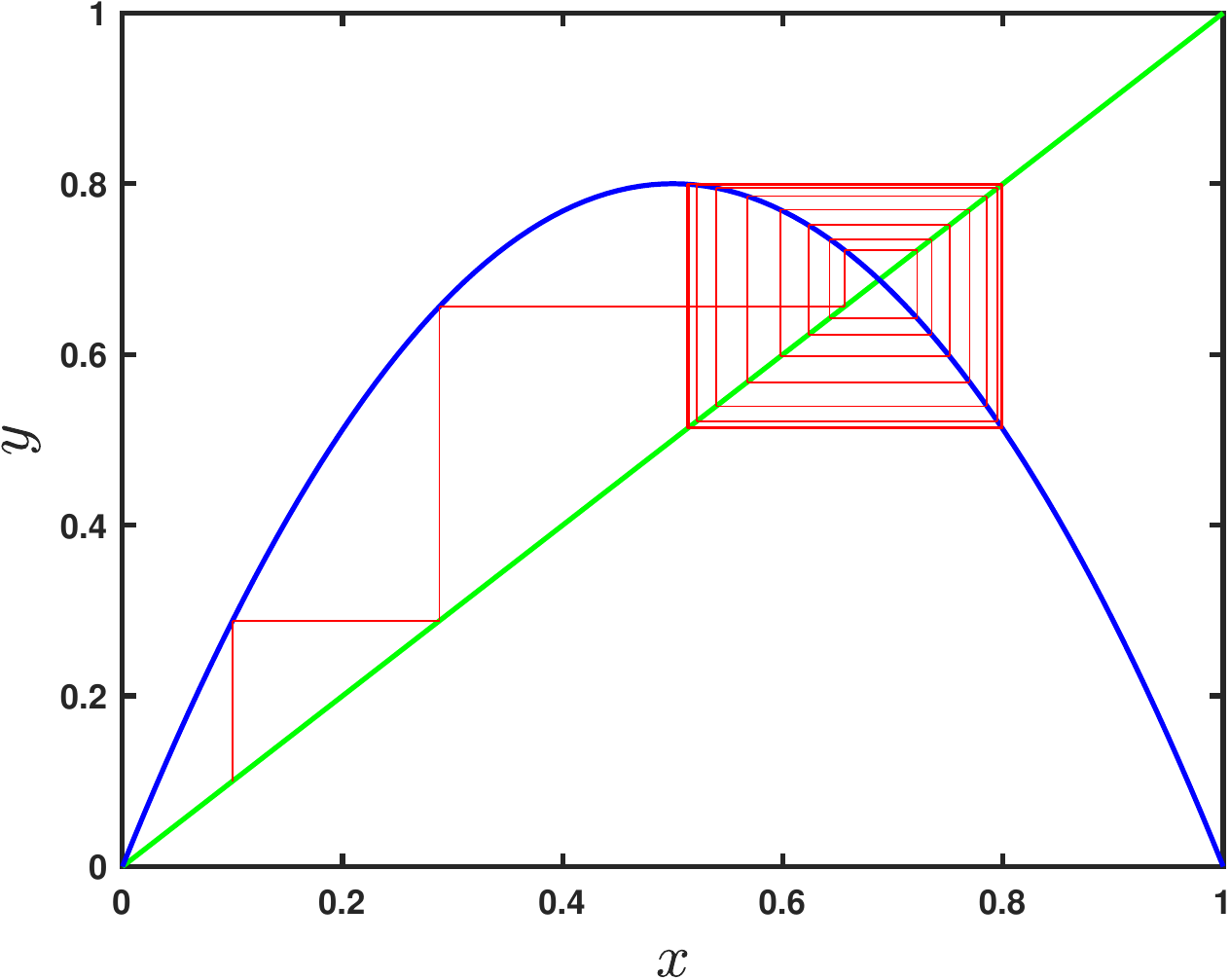}}
	\subfigure[]{
		\label{fig:CP_Tent}
		\includegraphics[width=0.3\textwidth]{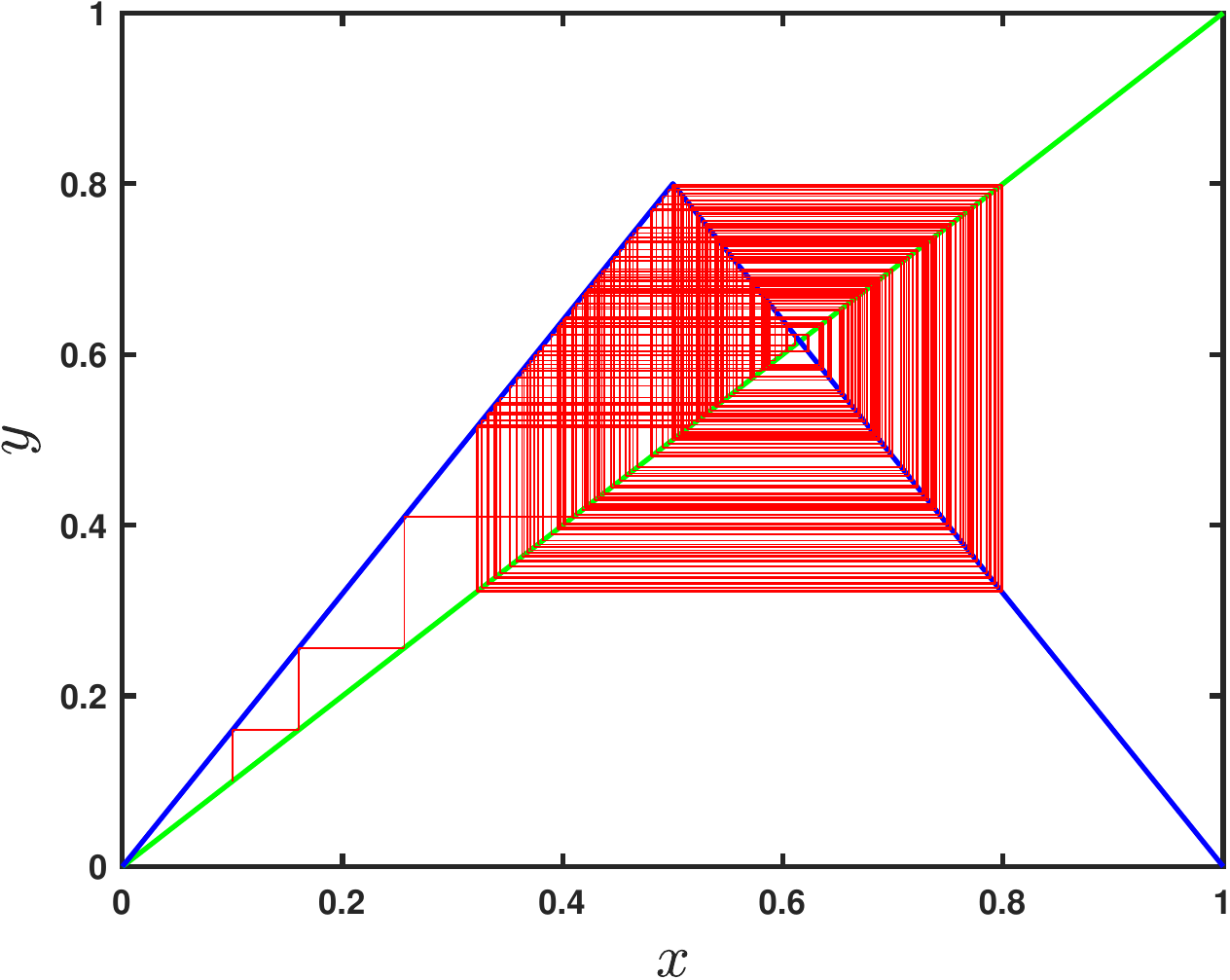}}
	\subfigure[]{
		\label{fig:CP_Sine}
		\includegraphics[width=0.3\textwidth]{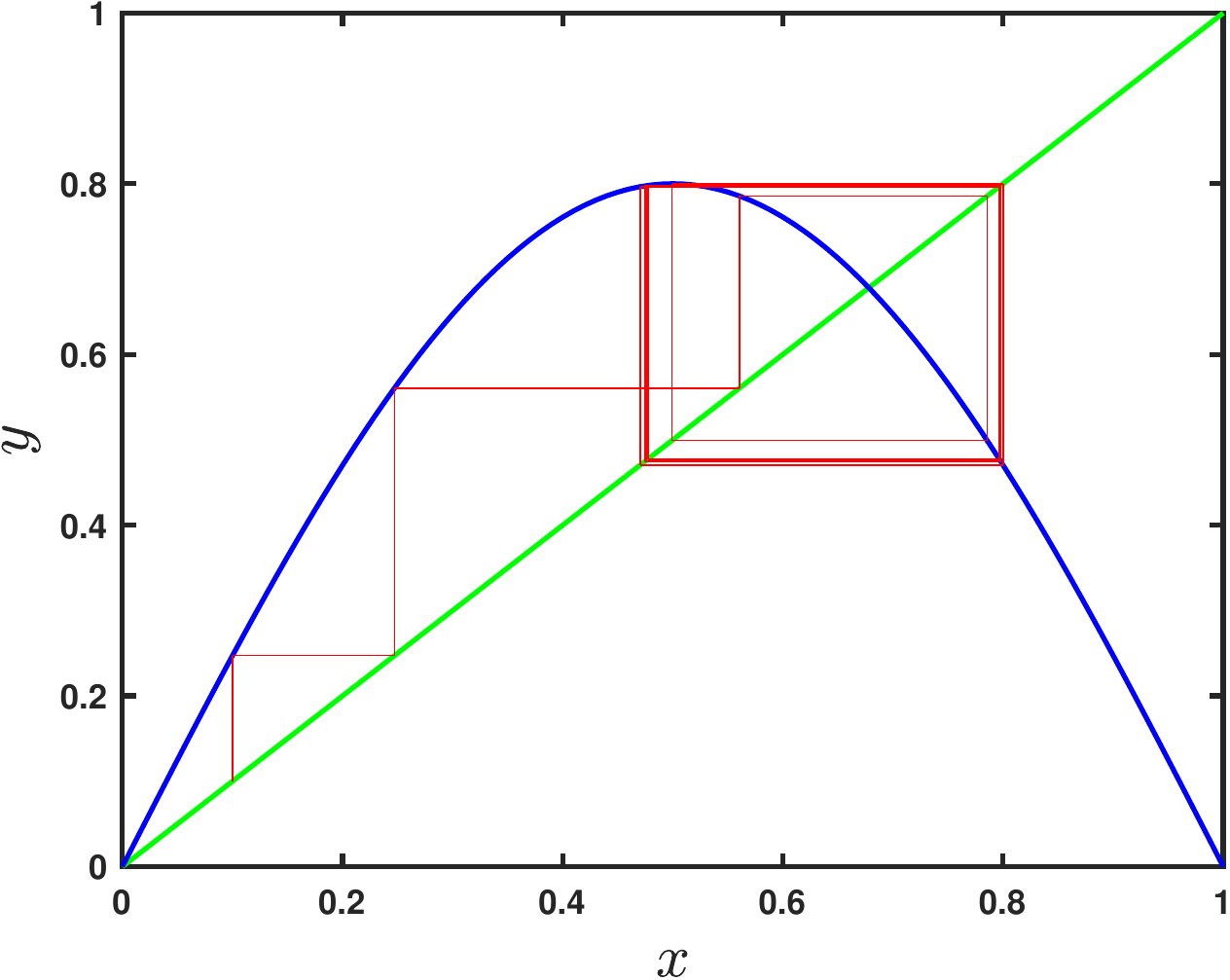}}
	\subfigure[]{
		\label{fig:CP_LSCM}
		\includegraphics[width=0.3\textwidth]{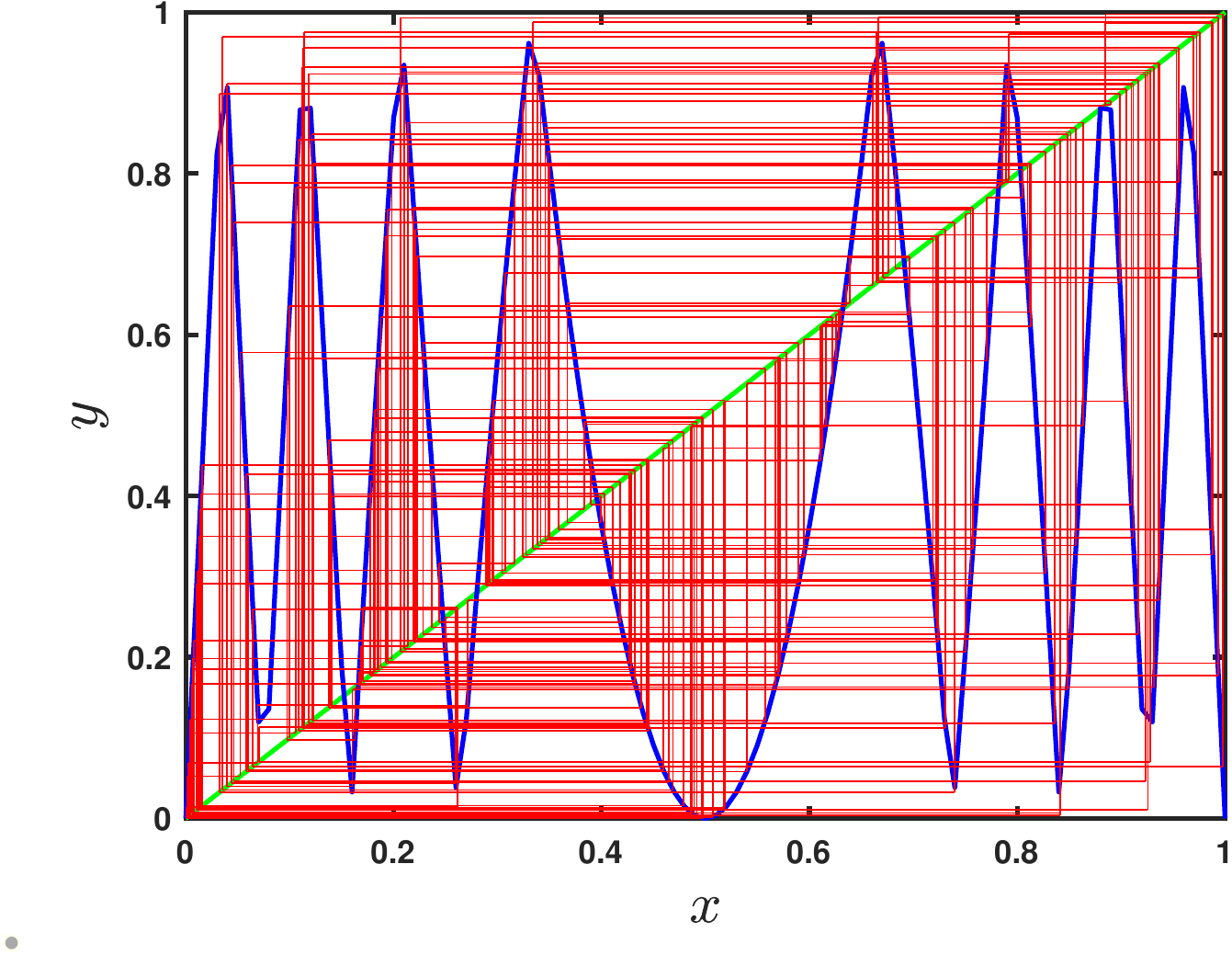}}
	\subfigure[]{
		\label{fig:CP_TLCM}
		\includegraphics[width=0.3\textwidth]{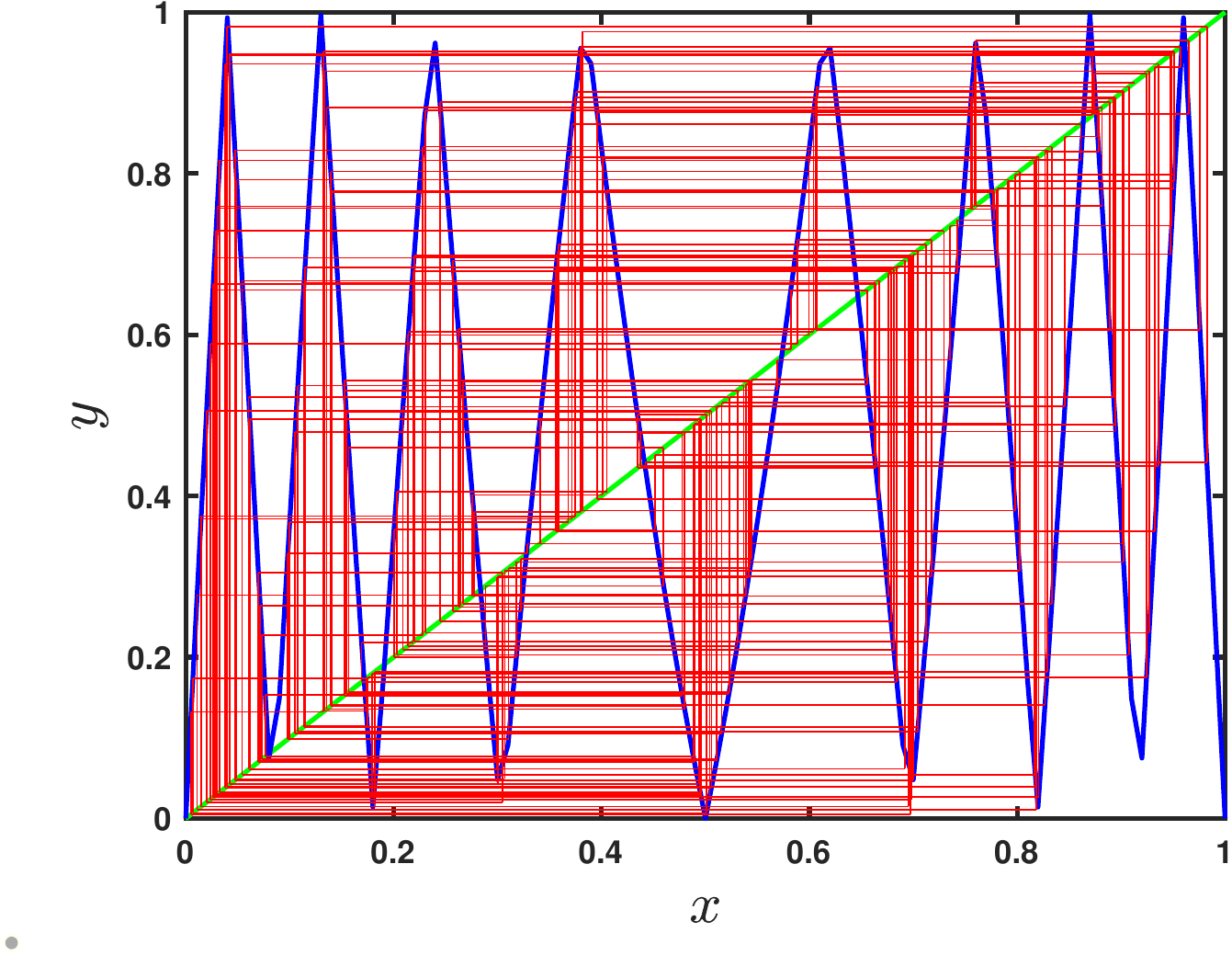}}
	\subfigure[]{
		\label{fig:CP_STCM}
		\includegraphics[width=0.3\textwidth]{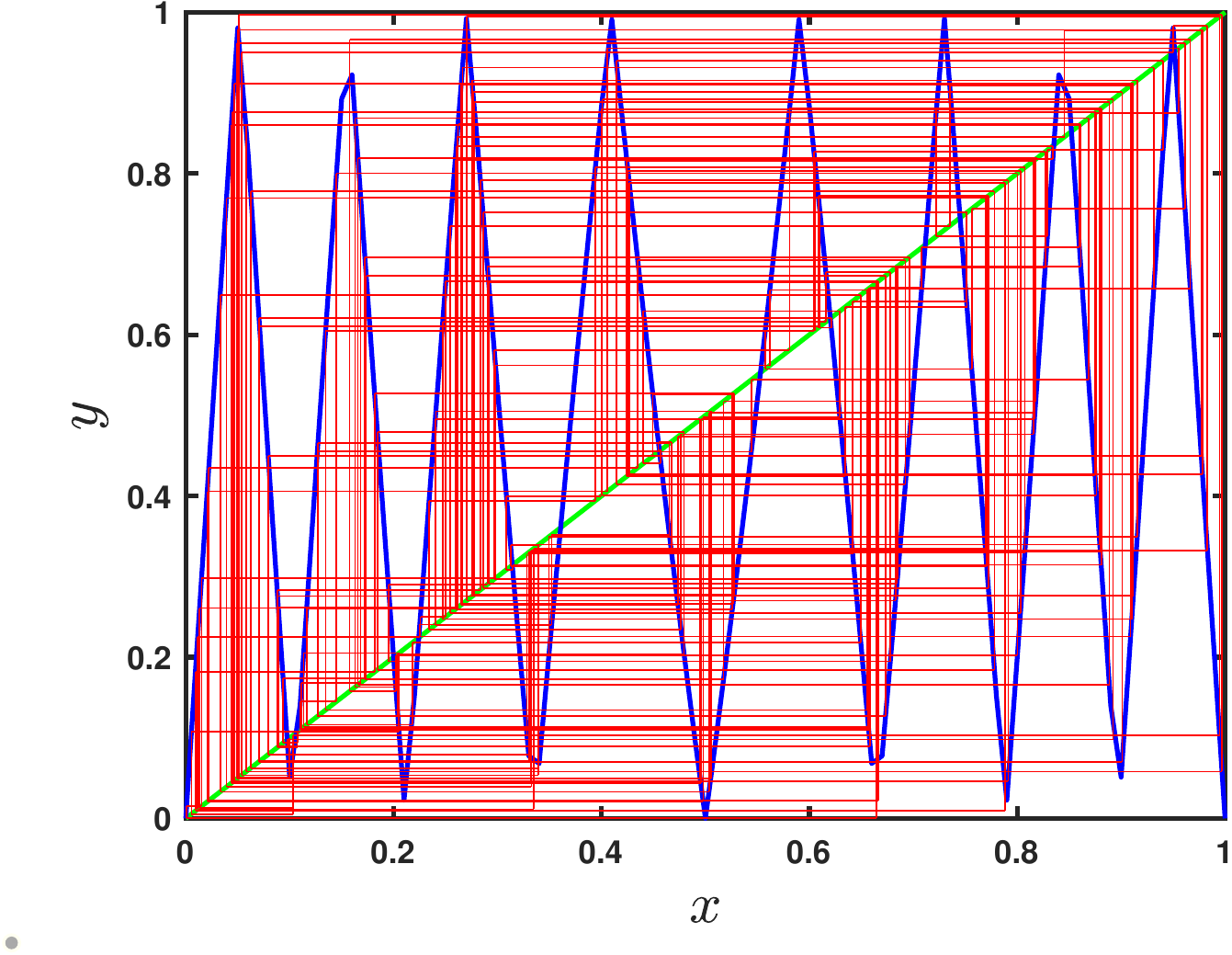}}
	\caption{Cobweb diagrams of different chaotic maps: (a) Logistic map; (b) Tent map; (c) Sine map; (d) LSCM-III; (e) TLCM-III; (f) STCM-III.}
	\label{fig:CP_COMP}
\end{figure}
\begin{figure}[htbp]
	\centering
	\subfigure[LSCM]{
		\label{}
		\includegraphics[width=0.3\textwidth]{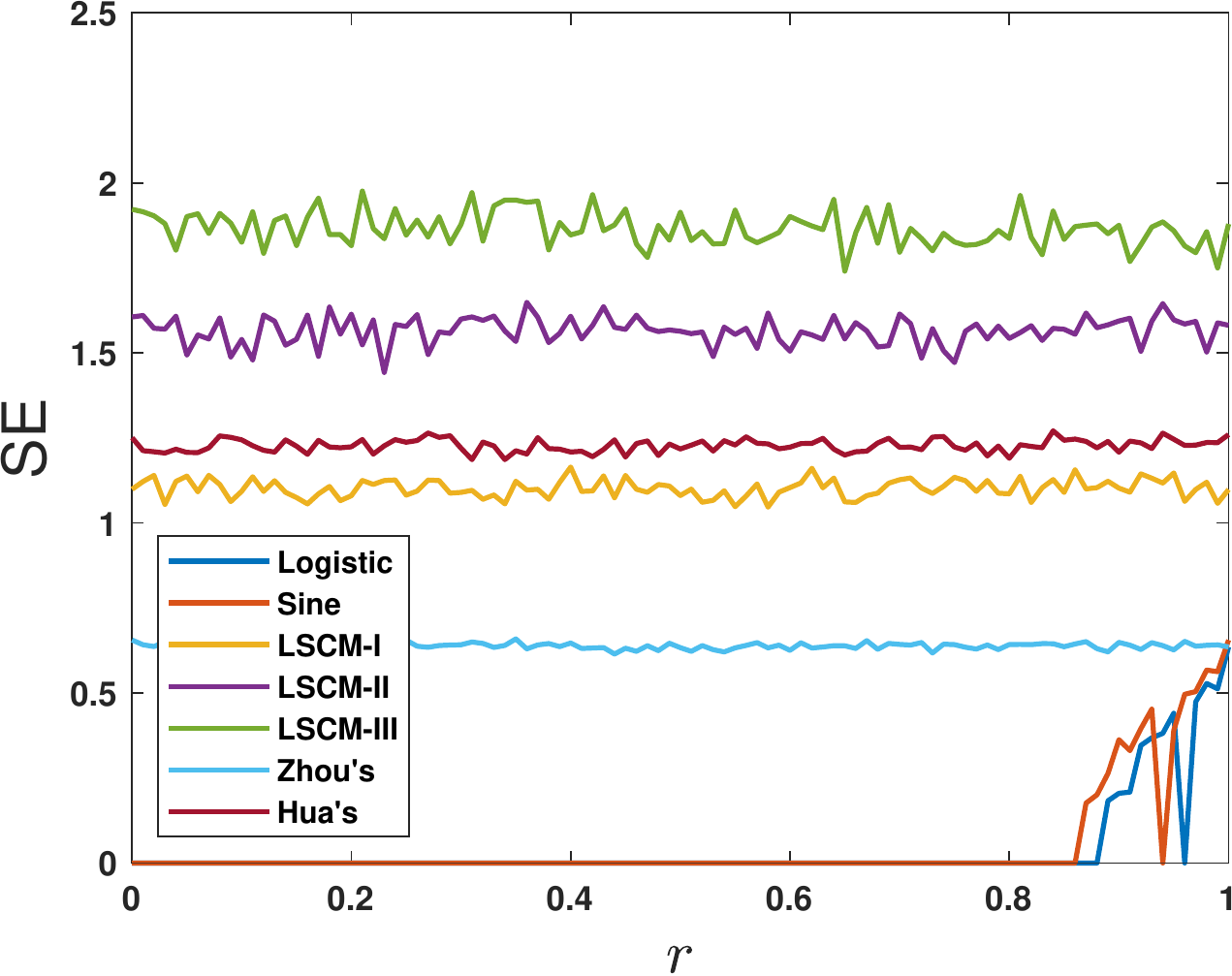}}
	\subfigure[TLCM]{
		\label{}
		\includegraphics[width=0.3\textwidth]{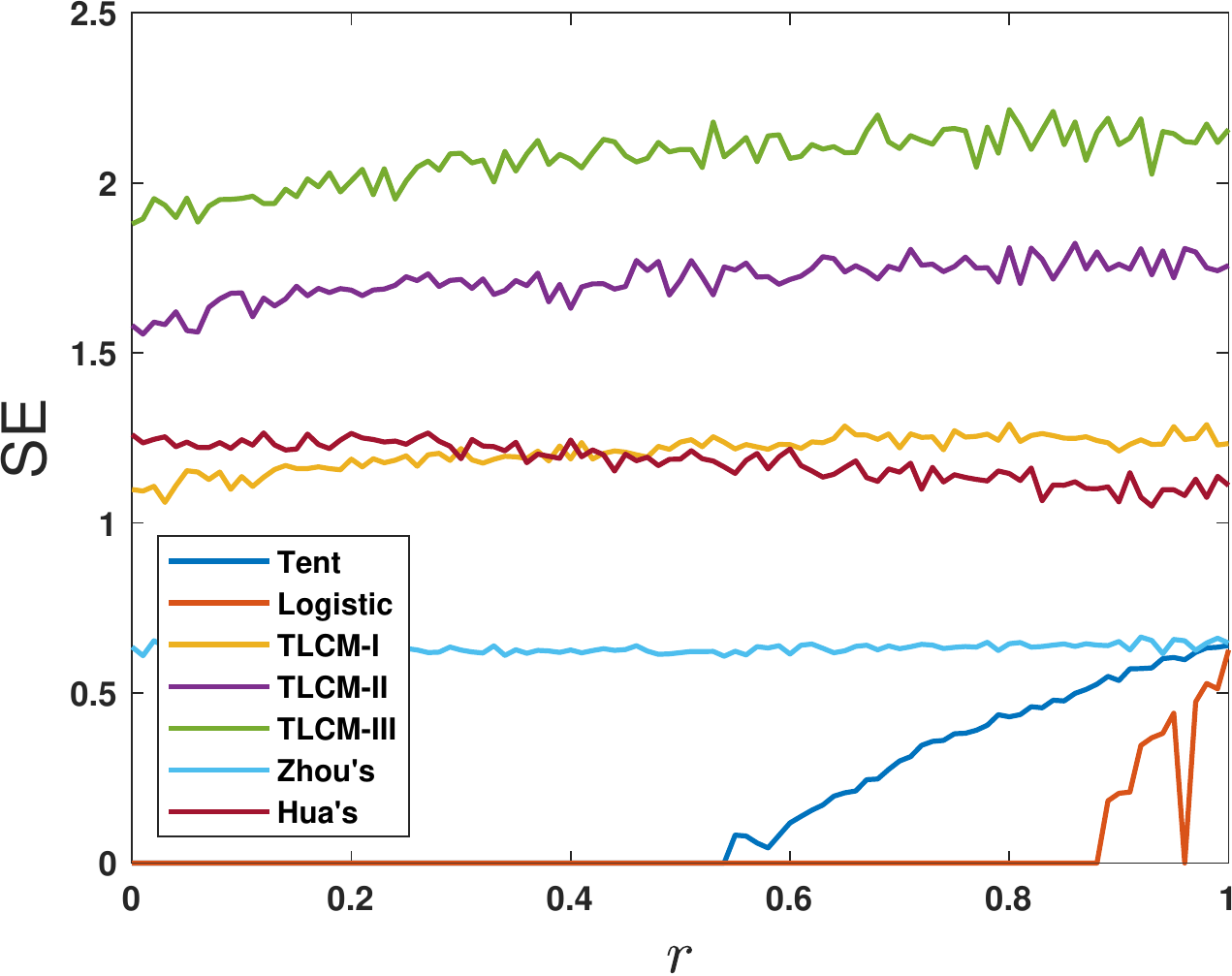}}
	\subfigure[STCM]{
		\label{}
		\includegraphics[width=0.3\textwidth]{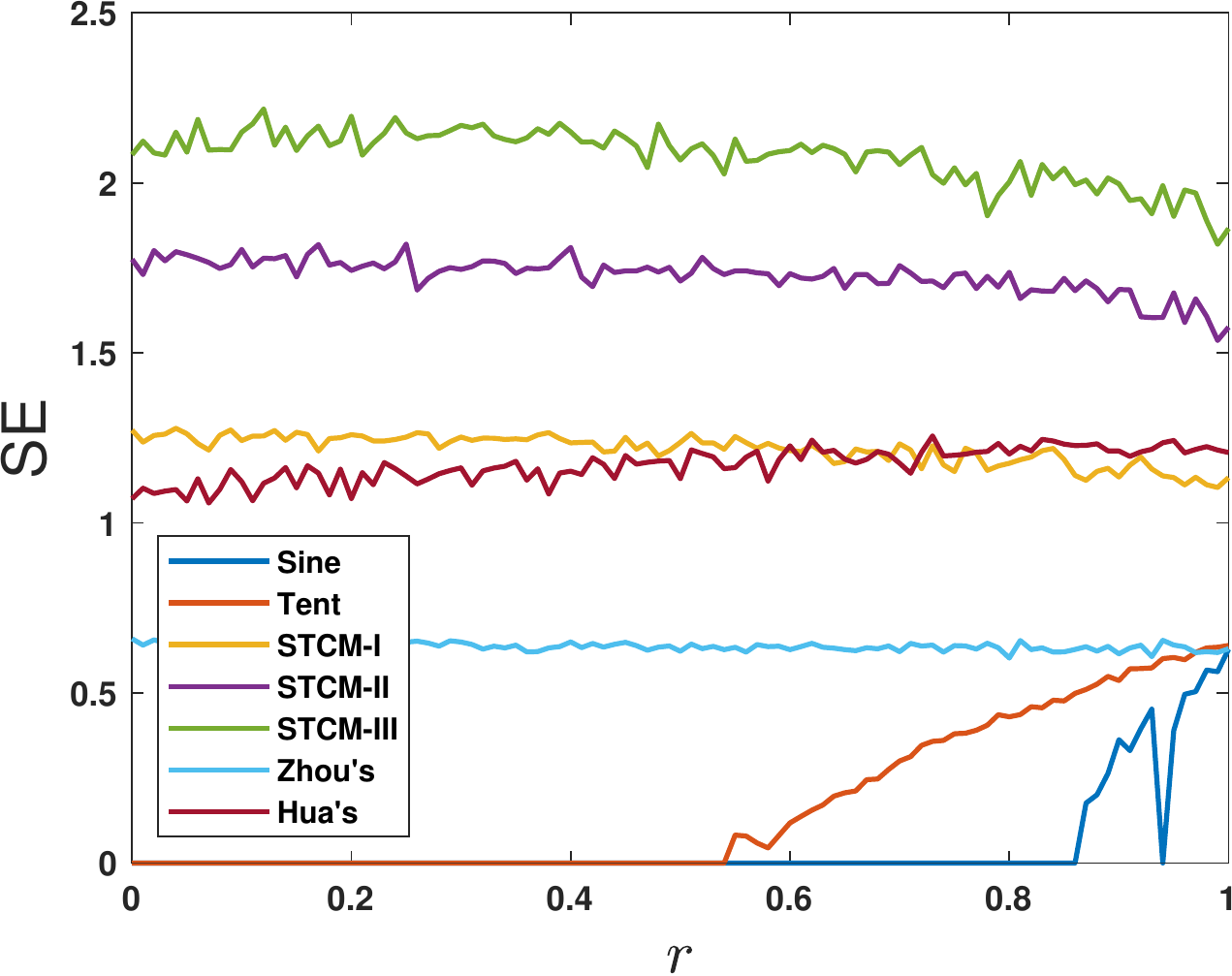}}
	\caption{Comparison of LEs between different generated chaotic maps based on same seed maps: (a) LSCM; (b) TLCM; (c) STCM.}
	\label{fig:SE_COMP}
\end{figure}
\subsection{Sample entropy}
Sample entropy is an improved version of approximate entropy, both of which give a quantitative measurement of the complexity of a time series \cite{richmanPhysiologicalTimeseriesAnalysis2000a}. The definition and computing method of SE can refer to \cite{huaCosinetransformbasedChaoticSystem2019}. Actually, SE describes the probability of emergence of new schema in time series. A series is considered to be more complex if new schema is more likely to appear. Therefore, a larger SE value indicates a more disordered series. We have calculated the SEs of different sequences obtained by iterating various chaotic maps with the initial value $x_0=0.1$. The results are shown in Figure \ref{fig:SE_COMP}. Coinciding with Lyapunov exponents, the type-III chaotic maps get largest SEs among tested chaotic maps including Zhou's method, Hua's CTBCS, type-I and type-II coupling maps, and underlying seed maps. This results demonstrate the good performance of UT-CCS.

%\subsection{Correlation analysis}

\section{CBPRNG for image encryption}
\label{sec:PRNG}
In this section, a new PRNG based on proposed UT-CCS is designed for image encryption. It has been shown in Section \ref{sec:test_of_chaos} that the generated chaotic maps have excellent performance, thus the chaotic sequences gained by iterating chaotic maps already have many properties that an eligible pseudo-random number sequence (PRNS) should have, such as high sensitivity to initial value, aperiodicity, non-convergence, and unpredictability. The only problem left is that the numerical distribution of chaotic sequences are sometimes uneven. As can be seen in Figure \ref{fig:BD_LSCM}-\ref{fig:BD_STCM}, the generated chaotic sequences from type-III coupling maps gather around 0 for some $r$. This non-uniform distribution reduces the randomness greatly, which causes insecurity in encryption. Hence, it is inappropriate to set the coupling chaotic maps as PRNGs directly. To address this problem, we propose the following CBPRNG.
\begin{figure}[htbp]
	\centering
	\includegraphics[scale=0.8]{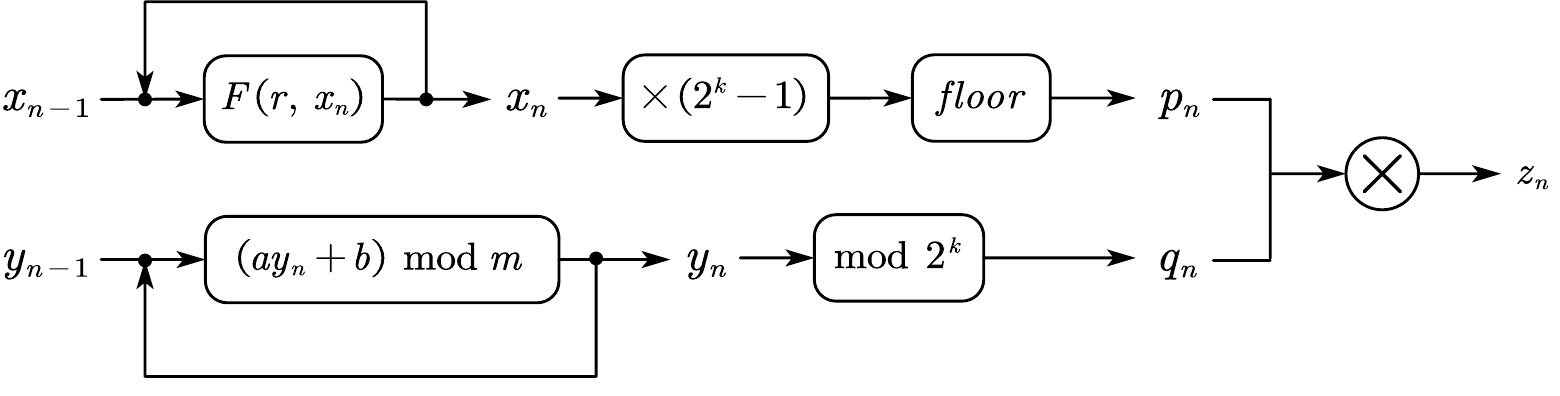}
	\caption{The structure of $k$-bit CBPRNG}
	\label{fig:PRNG_structure}
\end{figure}

\subsection{Structure of proposed CBPRNG}
Figure \ref{fig:PRNG_structure} illustrates the structure of $k$-bit CBPRNG, where $k$ is usually set as 32 or 64 in modern computer system. $x_{n+1}=F(r,x_n)$ is one of the three type-III coupling chaotic maps. $y_{n+1}=(ay_n+b) \ mod \ m$ is the classical linear congruential generator (LCG), where we set $a=1103515245$, $b=12345$, $m=2^{31}$ to get good performance \cite{leebInversiveLinearCongruential1997}. Elements in two sequence $x_n$, $y_n$ are converted into $k$-bit integers $p_n$, $q_n$. Note that $x_n \in [0,1]$ while $y_n \in \mathbb{N}$. The operation \emph{floor} denotes the rounding down function which transforms a floating point number $r$ to the maximum integer no greater than $r$. At last, the corresponding elements in two sequences $p_n$, $q_n$ perform a bitwise XOR operation (denoted by $\otimes$) to get the final PRNS $z_n$. The initial value of chaotic map is $x_0$ while that of LCG is $floor(x_0\cdot m)$, where $x_0 \in (0,1)$.

\subsection{Discussion on the uniformity of CBPRNG}
\label{sec:uniform_proof}
Now we attempt to analyze the numerical distribution of generated PRNS $z_n$ theoretically. To achieve this, we give some simple theorems at first.
\begin{theorem}
	\label{th:2}
	Suppose $X$ is a discrete random variable ranging on $\left\{0,1,2,\cdots\right.$, $\left.2^k-1\right\}$, $X_{k-1}X_{k-2}\cdots X_1X_0$ is the binary representation of $X$. Then $X$ has a uniform distribution if and only if each bit $X_i(i=0,1,\cdots,k-1)$ in the binary form has a 0-1 uniform distribution, i.e.
	\begin{equation}
		\begin{aligned}
		&P\left\{X=0\right\}=P\left\{X=1\right\}=\cdots =P\left\{X=2^k-1\right\}=\frac{1}{2^k} \\
		\Longleftrightarrow &\forall \  i \in \left\{0,1,\cdots,k-1\right\},P\left\{X_i=0\right\}=P\left\{X_i=1\right\}=\frac{1}{2} \\
		\end{aligned}
	\end{equation}
\end{theorem}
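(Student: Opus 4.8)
The plan is to establish the two implications of the biconditional separately, exploiting the bijection between integers in $\{0,1,\dots,2^k-1\}$ and their $k$-bit binary expansions $X_{k-1}\cdots X_0$. For an integer $j$ I would write $b_i(j)\in\{0,1\}$ for its $i$-th binary digit, so that for each bit position $i$ and each value $c\in\{0,1\}$ the event $\{X_i=c\}$ decomposes as the disjoint union $\bigcup_{j:\, b_i(j)=c}\{X=j\}$. This reduces every bit-level probability to a sum of joint probabilities, which is the bridge between the two sides of the statement.

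For the forward direction ($\Rightarrow$), I would argue by counting. Assuming $P\{X=j\}=2^{-k}$ for every $j$, then for any fixed $i$ and $c$,
\begin{equation*}
P\{X_i=c\} = \sum_{j:\, b_i(j)=c} P\{X=j\} = \#\{j : b_i(j)=c\}\cdot 2^{-k}.
\end{equation*}
The key elementary fact is that the coordinate $b_i$ splits the set of all $k$-bit strings evenly, so $\#\{j:b_i(j)=c\}=2^{k-1}$; hence $P\{X_i=c\}=2^{k-1}\cdot 2^{-k}=\tfrac12$, giving the 0-1 uniformity of every bit.

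For the reverse direction ($\Leftarrow$), the natural route is to factorize the joint pmf over the bits: writing $j$ in binary,
\begin{equation*}
P\{X=j\} = P\{X_{k-1}=b_{k-1}(j),\ \dots,\ X_0=b_0(j)\},
\end{equation*}
and if this equals the product $\prod_{i=0}^{k-1}P\{X_i=b_i(j)\}$, then bitwise uniformity immediately yields $P\{X=j\}=\prod_{i=0}^{k-1}\tfrac12=2^{-k}$, the desired uniform law.

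The hard part, and indeed the only genuine obstacle, is precisely that last factorization: marginal 0-1 uniformity of each bit does \emph{not} by itself force the joint distribution to be uniform, since the bits may be dependent. (For $k=2$, placing mass $\tfrac12$ on $00$ and $\tfrac12$ on $11$ makes each bit uniform yet leaves $X$ non-uniform.) Thus the reverse implication holds only under the additional hypothesis that $X_0,\dots,X_{k-1}$ are mutually independent, which I would need to invoke explicitly — or read as implicit in the setting where Theorem \ref{th:2} is applied, the independence being supplied by the bitwise XOR against the independent LCG stream in the CBPRNG. Granting that independence, the factorization is valid and the proof closes; the substantive step is therefore not the algebra but recognizing and justifying this independence assumption in the context of the generator.
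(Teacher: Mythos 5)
Your proposal follows essentially the same two-pronged route as the paper: the direction from value-uniformity to bit-uniformity is the identical counting argument (each bit position splits the $2^k$ binary strings into two halves of size $2^{k-1}$, so $P\{X_i=c\}=2^{k-1}\cdot 2^{-k}=\tfrac12$), and the converse direction is the identical factorization of $P\{X_{k-1}=x_{k-1},\dots,X_0=x_0\}$ into the product $\prod_{i}P\{X_i=x_i\}$. The substantive difference is that you explicitly flag that this factorization is valid only when the bits $X_0,\dots,X_{k-1}$ are mutually independent, whereas the paper performs the same step silently, with no independence hypothesis anywhere in the statement of Theorem~\ref{th:2}. You are right to insist on this: your $k=2$ example (mass $\tfrac12$ on $00$ and $\tfrac12$ on $11$, so each bit is marginally uniform but $X$ is not) shows that the ``if'' direction of the theorem is false as stated, so the gap you identify lives in the theorem and in the paper's own proof, not in your argument. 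The paper later invokes precisely this direction to conclude that $z_n=p_n\otimes q_n$ is uniform; that conclusion can be salvaged more directly (if $q_n$ is uniform on $\{0,1,\dots,2^k-1\}$ and independent of $p_n$, then XOR with any fixed value is a bijection of that set, so conditioning on $p_n$ gives uniformity of $z_n$ at once), but, as you observe, it does not follow from marginal bit-uniformity alone. Your write-up is therefore correct provided the independence hypothesis is added explicitly, and it is strictly more careful than the proof in the paper.
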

\begin{proof}
	1) Sufficiency: $\forall \  x \in \left\{0,1,2,\cdots,2^k-1\right\}$, let $x_{k-1}x_{k-2}\cdots x_1x_0$ represent the binary form of $x$. Then
	\begin{equation*}
		\begin{aligned}
		P\left\{X=x\right\}&=P\left\{X_{k-1}=x_{k-1},X_{k-1}=x_{k-1},\cdots,X_0=x_0\right\} \\
		&=P\left\{X_{k-1}=x_{k-1}\right\} \cdot P\left\{X_{k-2}=x_{k-2}\right\} \cdots P\left\{X_0=x_0\right\} \\
		&=\frac{1}{2} \cdot \frac{1}{2} \cdots \frac{1}{2} \\
		&=\frac{1}{2^k} \\
		\end{aligned}
	\end{equation*}
	2) Necessity: $\forall \  i \in \left\{0,1,2,\cdots,k-1\right\}$, we have
	$$P\left\{X_i=0\right\}=\sum P\left\{X=X_{k-1} \cdots X_{i+1}0X_{i-1} \cdots X_0 \right\}$$
	where $\sum$ denotes summing probabilities of all possible values of 
	$X_{k-1} \cdots X_{i+1}$ $X_{i-1} \cdots X_0$. Apparently there are $2^{k-1}$ combinations of values totally. Considering the condition $P\left\{X=0\right\}=P\left\{X=1\right\}=\cdots =P\left\{X=2^k-1\right\}=1/2^k$, we know that the probability of every combination is $1/2^k$. Therefore we have
	$$P\left\{X_i=0\right\}=2^{k-1} \cdot \frac{1}{2^k}=\frac{1}{2}$$
	The same procedure can be easily adapted to $X_i=1$, thus we finally get
	$$P\left\{X_i=0\right\}=P\left\{X_i=1\right\}=\frac{1}{2}$$
\end{proof}

\begin{theorem}
	\label{th:3}
	Suppose $\xi$, $\eta$ are two independent 0-1 binary random variables. Random variable $\tau = \xi \otimes \eta$, where $\otimes$ denotes exclusive or operation. Then $\tau$ has a 0-1 uniform distribution if either of $\xi$ and $\eta$ has a 0-1 uniform distribution.
\end{theorem}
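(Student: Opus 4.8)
The plan is to read off the distribution of $\tau$ directly from the definition of the XOR operation, exploiting the symmetry built into the hypothesis. First I would observe that $\tau = \xi \otimes \eta$ equals $0$ exactly when $\xi = \eta$ and equals $1$ exactly when $\xi \neq \eta$. Since $\otimes$ is symmetric in its two arguments, the roles of $\xi$ and $\eta$ are interchangeable; this lets me assume without loss of generality that it is $\xi$ which is uniform, i.e. $P\{\xi=0\}=P\{\xi=1\}=\tfrac12$, and place \emph{no} assumption on $\eta$ beyond writing $q = P\{\eta=0\}$, $P\{\eta=1\}=1-q$ for an arbitrary $q \in [0,1]$.

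Next I would expand $P\{\tau=0\}$ over the two disjoint events that yield $0$ under XOR and use the independence of $\xi$ and $\eta$ to factor each joint probability into a product:
\[
P\{\tau=0\} = P\{\xi=0,\eta=0\}+P\{\xi=1,\eta=1\} = \tfrac12 q + \tfrac12(1-q) = \tfrac12 .
\]
The key feature is that the factor $\tfrac12$ from the uniform variable pulls out and the two $\eta$-probabilities sum to $1$, so the dependence on $q$ cancels completely. Because $\tau$ takes only the values $0$ and $1$, it follows at once that $P\{\tau=1\}=1-P\{\tau=0\}=\tfrac12$, and hence $\tau$ has a $0$-$1$ uniform distribution. (Equivalently one could expand $P\{\tau=1\}$ over the two events $\{\xi=0,\eta=1\}$ and $\{\xi=1,\eta=0\}$ and obtain the same value.)

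The computation is entirely routine, and I do not expect any genuine obstacle; the statement is a short auxiliary lemma. The only points that require a moment of care are, first, the word \emph{either}, which I dispatch by the symmetry of $\otimes$ rather than by handling two separate cases, and second, the use of independence, which enters precisely at the factorization step — without it the joint probabilities could not be written as products and the cancellation forcing the answer to be $\tfrac12$ regardless of $q$ would fail. The real payoff comes later, when this lemma is combined with Theorem \ref{th:2} and the properties of the LCG to conclude the uniformity of the CBPRNG output $z_n$.
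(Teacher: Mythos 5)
Your proposal is correct and follows essentially the same route as the paper's own proof: assume without loss of generality that $\xi$ is the uniform variable, set $P\{\eta=0\}=p$, expand $P\{\tau=0\}$ over the two disjoint XOR events, and use independence to factor and cancel the dependence on $p$. The only cosmetic difference is that you obtain $P\{\tau=1\}=\tfrac12$ by complementation while the paper repeats the analogous computation; both are fine.
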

\begin{proof}
	We may suppose $\xi$ has a 0-1 uniform distribution, i.e. $P\left\{\xi=0\right\}=P\left\{\xi=1\right\}=1/2$. The distributed of $\eta$ is represented as $P\left\{\eta = 0 \right\}=p$, $P\left\{\eta = 1 \right\}=1-p$. Then we have
	\begin{equation*}
		\begin{aligned}
		P\left\{\tau =0 \right\}&=P\left\{\xi=0,\eta=0 \right\} + P\left\{\xi=1,\eta=1 \right\} \\
		&=P\left\{\xi=0 \right\} \cdot P\left\{\eta=0 \right\} + P\left\{\xi=1 \right\} \cdot P\left\{\eta=1 \right\} \\
		&=\frac{1}{2} \cdot p + \frac{1}{2} \cdot (1-p) \\
		&=\frac{1}{2} \\
		\end{aligned}
	\end{equation*}
	Similarly, we can get $P\left\{\tau =1 \right\} = 1/2$. Therefore, $\tau$ has a 0-1 uniform distribution.
\end{proof}

Based on Theorem \ref{th:2} and Theorem \ref{th:3}, it is easy to prove that the output $z_n$ generated by CBPRNG has a uniform distribution. At first, it is well-known that LCG can produce evenly distributed sequence. That is, $q_n$ has a uniform distribution on $\left\{0,1,2,\cdots,2^k-1\right\}$. Then, from Theorem \ref{th:2}, we know that each bit $q_n^i \ (i=0,1,\cdots,k-1)$ in the binary form of $q_n$ has a 0-1 uniform distribution. Because $z_n=p_n \otimes q_n$, where $\otimes$ denotes the bitwise XOR operation, we get each bit $z_n^i$ of $z_n$ has a 0-1 uniform distribution by Theorem \ref{th:3}. At last, we conclude that $z_n$ has a uniform distribution on $\left\{0,1,2,\cdots,2^k-1\right\}$ form the necessity of Theorem \ref{th:2} and the proof is done.

\subsection{Randomness tests of CBPRNG}
PRNG have been studied for many years, along with which a mature evaluation system is established. Here, we employ two widely used randomness test suites to examine the performance of CBPRNG.

\subsubsection{TestU01}
Implemented by C language, TestU01 offers a collection of utilities for the empirical randomness testing of PRNGs \cite{lecuyerTestU01ACLibrary2007}. In our experiment, we use two predefined test batteries, Rabbit and Alphabit to evaluate the randomness of bit sequences generated by CBPRNG. As a comparison, the bit sequences directly gained by iterating chaotic maps also participate in the test. The Rabbit test battery includes 38 subtests for bit sequences of length $2^{20}$ while the Alphabit test battery includes 17 subtests.

Table \ref{tab:TestU01} gives the TestU01 results for bit sequences generated by different PRNGs. It can be seen that the random sequences directly gained from type-III coupling chaotic maps---LSCM-III,TLCM-III, STCM-III, fail many subtests. However, the bit sequences produced by CBPRNG: LSCM-LCG,TLCM-LCG and STCM-LCG, pass all subtests of Rabbit battery and Alphabit battery, which has proved the strong randomness of proposed CB-\\PRNG.

\begin{table}[htbp]
	\centering
	\caption{TestU01 results}
	\begin{tabular}{c|cc}
		\toprule
		PRNGs & Rabbit & Alphabit \\
		\midrule
		LSCM-III   & 11/38 & 4/17 \\
		LSCM-LCG & 38/38 & 17/17 \\
		TLCM-III   & 12/38 & 6/17 \\
		TLCM-LCG & 38/38 & 17/17 \\
		STCM-III   & 15/38 & 7/17 \\
		STCM-LCG & 38/38 & 17/17 \\
		\bottomrule
	\end{tabular}%
	\label{tab:TestU01}%
\end{table}%

\subsubsection{NIST SP800-22 test}
The NIST SP800-22 test standard \cite{basshamSP80022Rev2010} is designed by National Institute of Standards and Technology (NIST) to measure the randomness of binary sequences, which is especially used in the field of cryptography. This test standard has 15 subtests totally and each of them outputs a \emph{P-value}. A random sequence is thought to pass a subtest if the corresponding \emph{P-value} is greater than 0.01. As recommended in \cite{basshamSP80022Rev2010}, at least 55 sequences should be tested and there are two approaches to interpret the empirical results of multiple test sequences. One is examining the proportion of sequences that pass a subtest. The other is checking the uniform distribution of \emph{P-value} for different test sequences to ensure uniformity.

In this experiment, we test 100 binary sequences for every PRNG. The length of each sequence is 1000000 bits. Similarly, we also test the chaotic sequences directly gained by iterating three type-III coupling maps to compare the results with our proposed CBPRNG. According to \cite{basshamSP80022Rev2010}, the minimum pass rate for each subtest with the exception of the random excursion (variant) test is 96 for 100 test sequences. Besides, a $\chi^2$ test is suggested to evaluate the uniformity of \emph{P-value} for 100 test sequences in an arbitrary subtest, which outputs a $\emph{P-value}_T$ as a result. And the sequences can be considered to be uniformly
distributed if $\emph{P-value}_T \geqslant 0.0001$.

Table \ref{tab:NIST_1} lists the proportion of 100 test sequences that pass a subtest for different PRNGs. The pass rate of proposed CBPRNG in all subtests are high enough to meet minimum requirements. However, the pass rate of type-III coupling chaotic maps are relatively low, and even none of 100 test sequences can pass in some subtests. Therefore, the CBPRNG can generate sequences of better randomness. When applying to image encryption, the CBPRNG will enhance security. Besides, Table \ref{tab:NIST_2} lists the $\chi^2$ test results for the 100 \emph{P-value} of each subtest. The three CBPRNGs have $\emph{P-value}_T$ greater than 0.0001 for all subtests while the results of three coupling maps are not good because there are many $\emph{P-value}_T$ smaller than 0.0001. This also indicates that the sequences generated by CBPRNG are really uniformly distributed, which confirms our analysis in Section \ref{sec:uniform_proof}.

\begin{table}[htbp]
	\centering
	\scriptsize
	\caption{Proportion of NIST SP800-22 test results of various PRNGs \tnote{*}}
	\begin{threeparttable}
		\begin{tabular}{c|cc|cc|cc}
			\toprule
			\multirow{3}[4]{*}{Subtest} & \multicolumn{6}{c}{Proportion($\geqslant 96$) \tnote{*}} \\
			\cmidrule{2-7}          & LSCM-LCG & LSCM-III  & TLCM-LCG & TLCM-III  & STCM-LCG & STCM-III \\
			& \multicolumn{2}{c|}{$(x_0=0.4584, r=0.6541)$} & \multicolumn{2}{c|}{$(x_0=0.4584, r=0.0257)$} & \multicolumn{2}{c}{$(x_0=0.4584, r=0.9335)$} \\
			\midrule
			Frequency & 99    & \textbf{0} & 96    & \textbf{0} & 96    & \textbf{0} \\
			Block Frequency & \multirow{2}[0]{*}{98} & \multirow{2}[0]{*}{\textbf{0}} & \multirow{2}[0]{*}{99} & \multirow{2}[0]{*}{\textbf{1}} & \multirow{2}[0]{*}{99} & \multirow{2}[0]{*}{\textbf{56}} \\
			$(m=128)$ &       &       &       &       &       &  \\
			Cusum-Forward & 99    & \textbf{0} & 96    & \textbf{0} & 95    & \textbf{0} \\
			Cusum-Reverse & 100   & \textbf{0} & 96    & \textbf{0} & 96    & \textbf{0} \\
			Runs  & 98    & \textbf{0} & 100   & \textbf{0} & 100   & \textbf{0} \\
			LongestRun & 97    & 98    & 99    & 97    & 99    & 99 \\
			Rank  & 99    & 99    & 99    & 100   & 99    & 100 \\
			FFT   & 99    & \textbf{95}    & 100   & 98    & 100   & 100 \\
			Non Over. Temp. & \multirow{2}[0]{*}{100} & \multirow{2}[0]{*}{\textbf{55}} & \multirow{2}[0]{*}{98} & \multirow{2}[0]{*}{\textbf{61}} & \multirow{2}[0]{*}{99} & \multirow{2}[0]{*}{\textbf{83}} \\
			$(m=9,B=000000001)$ &       &       &       &       &       &  \\
			Over. Temp.$(m=9)$ & 100   & 97    & 98    & 98    & 98    & 97 \\
			Universal & 98    & \textbf{82} & 99    & 99    & 99    & 100 \\
			Appr. Entropy & \multirow{2}[0]{*}{99} & \multirow{2}[0]{*}{\textbf{0}} & \multirow{2}[0]{*}{98} & \multirow{2}[0]{*}{\textbf{1}} & \multirow{2}[0]{*}{98} & \multirow{2}[0]{*}{\textbf{29}} \\
			$(m=10)$ &       &       &       &       &       &  \\
			Ran. Exc.$(x=+1)$ \tnote{**} & 61/62 & -     & 61/61 & -     & 61/61 & - \\
			Ran. Exc. Var. \tnote{**} & \multirow{2}[0]{*}{62/62} & \multirow{2}[0]{*}{-} & \multirow{2}[0]{*}{60/61} & \multirow{2}[0]{*}{-} & \multirow{2}[0]{*}{61/61} & \multirow{2}[0]{*}{-} \\
			$(x=-1)$ &       &       &       &       &       &  \\
			Serial$(m=16,\nabla \Psi^2_m)$ & 99    & \textbf{0} & 98    & \textbf{88} & 98    & 96 \\
			Linear Complexity & \multirow{2}[1]{*}{99} & \multirow{2}[1]{*}{100} & \multirow{2}[1]{*}{100} & \multirow{2}[1]{*}{98} & \multirow{2}[1]{*}{100} & \multirow{2}[1]{*}{99} \\
			$(M=500)$ &       &       &       &       &       &  \\
			\midrule
			Success Count & 15/15 & 4/15  & 15/15 & 6/15  & 15/15 & 7/15 \\
			\bottomrule
		\end{tabular}%
		\begin{tablenotes}
			\footnotesize
			\item[*] Bold font denotes that the PRNG fail the subtest.
			\item[**] The random excursion (variant) test may be not applicable when there are an insufficient number of cycles in the test sequence, see \cite{basshamSP80022Rev2010}. The $\emph{P-value}_T$ cannot be calculated when the number of tested sequences are less than 55, denoted by "-".
		\end{tablenotes}
	\end{threeparttable}
	
	\label{tab:NIST_1}%
\end{table}%

\begin{table}[htbp]
	\centering
	\scriptsize
	\caption{$\emph{P-value}_T$ of NIST SP800-22 test results of various PRNGs}
	\begin{tabular}{c|cc|cc|cc}
		\toprule
		\multirow{3}[3]{*}{Subtest} & \multicolumn{6}{c}{$\emph{P-value}_T(\geqslant 0.0001)$} \\
		\cmidrule{2-7} & LSCM-LCG & LSCM-III  & TLCM-LCG & TLCM-III  & STCM-LCG & STCM-III \\
		& \multicolumn{2}{c|}{$(x_0=0.4584, r=0.6541)$} & \multicolumn{2}{c|}{$(x_0=0.4584, r=0.0257)$} & \multicolumn{2}{c}{$(x_0=0.4584, r=0.9335)$} \\
		\midrule
		Frequency & 0.935716  & \textbf{0.000000 } & 0.383827  & \textbf{0.000000 } & 0.383827  & \textbf{0.000000 } \\
		Block Frequency & \multirow{2}[0]{*}{0.236810 } & \multirow{2}[0]{*}{\textbf{0.000000 }} & \multirow{2}[0]{*}{0.834308 } & \multirow{2}[0]{*}{\textbf{0.000000 }} & \multirow{2}[0]{*}{0.834308 } & \multirow{2}[0]{*}{\textbf{0.000000 }} \\
		$(m=128)$ &       &       &       &       &       &  \\
		Cusum-Forward & 0.171867  & \textbf{0.000000 } & 0.955835  & \textbf{0.000000 } & 0.955835  & \textbf{0.000000 } \\
		Cusum-Reverse & 0.437274  & \textbf{0.000000 } & 0.494392  & \textbf{0.000000 } & 0.494392  & \textbf{0.000000 } \\
		Runs  & 0.759756  & \textbf{0.000000 } & 0.574903  & \textbf{0.000000 } & 0.574903  & \textbf{0.000000 } \\
		LongestRun & 0.419021  & 0.012650  & 0.719747  & 0.494392  & 0.719747  & 0.534146  \\
		Rank  & 0.006196  & 0.202268  & 0.419021  & 0.935716  & 0.419021  & 0.699313  \\
		FFT   & 0.171867  & 0.010988  & 0.595549  & 0.350485  & 0.595549  & 0.102526  \\
		Non Over. Temp. & \multirow{2}[0]{*}{0.816537 } & \multirow{2}[0]{*}{\textbf{0.000000 }} & \multirow{2}[0]{*}{0.924076 } & \multirow{2}[0]{*}{\textbf{0.000000 }} & \multirow{2}[0]{*}{0.924076 } & \multirow{2}[0]{*}{\textbf{0.000000 }} \\
		$(m=9,B=000000001)$ &       &       &       &       &       &  \\
		Over. Temp.$(m=9)$ & 0.066882  & \textbf{0.000000 } & 0.554420  & 0.030806  & 0.554420  & 0.191687  \\
		Universal & 0.779188  & \textbf{0.000000 } & 0.137282  & 0.249284  & 0.137282  & 0.911413  \\
		Appr. Entropy & \multirow{2}[0]{*}{0.574903 } & \multirow{2}[0]{*}{\textbf{0.000000 }} & \multirow{2}[0]{*}{0.514124 } & \multirow{2}[0]{*}{\textbf{0.000000 }} & \multirow{2}[0]{*}{0.514124 } & \multirow{2}[0]{*}{\textbf{0.000000 }} \\
		$(m=10)$ &       &       &       &       &       &  \\
		Ran. Exc.$(x=+1)$ & 0.739918  & -     & 0.086458  & -     & 0.086458  & - \\
		Ran. Exc. Var. & \multirow{2}[0]{*}{0.862344 } & \multirow{2}[0]{*}{-} & \multirow{2}[0]{*}{0.993837 } & \multirow{2}[0]{*}{-} & \multirow{2}[0]{*}{0.988549 } & \multirow{2}[0]{*}{-} \\
		$(x=-1)$ &       &       &       &       &       &  \\
		Serial$(m=16,\nabla \Psi^2_m)$ & 0.023545  & \textbf{0.000000 } & 0.554420  & \textbf{0.000000 } & 0.554420  & \textbf{0.000000 } \\
		Linear Complexity & \multirow{2}[1]{*}{0.090936 } & \multirow{2}[1]{*}{0.262249 } & \multirow{2}[1]{*}{0.657933 } & \multirow{2}[1]{*}{0.350485 } & \multirow{2}[1]{*}{0.657933 } & \multirow{2}[1]{*}{0.534146 } \\
		$(M=500)$ &       &       &       &       &       &  \\
		\midrule
		Success Count & 15/15 & 4/15  & 15/15 & 6/15  & 15/15 & 6/15 \\
		\bottomrule
	\end{tabular}%
	\label{tab:NIST_2}%
\end{table}%

\section{Proposed image encryption algorithm}
\label{sec:IE}
Based on CBPRNG, a novel image encryption algorithm  is proposed in this section.  We adopt the classical "confusion-diffusion" framework \cite{kocarevChaosbasedCryptographyBrief2001} to encrypt digital images. Also, a bit plane flip operation is designed to enhance security. The confusion and diffusion are based on random sequences generated by three types of CBPRNG---LSCM-LCG,TLCM-LCG and STCM-LCG. 
%where all the three type-III coupling chaotic maps are used.
Figure \ref{fig:en_framework} depicts the process of encryption and structure of keys, in which LSCM-LCG and TLCM-LCG are used in the process of confusion, while STCM-LCG is applied in the diffusing stage.
 The following subsections illusrate the encryption process.

\begin{figure}[htbp]
	\centering
	\includegraphics[scale=0.8]{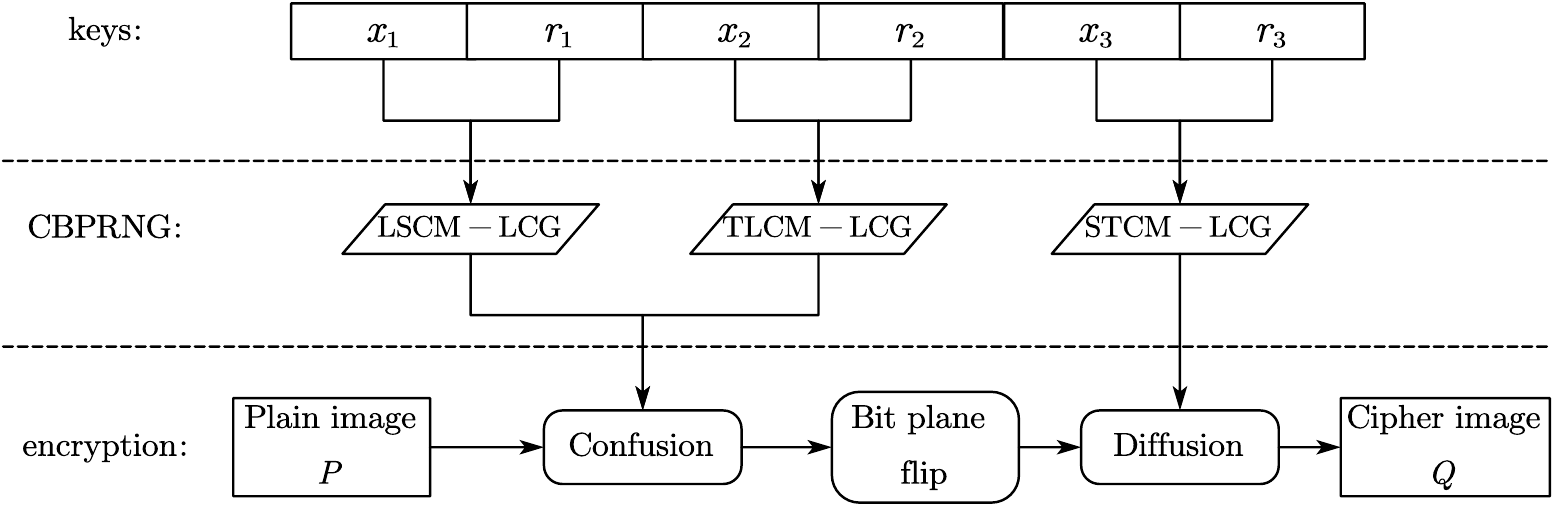}
	\caption{The structure of proposed image encryption scheme.}
	\label{fig:en_framework}
\end{figure}

\subsection{Confusion phase}
The confusion operation scrambles the original image to broke the high correlation among adjacent pixels. An effective approach that many existing algorithms deployed to shuffle the pixels is sorting an unordered random sequence to an ordered one, then pixels are rearranged according to the location maps of elements in two sequences \cite{niuSplicingModelHyper2016,muhammadSecureSurveillanceFramework2018,laiphrakpamCryptanalysisSymmetricKey2017,jainRobustImageEncryption2016}. However, the sorting process is relatively time consuming. In this program, the confusion is completed by exchanging the position of every pixel with the other one sequentially, which is inspired by the famous Knuth shuffle algorithm \cite{durstenfeldAlgorithm235Random1964}. And the pixel to be exchanged is decided by two random matrix generated by CBPRNG.

Let $P$ be a digital image matrix with size $M\times N$. Note that if $P$ is a multichannel color image, just convert it to a gray image by concating different channels. Also, we may assume the color depth of images is 8 bits in the following text. And the other cases have similar encrypting process. For each pixel $P(i,j)$ in $P$, its location is exchanged with pixel $P(H_1(i,j),H_2(i,j))$, where $H_1$, $H_2$ are two $M\times N$ random matrix obtained from LSCM-LCG and TLCM-LCG, respectively. 
%Take care that the elements in $H_1$ and $H_2$ shouldn't exceed $M$ and $N$. 
The exchanging order  is up to down and left to right, i.e. with the growth of $i$, $j$. The confusion operation is described as Algorithm \ref{algo:confusion}.

\begin{algorithm} %算法开始 
	\caption{Confusion algorithm} %算法的题目 
	\label{algo:confusion} %算法的标签 
	\begin{algorithmic}[1] %此处的[1]控制一下算法中的每句前面都有标号 
		\REQUIRE $P$: Original image matrix of size $M\times N$ \\
		\qquad $x_1,r_1,x_2,r_2$: Secrete keys
		\ENSURE $T$: The scrambled image matrix
		
		\STATE $H_1={\rm LSCM-LCG}(x_1,r_1) \  mod \  M$
		\STATE $H_2={\rm TLCM-LCG}(x_2,r_2) \  mod \  N$
		\STATE $T=P$
		\FOR{$i=1$ to $M$}
			\FOR{$j=1$ to $N$}
				\STATE Exchange$(T(i,j),T(H_1(i,j),H_2(i,j)))$
			\ENDFOR	
		\ENDFOR
		
	\end{algorithmic} 
\end{algorithm}

\subsection{Bit plane flip}
The bit plane flip operation (BPF) increases the complexity of encryption and gets attackers into trouble. Suppose $T$ is the scrambled image matrix, we split $T$ into 8 binary planes $T_7T_6\cdots T_0$ by regarding the same bits of each pixel as a bit plane. Then, every bit plane is flipped according to the parity of $i$. When $i$ is even, the plane $T_i$ is flipped up and down; when $i$ is odd, the plane $T_i$ is flipped left and right. After all the 8 planes are flipped, they are merged into a new image matrix $S$. Figure \ref{fig:BPF} exhibits the process of BPF operation.

\begin{figure}[htbp]
	\centering
	\includegraphics[scale=0.8]{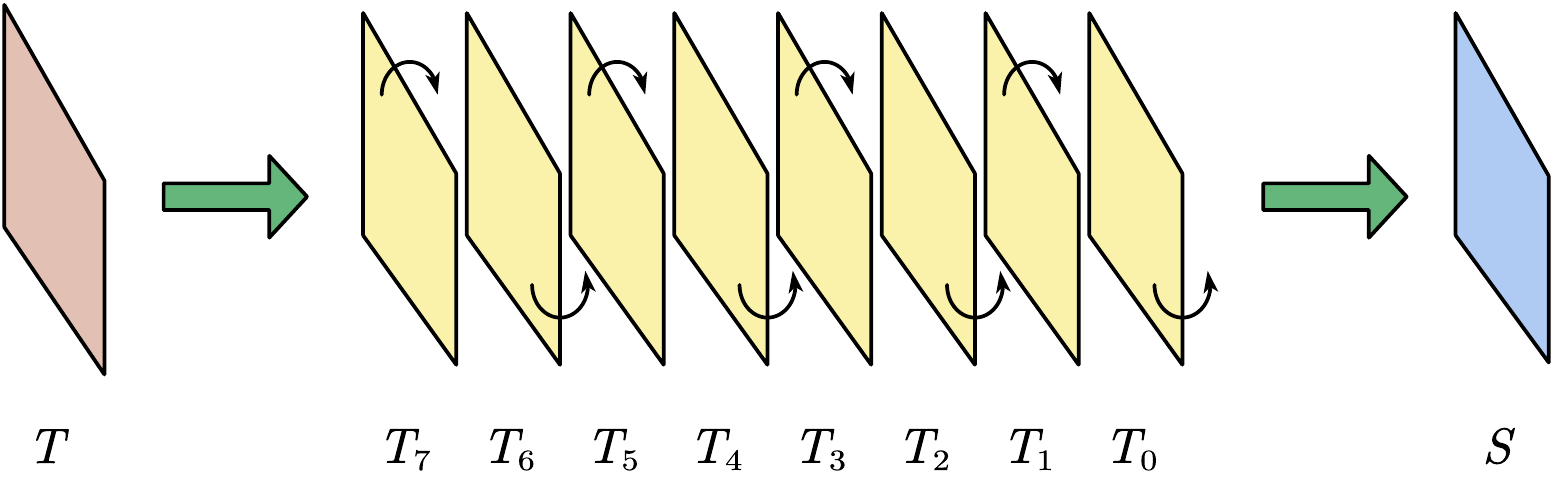}
	\caption{Process of BPF operation}
	\label{fig:BPF}
\end{figure}

\subsection{Diffusion phase}
In image encryption, it makes sense that a plain image with little change results a totally different ciphertext image. That is, the encrypted image should be extremely sensitive to original image, which is crucial in resisting known-plaintext attack \cite{zhangImprovedKnownplaintextAttack2018} and chosen-plaintext attack \cite{liuChosenplaintextAttackImage2016}. The diffusion operation is exactly designed to make small changes in plain image spread through the whole image matrix.

Our diffusion algorithm contains two main steps: forward diffusion (FD) and reverse diffusion (RD). The reason why we perform two diffusion in opposite directions is to increase the stability. That is, wherever the change is happened in original image, it will affect all other pixels. Besides, a random sequence participants in the whole process for improving numerical distribution of cipher image.

In FD phase, we make a pixel influenced by the bit-reverse order (e.g. the bit-reverse order of $6=(110)_2$ is $(011)_2=3$) of its previous pixel in cipher. This process is based on bitwise XOR operation and can be formally defined as Eq.(\ref{eq:diffusion_1}):
\begin{equation}
	\label{eq:diffusion_1}
	\hat{G}_i=\hat{S}_i \otimes \hat{U}_i \otimes {\rm BitRev}(\hat{G}_{i-1}),
\end{equation}
where $\hat{G}$ is the diffused pixel sequence, $\hat{S}$ is the bit-plane-flipped image pixel sequence, $\hat{U}$ is a random sequence generated by STCM-LCG, and BitRev$(x)$ denotes the bit-reverse order of $x$. %We set $\hat{G}_0=0$ initially.

In RD phase, we employ the modulo operation to make each pixel affected by its succeeding pixel in cipher. The random sequence is still joined by bitwise XOR operation. The RD procedure is defined by Eq.(\ref{eq:diffusion_2}):
\begin{equation}
	\label{eq:diffusion_2}
	\hat{Q}_i=((\hat{G}_i+\hat{Q}_{i+1}) \  mod \  256) \otimes \hat{U}_{i+MN},
\end{equation}
where $\hat{Q}$ is the pixel sequence of the final cipher image and we set $\hat{Q}_{MN+1}=\hat{G}_1$ initially. Algorithm \ref{algo:diffusion} describes the diffusion operation.

\begin{algorithm} %算法开始 
	\caption{Diffusion algorithm} %算法的题目 
	\label{algo:diffusion} %算法的标签 
	\begin{algorithmic}[1] %此处的[1]控制一下算法中的每句前面都有标号 
		\REQUIRE $S$: The bit-plane-flipped image matrix of size $M\times N$ \\
		\qquad $x_3,r_3$: Secrete keys
		\ENSURE $Q$: The finally Encrypted image matrix
		
		\STATE $\hat{S} = {\rm reshape}(S,MN,1)$
		\STATE $U={\rm STCM-LCG}(x_3,r_3,2MN,1) \  mod \  256$
		\STATE $\hat{G}_1=\hat{S}_1 \otimes \hat{U}_1$
		\FOR{$i=2$ to $MN$}
		\STATE $\hat{G}_i=\hat{S}_i \otimes \hat{U}_i \otimes {\rm BitRev}(\hat{G}_{i-1})$
		\ENDFOR
		
		\STATE $\hat{Q}_{MN}=((\hat{G}_{MN}+\hat{G}_1) \  mod \  256) \otimes \hat{U}_{2MN}$
		\FOR{$i=MN-1$ to $1$}
		\STATE $\hat{Q}_i=((\hat{G}_i+\hat{Q}_{i+1}) \  mod \  256) \otimes \hat{U}_{i+MN}$
		\ENDFOR
		
		\STATE $ Q = {\rm reshape}(\hat{Q},M,N)$
	\end{algorithmic} 
\end{algorithm}

\section{Security analysis of proposed image encryption algorithm}
\label{sec:test_IE}
To test the performance of our proposed encryption scheme, we implement the encryption-decryption process in MATLAB R2018b, where the hardware parameters of the computer are Intel Core i7-4710MQ CPU and 16GB RAM. The plaintext images that we choose for test are six images of size $256 \times 256$. The four grayscale images are Lena, Cameraman, Peppers, and Starfish, while the two color images are Airplane and Baboon. We evaluate the security of our proposed algorithm from multiple aspects and compare the results with existing methods, which demonstrates the effectiveness and the progress of the proposed image encryption scheme.

\subsection{Histogram analysis}
%--------------------------------to be revised---------------------------------
Histogram of a digital image shows the distribution of gray value of all pixels. A meaningful image will show obvious characteristic in its histogram that the number of pixels with the same gray value takes a specific percentage. Thus, an ideal cipher image should have an uniform distributed histogram that will never leak any meaningful information to attackers who use statistical attacks. 
%--------------------------------to be revised---------------------------------

The histograms of four grayscales and their corresponding ciphers are shown as Figure \ref{fig:hist_gray}. The histograms of plaintext images present remarkable features as seen in Figure \ref{fig:hist_gray_a}-\ref{fig:hist_gray_d}. After encryption, all the images become indistinguishable noise-like ciphers. We can never tell the difference among Figure \ref{fig:hist_gray_i}-\ref{fig:hist_gray_l} visually. And the histograms of four ciphertext images become identical. All 256 gray levels appear with the same probability in encrypted images. This will invalidate the statistical attacks based on pixel value frequency.
\begin{figure}[htbp]
	\centering
	\subfigure[]{\label{fig:hist_gray_a}
		\includegraphics[width=0.22\textwidth]{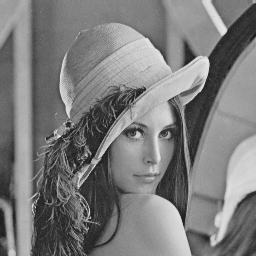}}
	\subfigure[]{\label{fig:hist_gray_b}
		\includegraphics[width=0.22\textwidth]{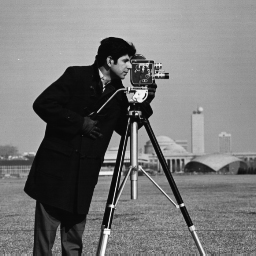}}
	\subfigure[]{\label{fig:hist_gray_c}
		\includegraphics[width=0.22\textwidth]{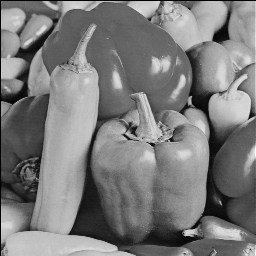}}
	\subfigure[]{\label{fig:hist_gray_d}
		\includegraphics[width=0.22\textwidth]{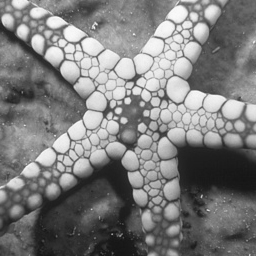}}
	\subfigure[]{\label{fig:hist_gray_e}
		\includegraphics[width=0.22\textwidth]{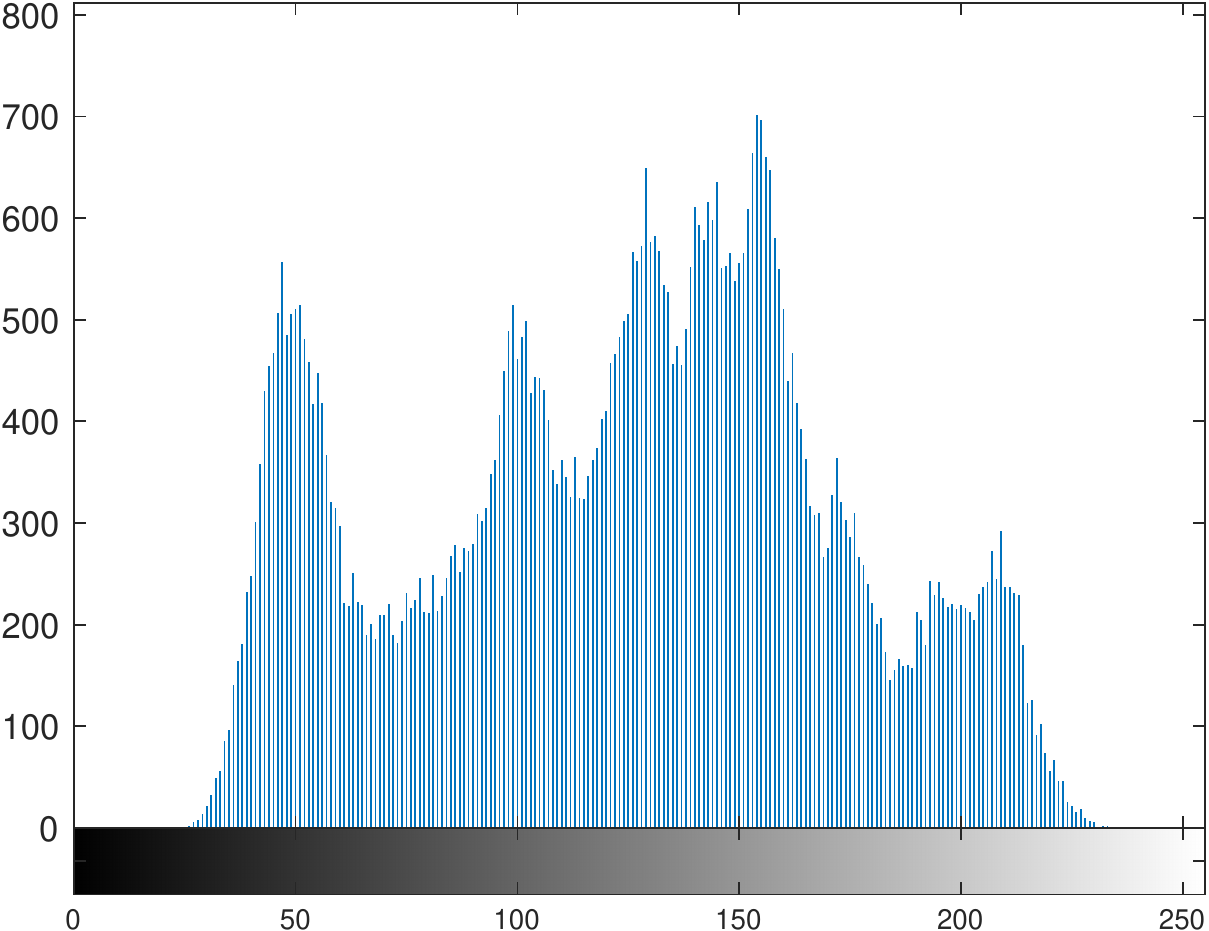}}
	\subfigure[]{\label{fig:hist_gray_f}
		\includegraphics[width=0.22\textwidth]{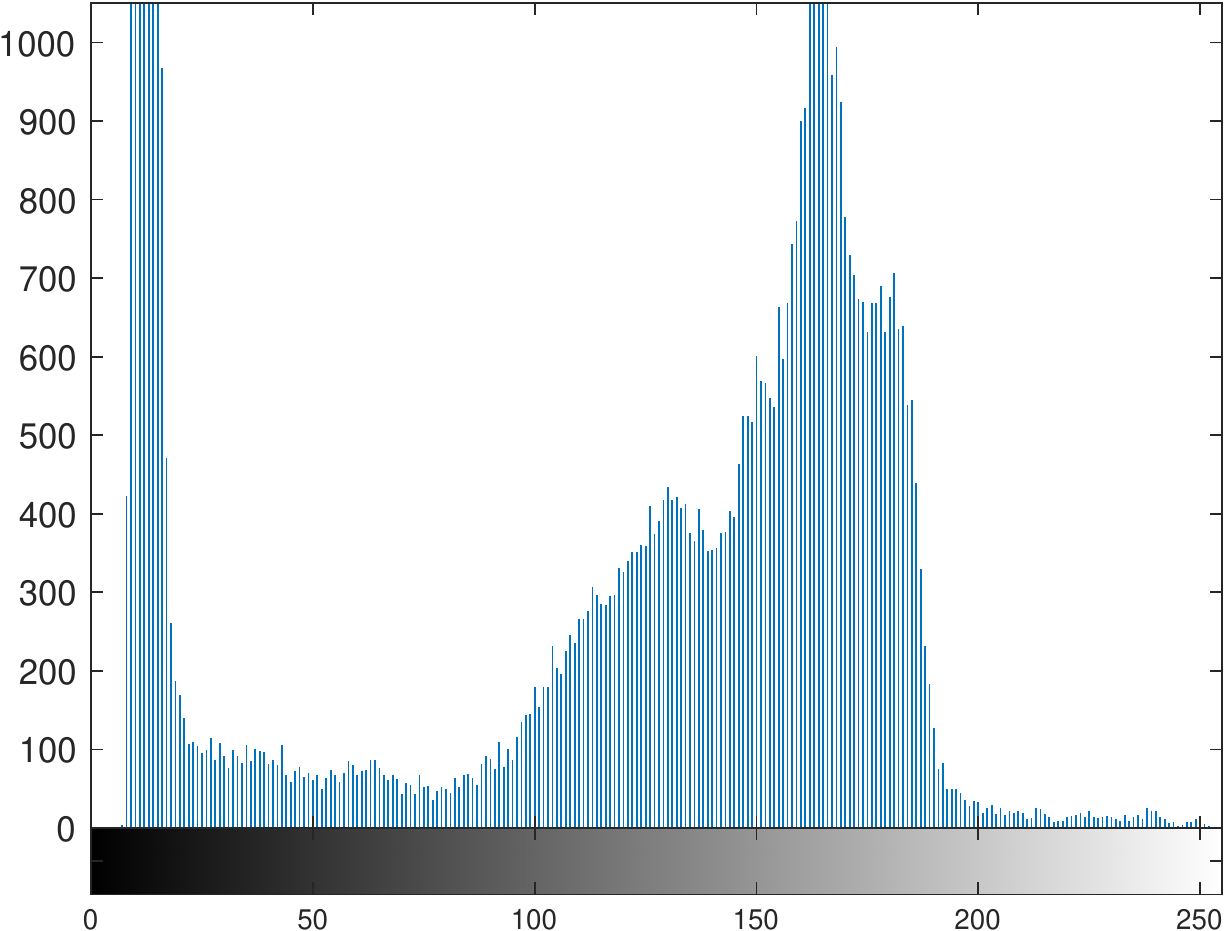}}
	\subfigure[]{\label{fig:hist_gray_g}
		\includegraphics[width=0.22\textwidth]{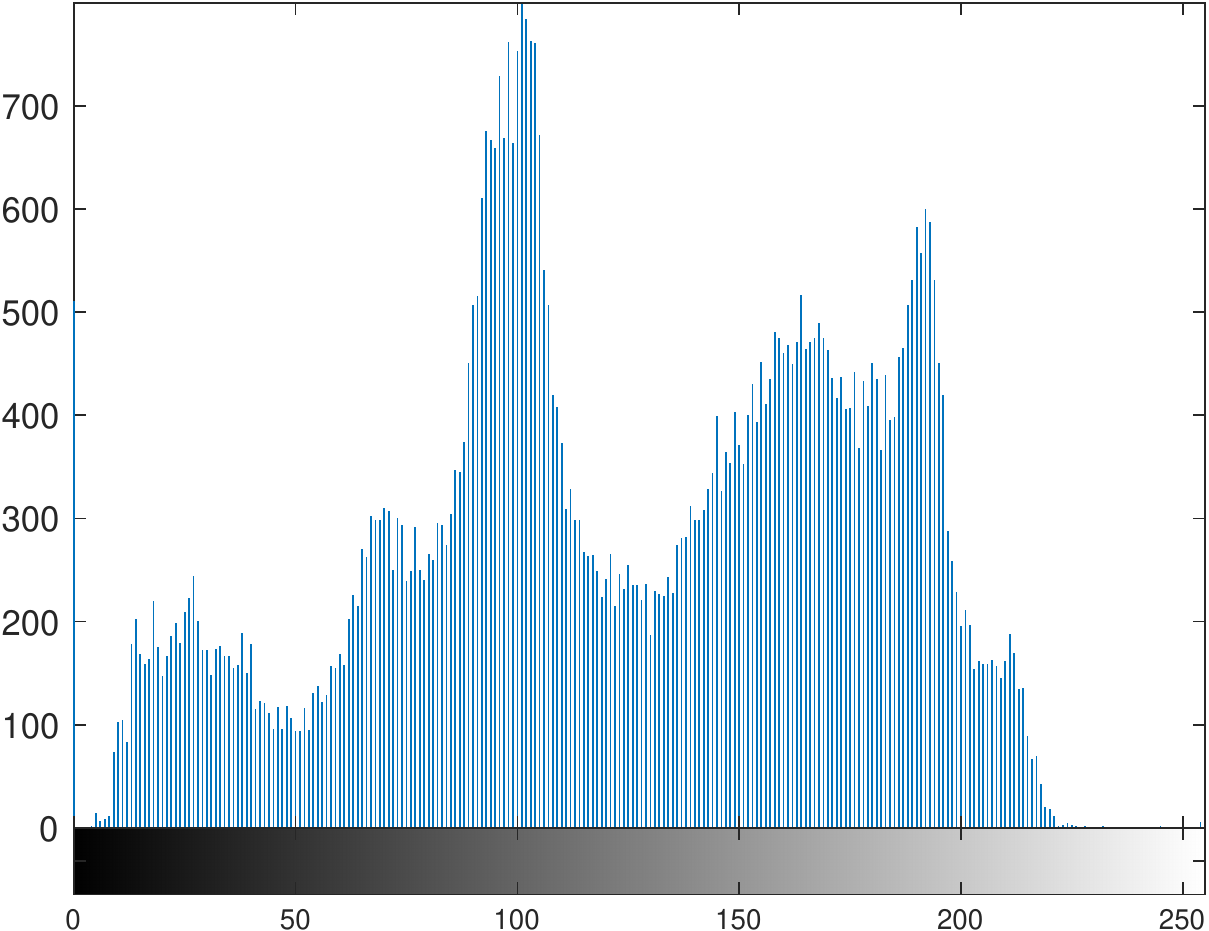}}
	\subfigure[]{\label{fig:hist_gray_h}
		\includegraphics[width=0.22\textwidth]{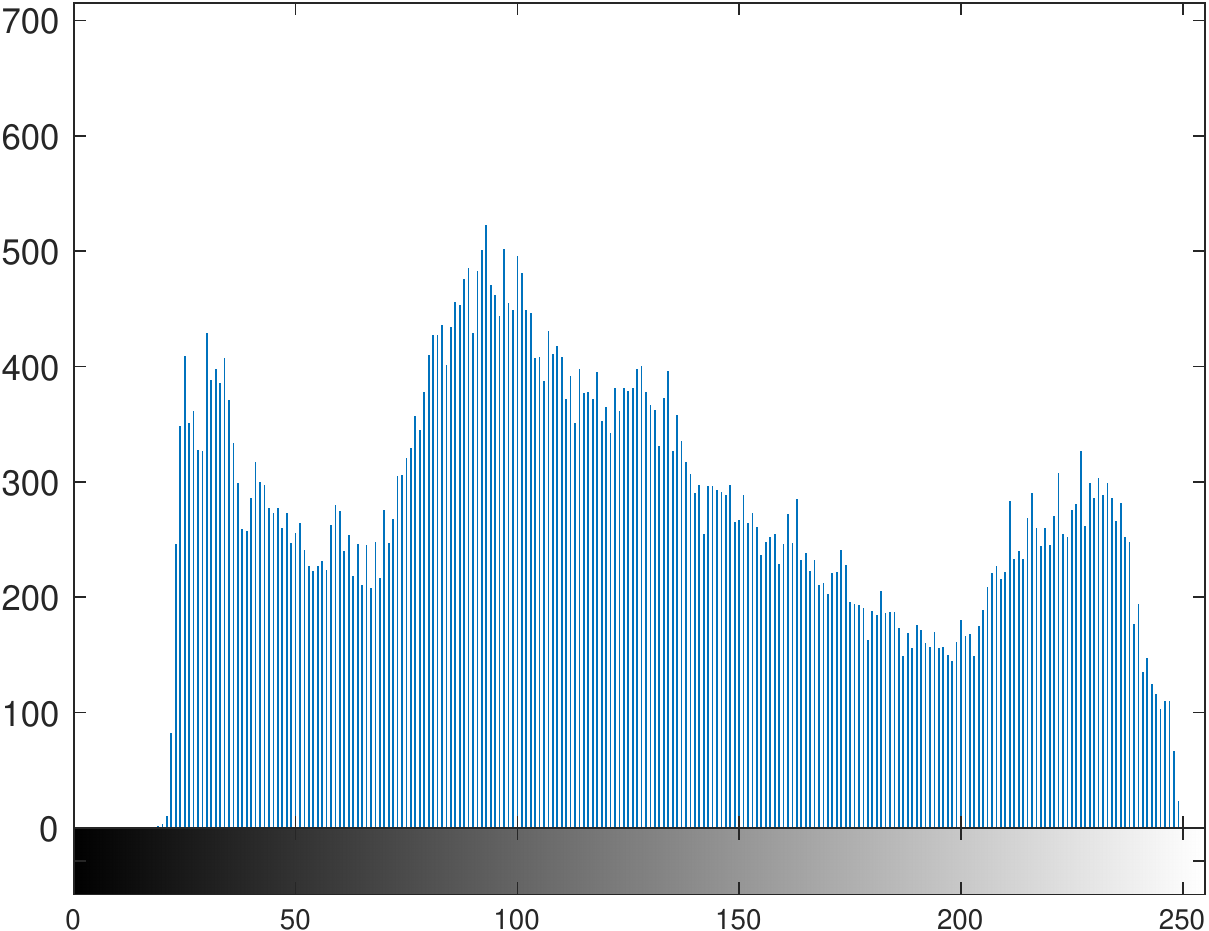}}
	\subfigure[]{\label{fig:hist_gray_i}
		\includegraphics[width=0.22\textwidth]{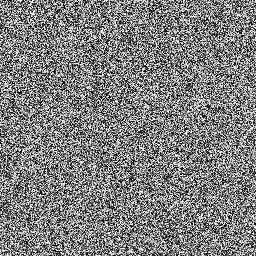}}
	\subfigure[]{\label{fig:hist_gray_j}
		\includegraphics[width=0.22\textwidth]{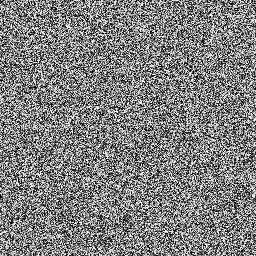}}
	\subfigure[]{\label{fig:hist_gray_k}
		\includegraphics[width=0.22\textwidth]{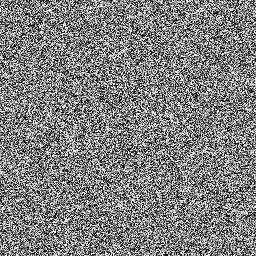}}
	\subfigure[]{\label{fig:hist_gray_l}
		\includegraphics[width=0.22\textwidth]{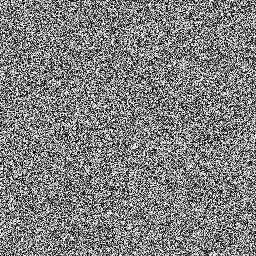}}
	\subfigure[]{\label{fig:hist_gray_m}
		\includegraphics[width=0.22\textwidth]{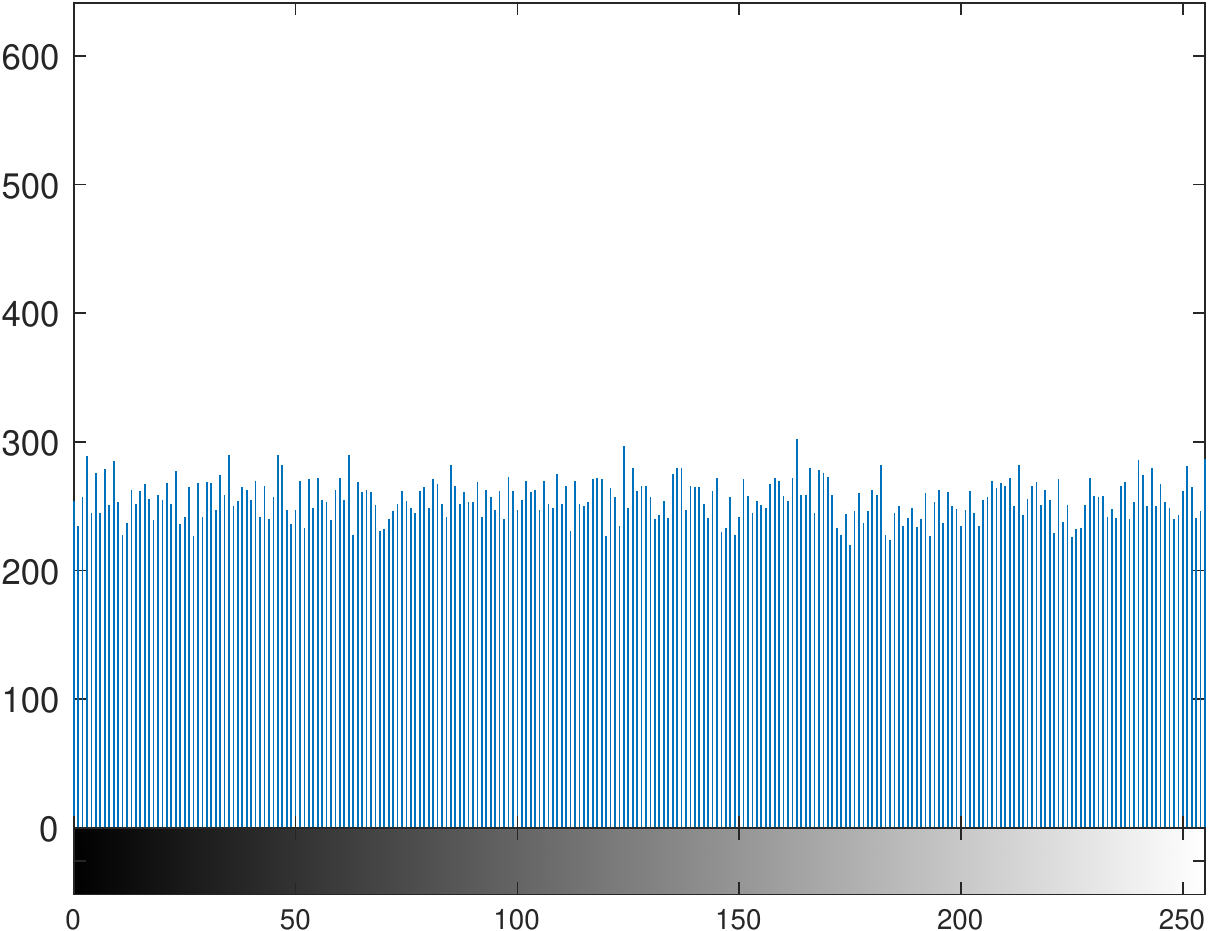}}
	\subfigure[]{\label{fig:hist_gray_n}
		\includegraphics[width=0.22\textwidth]{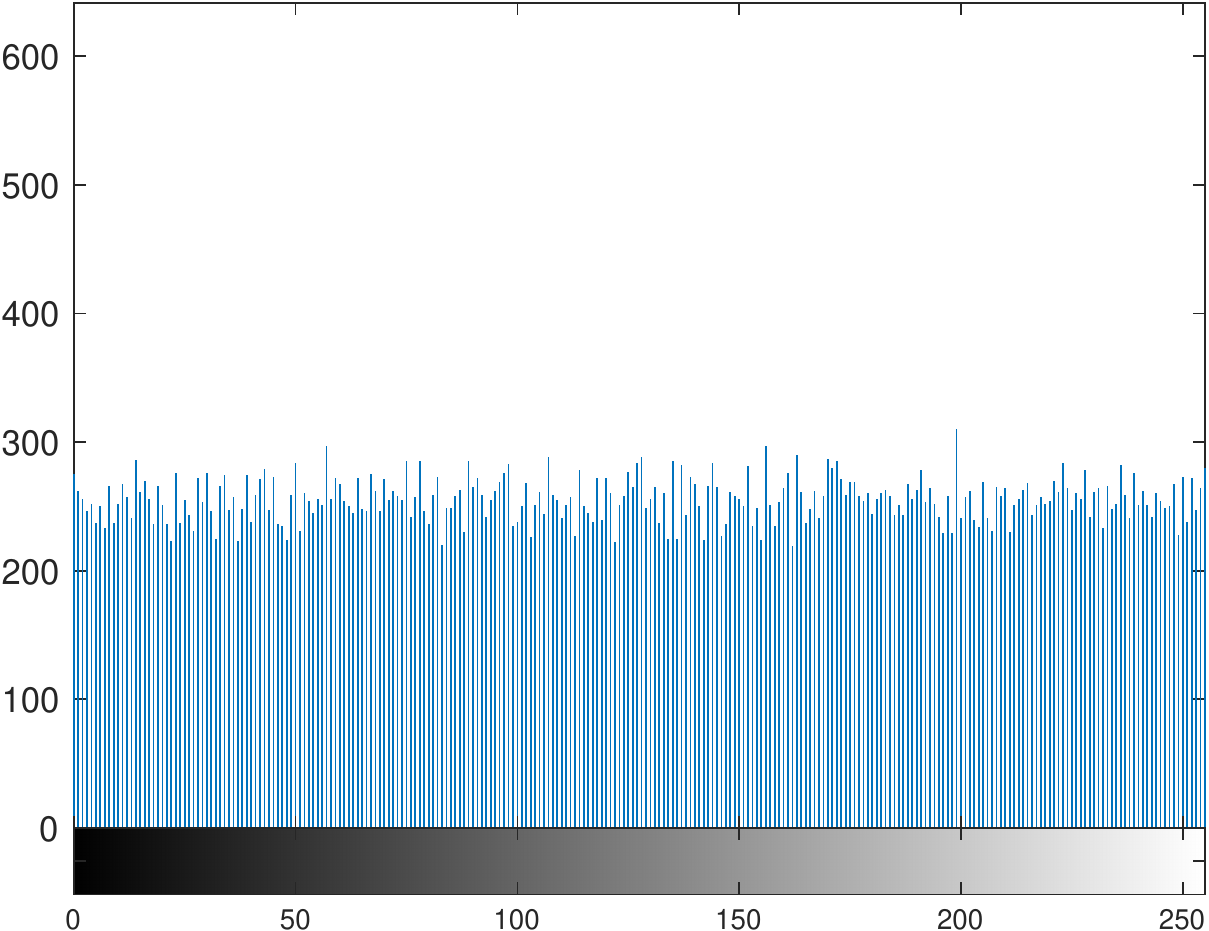}}
	\subfigure[]{\label{fig:hist_gray_o}
		\includegraphics[width=0.22\textwidth]{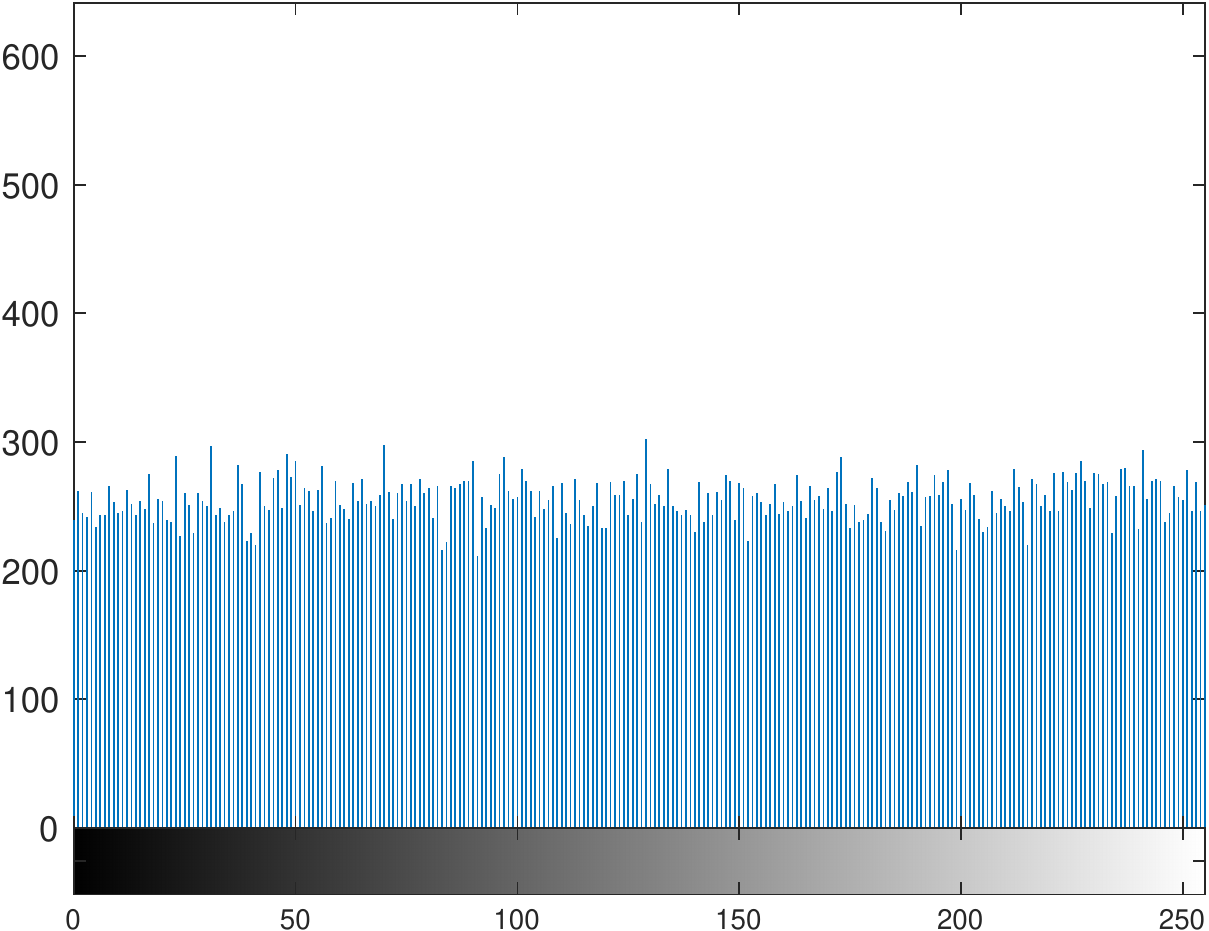}}
	\subfigure[]{\label{fig:hist_gray_p}
		\includegraphics[width=0.22\textwidth]{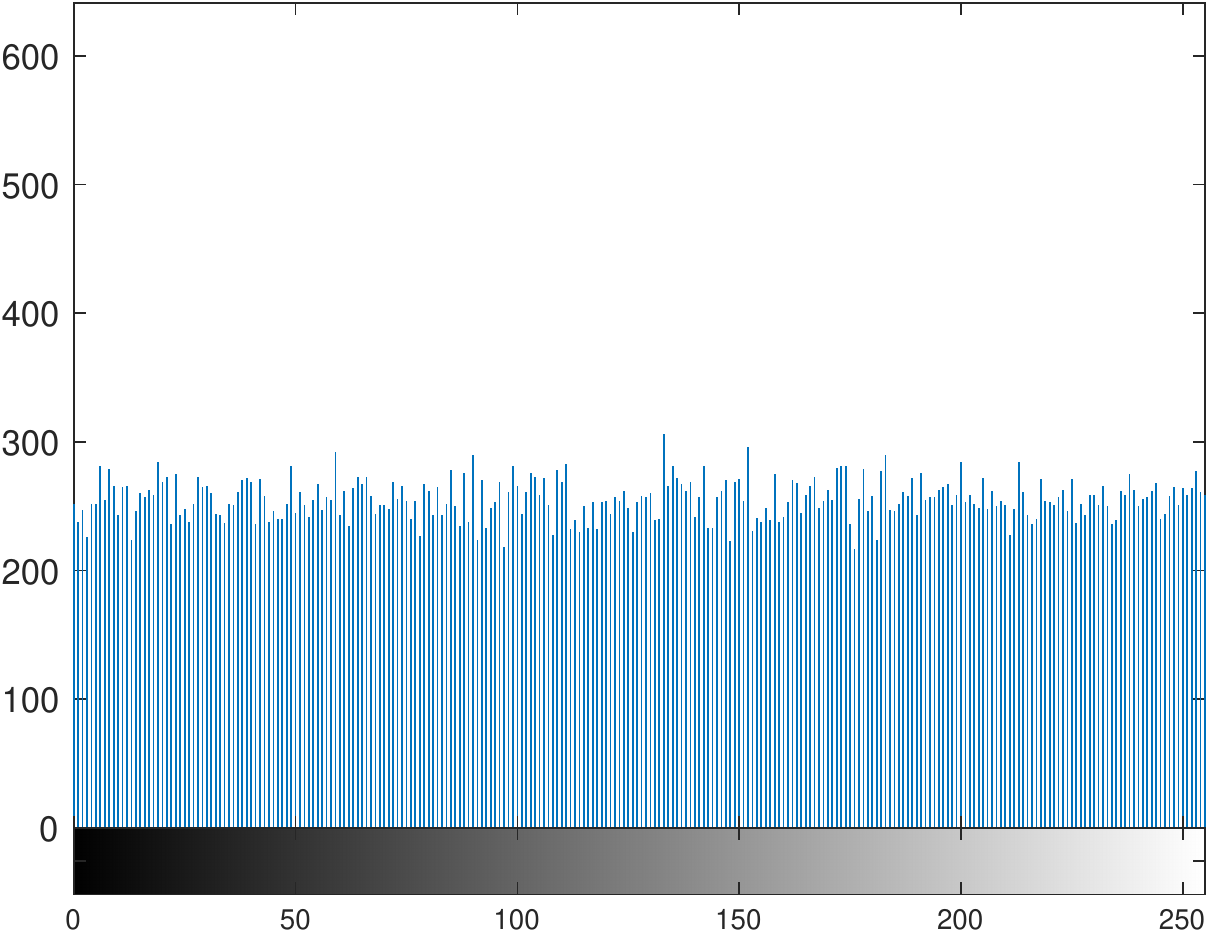}}
	\caption{Histogram analysis: (a) Lena; (b) Cameraman; (c) Peppers; (d) Starfish; (e) Histogram of (a); (f) Histogram of (b); (g) Histogram of (c); (h) Histogram of (d); (i) Encrypted Lena; (j) Encrypted Cameraman; (k) Encrypted Peppers; (l) Encrypted Starfish; (m) Histogram of (i); (n) Histogram of (j); (o) Histogram of (k); (p) Histogram of (l);}
	\label{fig:hist_gray}
\end{figure}

Figure \ref{fig:hist_color} illustrates the histogram of two color images as well as their ciphers. As seen in Figure \ref{fig:hist_color}, all the three channels of color images exhibit characteristics in their histograms. After encryption, the pixels in all three channels are evenly distributed, as shown in Figure \ref{fig:hist_color_f}-\ref{fig:hist_color_h} and  \ref{fig:hist_color_n}-\ref{fig:hist_color_p}. Therefore, The proposed algorithm is also applicable for color images.
\begin{figure}[htbp]
	\centering
	\subfigure[]{
		\includegraphics[width=0.22\textwidth]{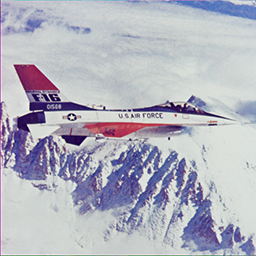}\label{fig:hist_color_a}}
	\subfigure[]{
		\includegraphics[width=0.22\textwidth]{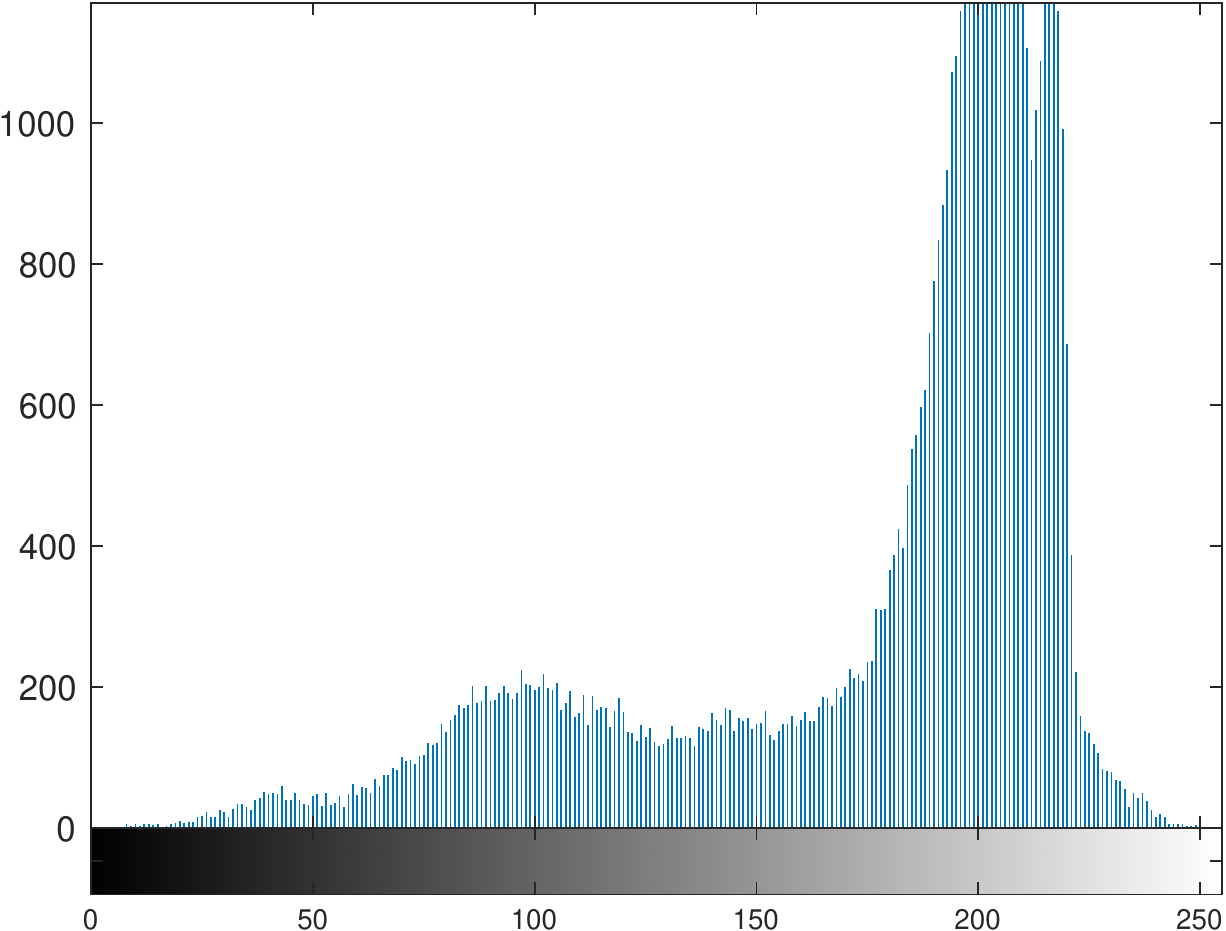}\label{fig:hist_color_b}}
	\subfigure[]{
		\includegraphics[width=0.22\textwidth]{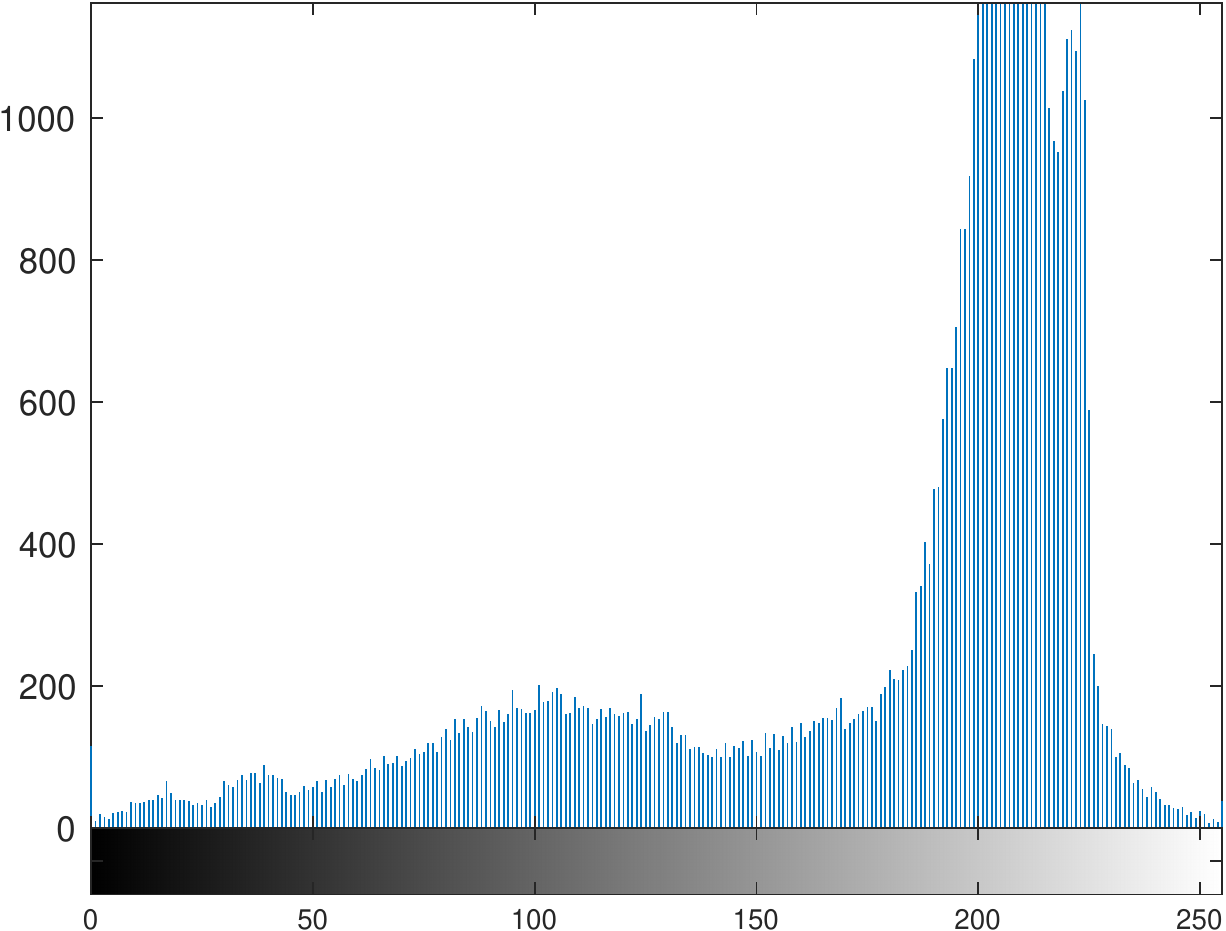}\label{fig:hist_color_c}}
	\subfigure[]{
		\includegraphics[width=0.22\textwidth]{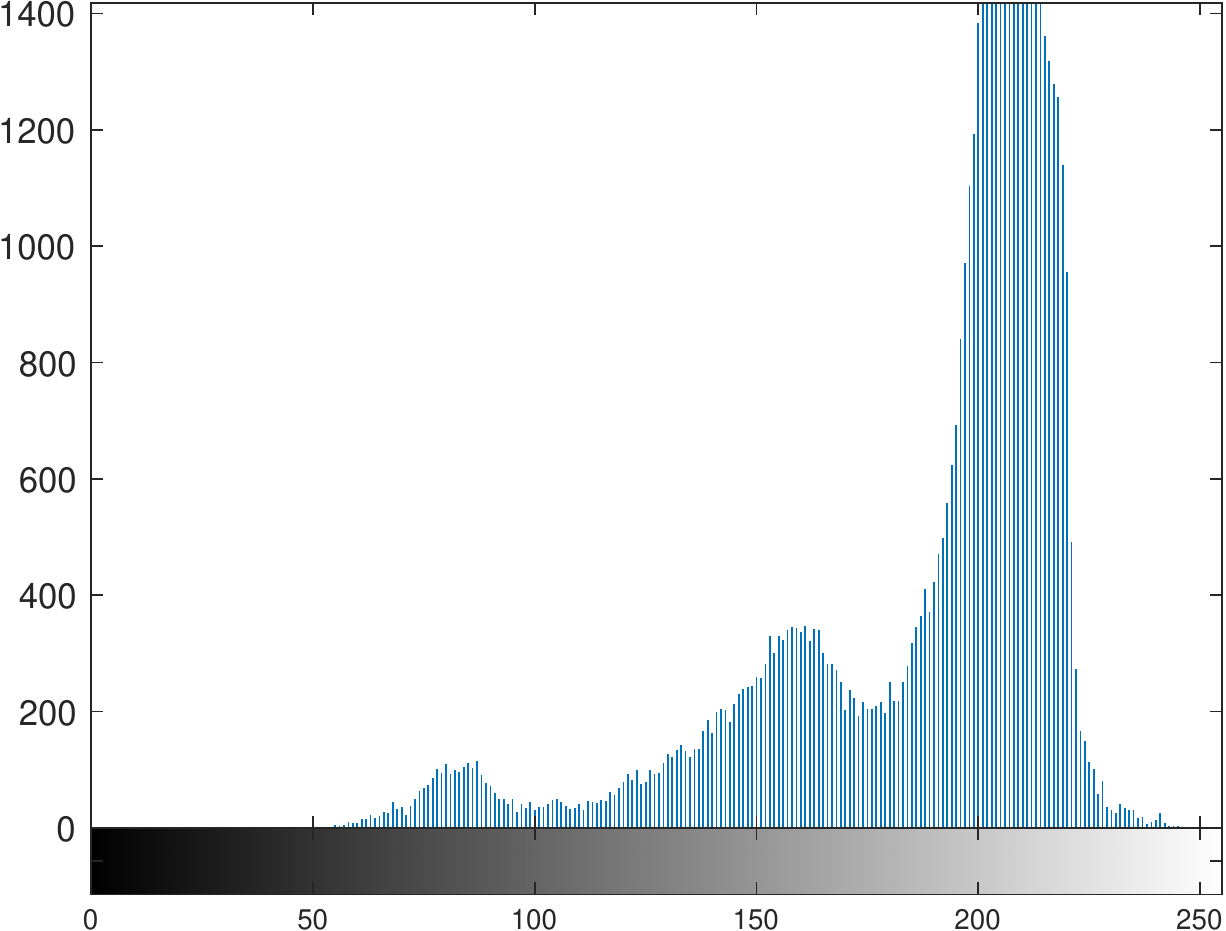}\label{fig:hist_color_d}}
	\subfigure[]{
		\includegraphics[width=0.22\textwidth]{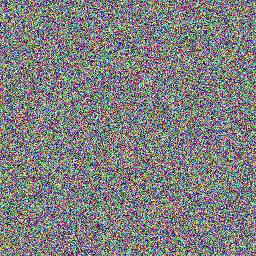}\label{fig:hist_color_e}}
	\subfigure[]{
		\includegraphics[width=0.22\textwidth]{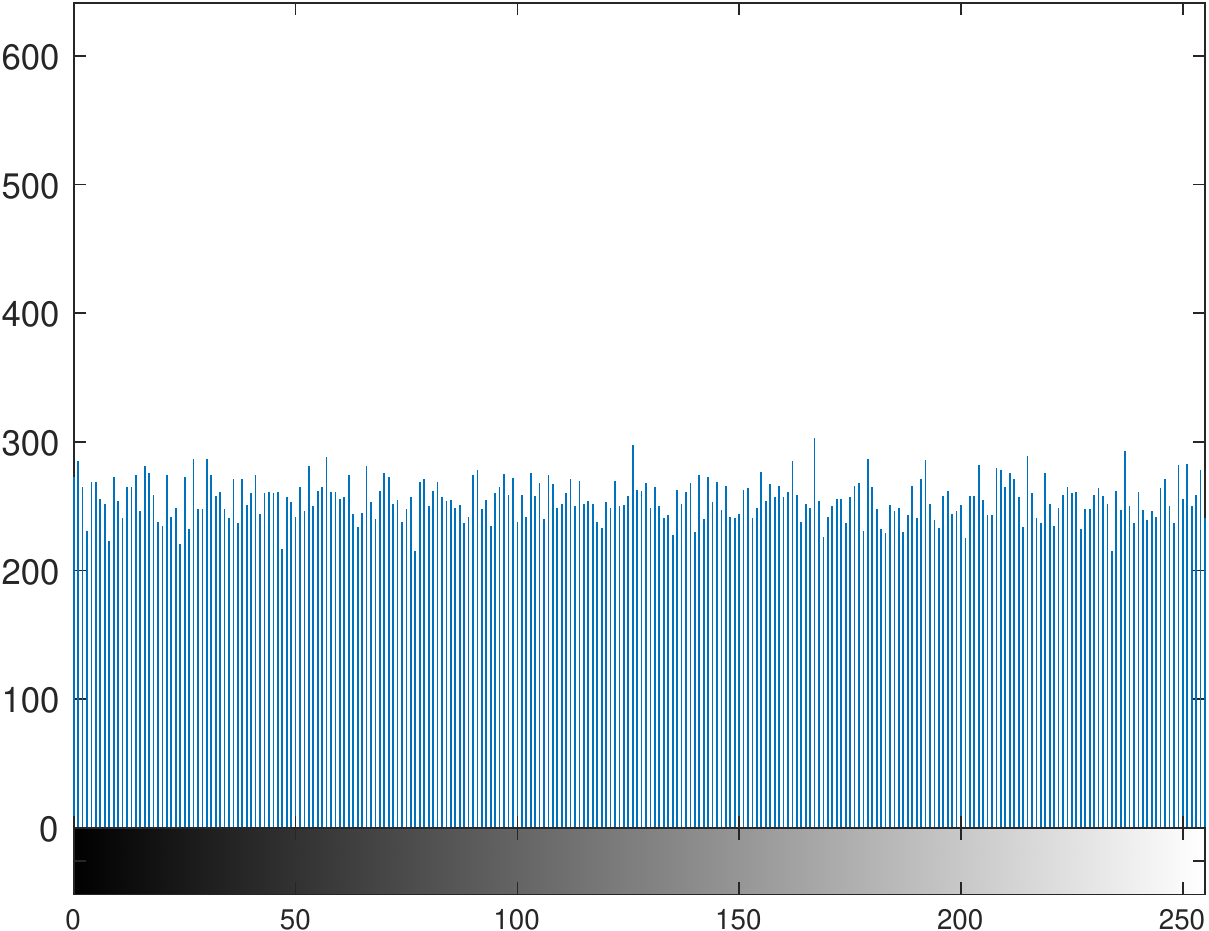}\label{fig:hist_color_f}}
	\subfigure[]{
		\includegraphics[width=0.22\textwidth]{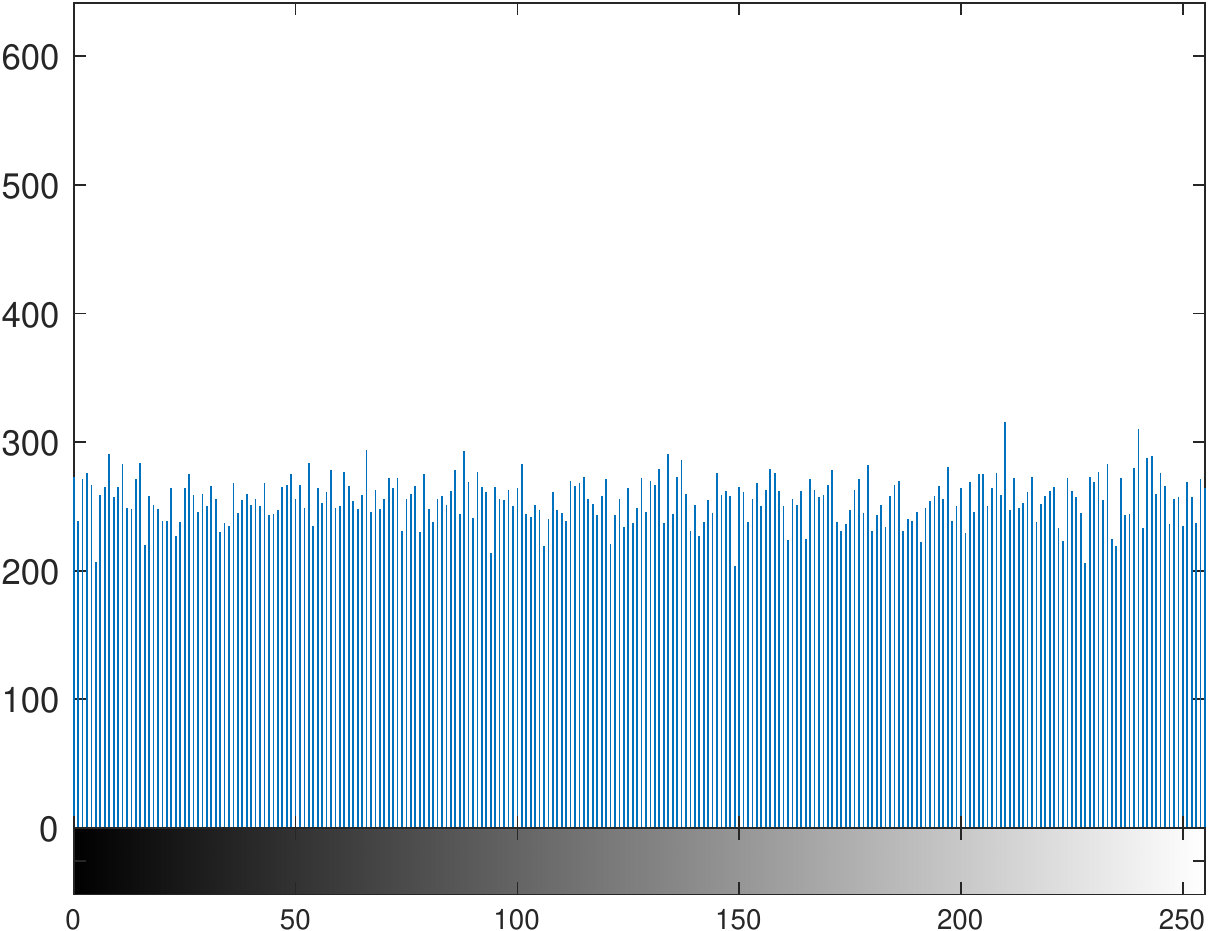}\label{fig:hist_color_g}}
	\subfigure[]{
		\includegraphics[width=0.22\textwidth]{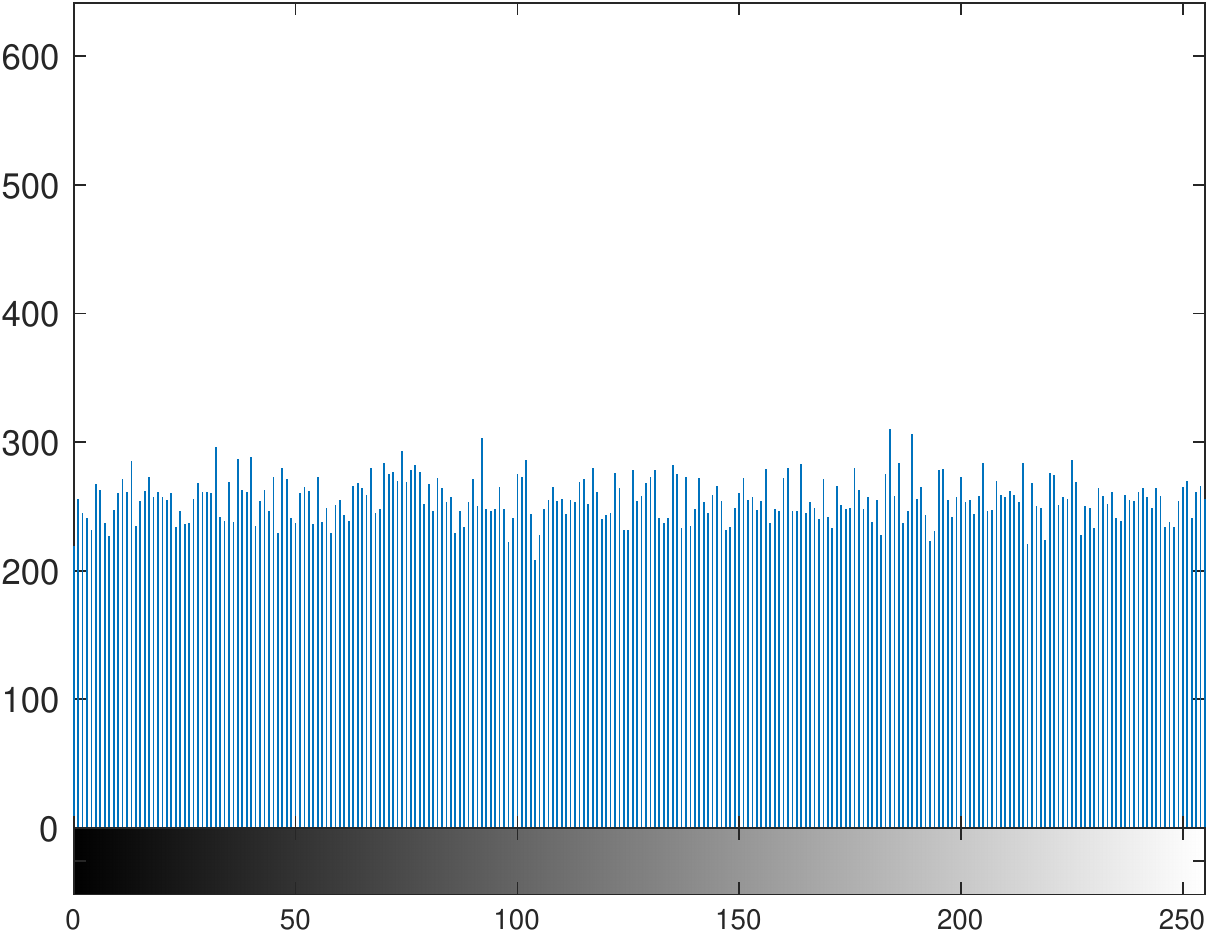}\label{fig:hist_color_h}}
	\subfigure[]{
		\includegraphics[width=0.22\textwidth]{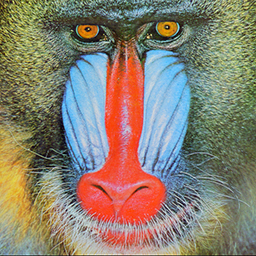}\label{fig:hist_color_i}}
	\subfigure[]{
		\includegraphics[width=0.22\textwidth]{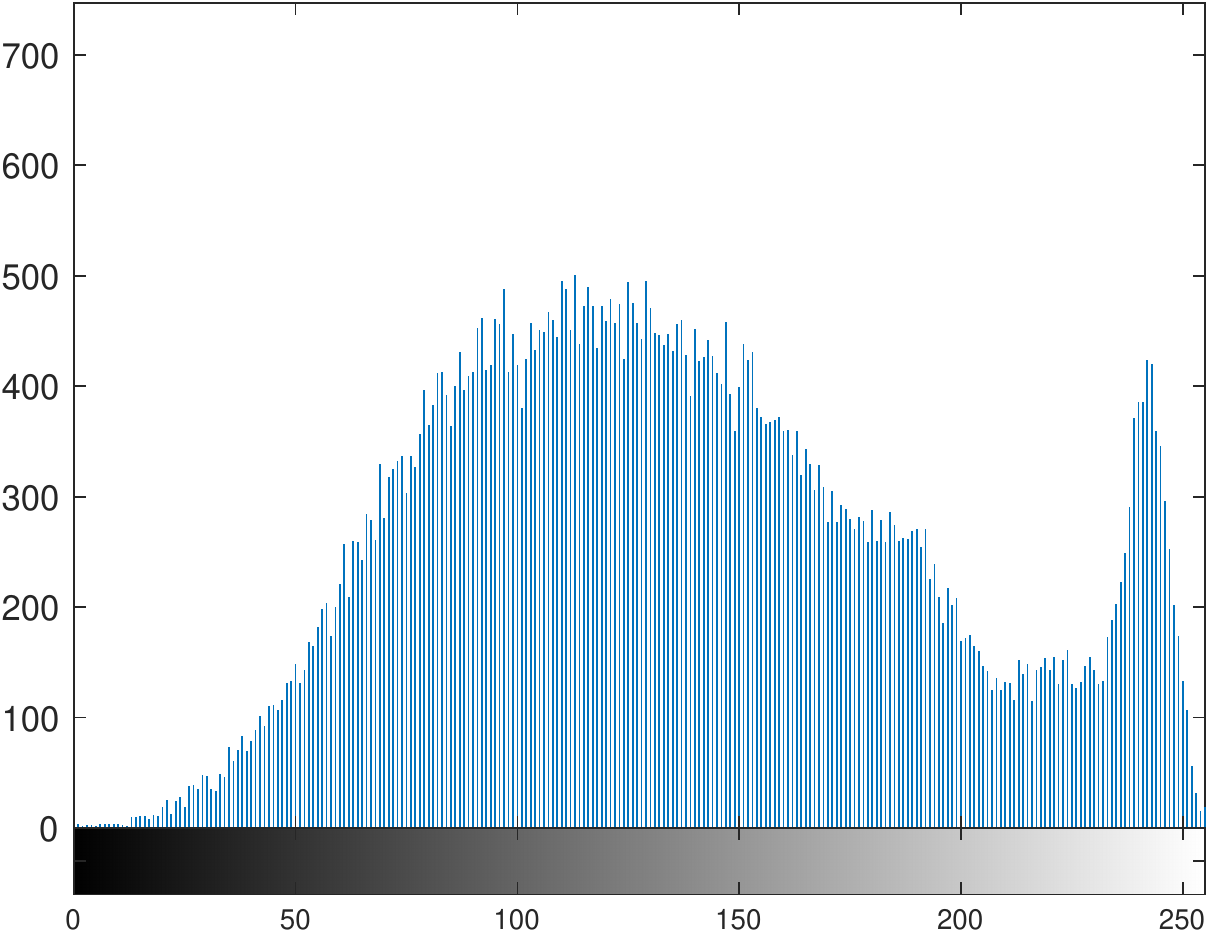}\label{fig:hist_color_j}}
	\subfigure[]{
		\includegraphics[width=0.22\textwidth]{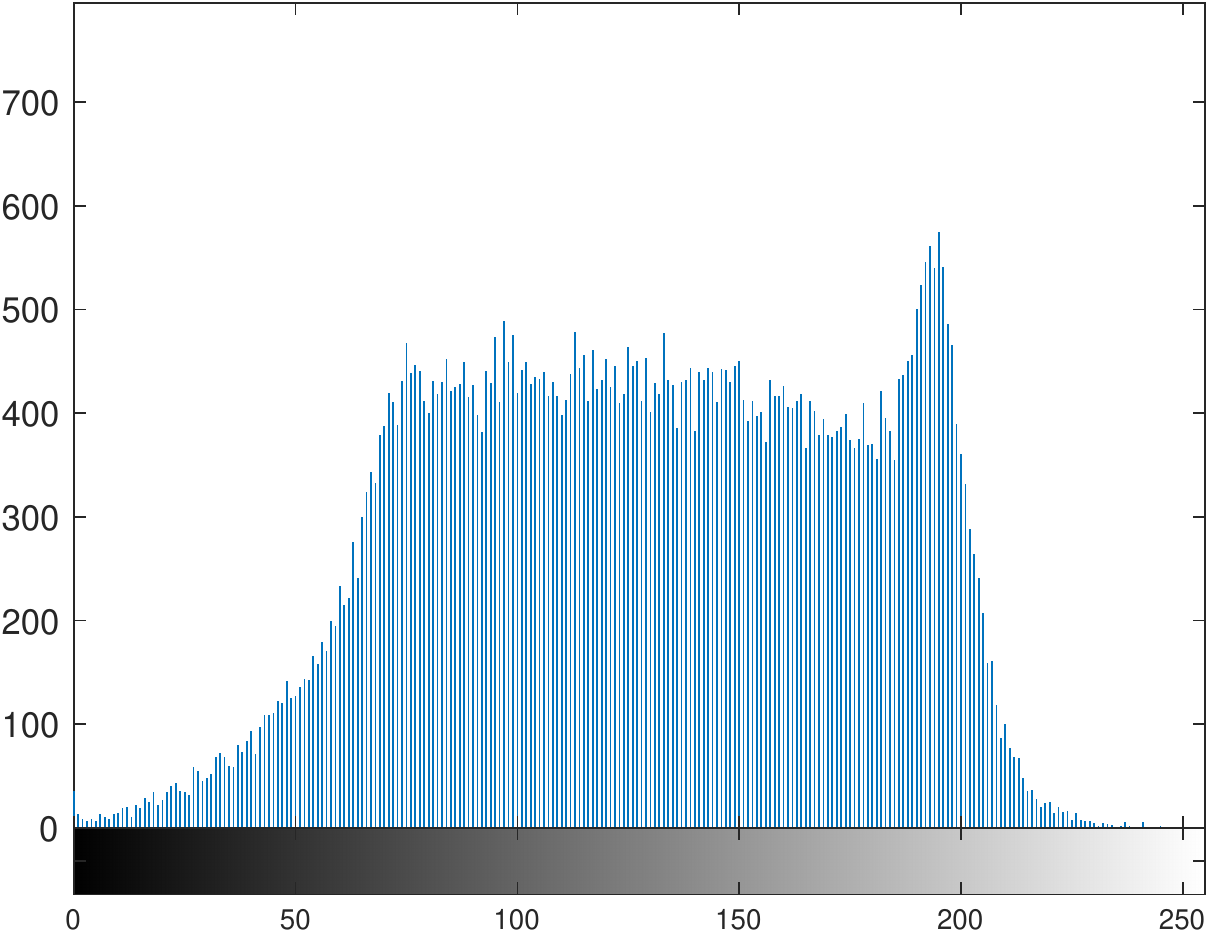}\label{fig:hist_color_k}}
	\subfigure[]{
		\includegraphics[width=0.22\textwidth]{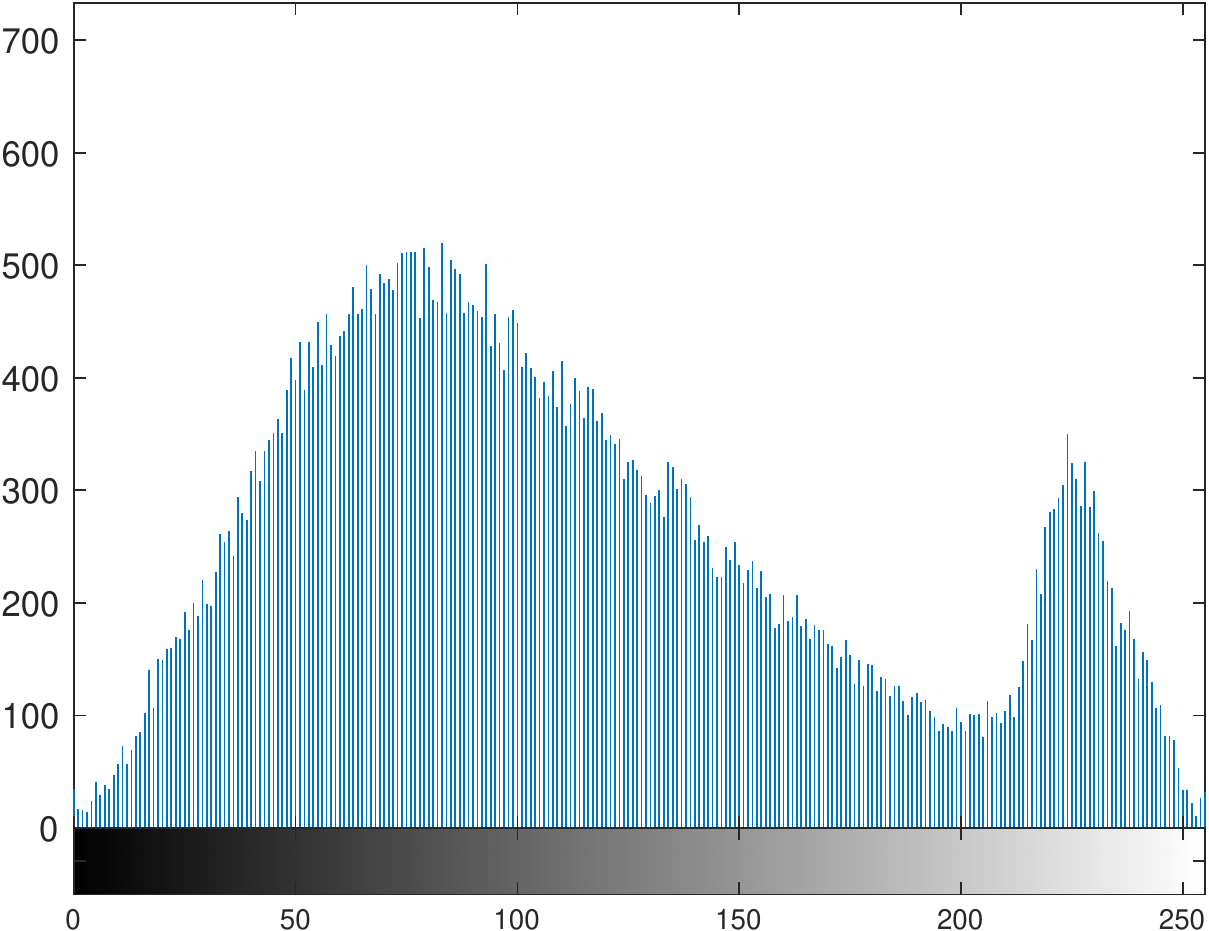}\label{fig:hist_color_l}}
	\subfigure[]{
		\includegraphics[width=0.22\textwidth]{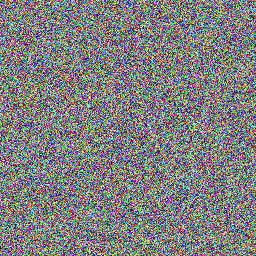}\label{fig:hist_color_m}}
	\subfigure[]{
		\includegraphics[width=0.22\textwidth]{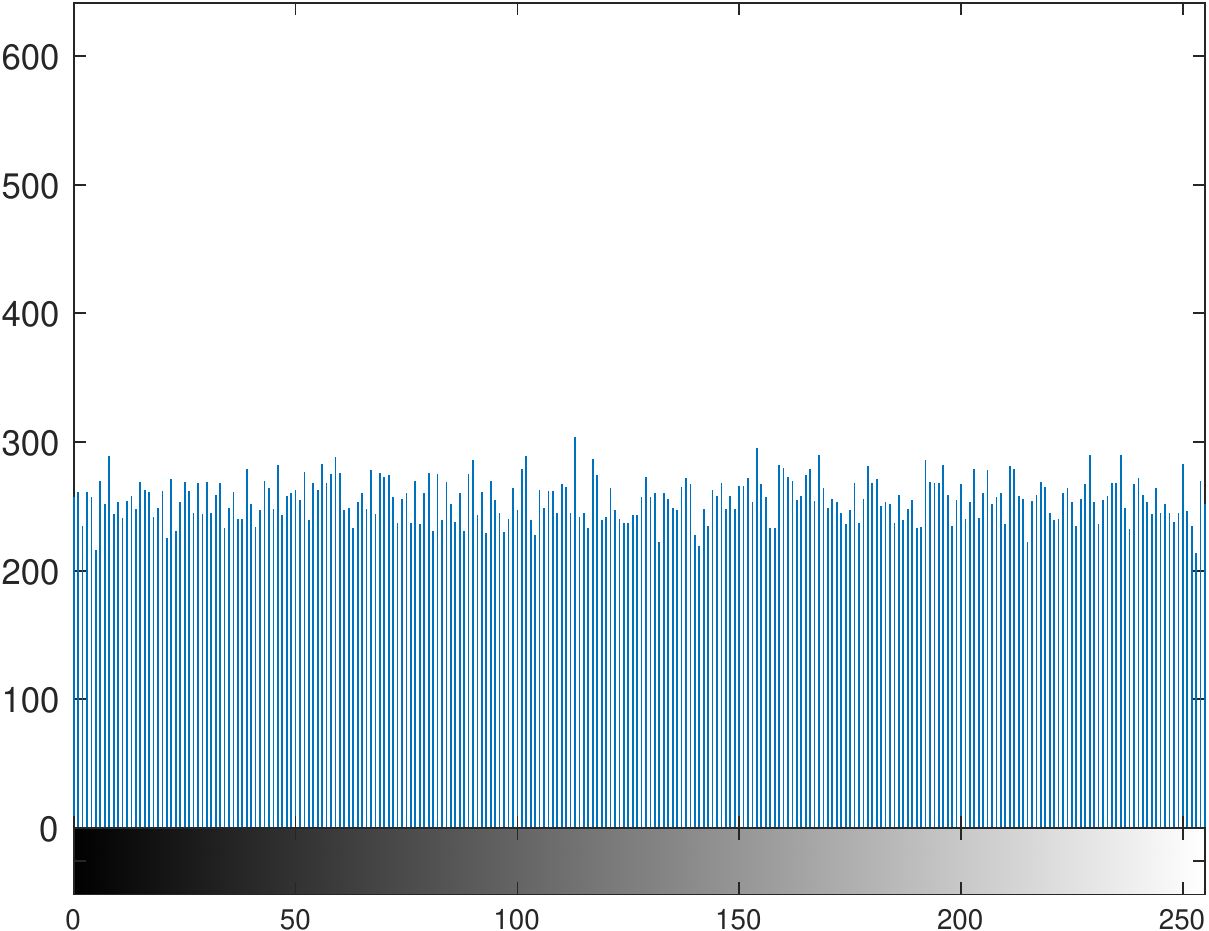}\label{fig:hist_color_n}}
	\subfigure[]{
		\includegraphics[width=0.22\textwidth]{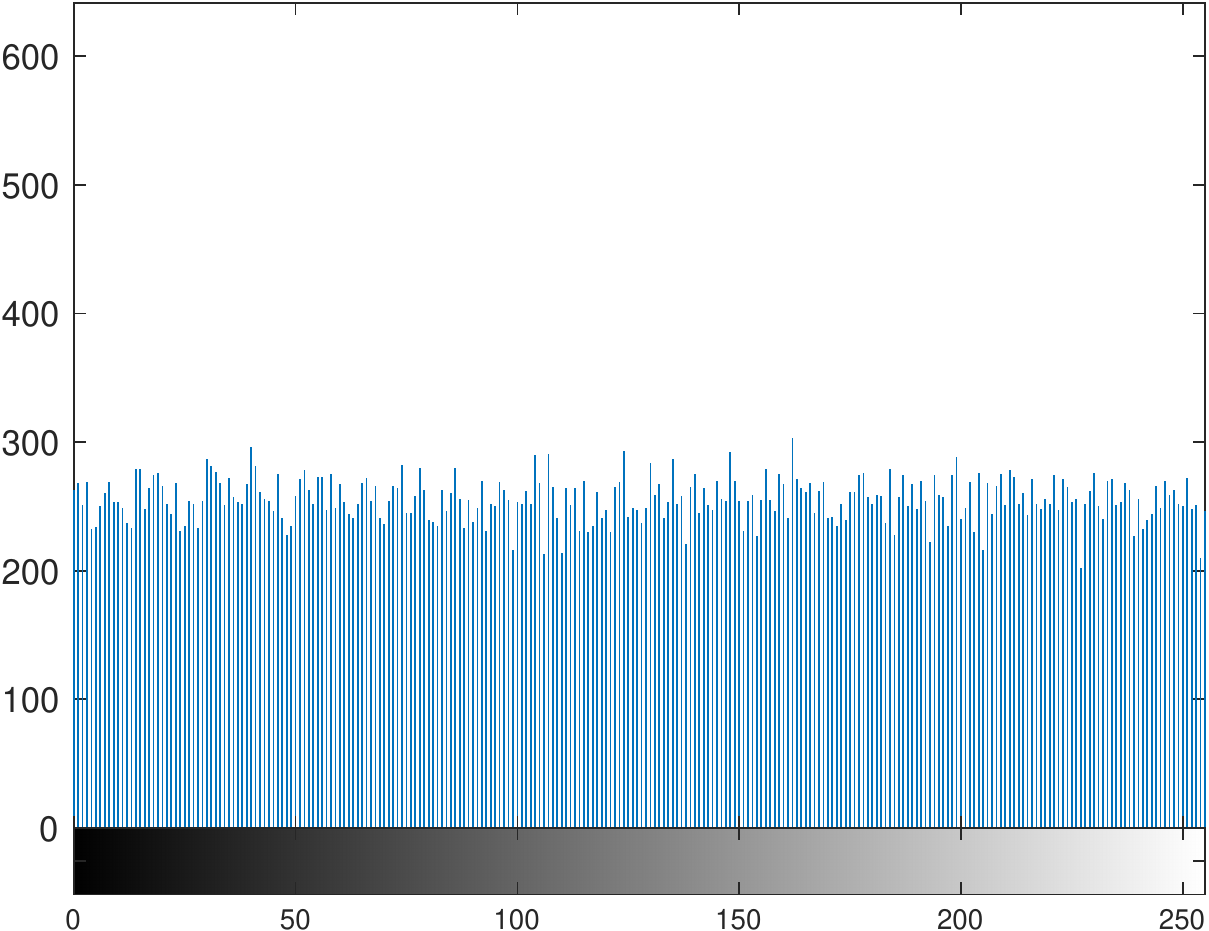}\label{fig:hist_color_0}}
	\subfigure[]{
		\includegraphics[width=0.22\textwidth]{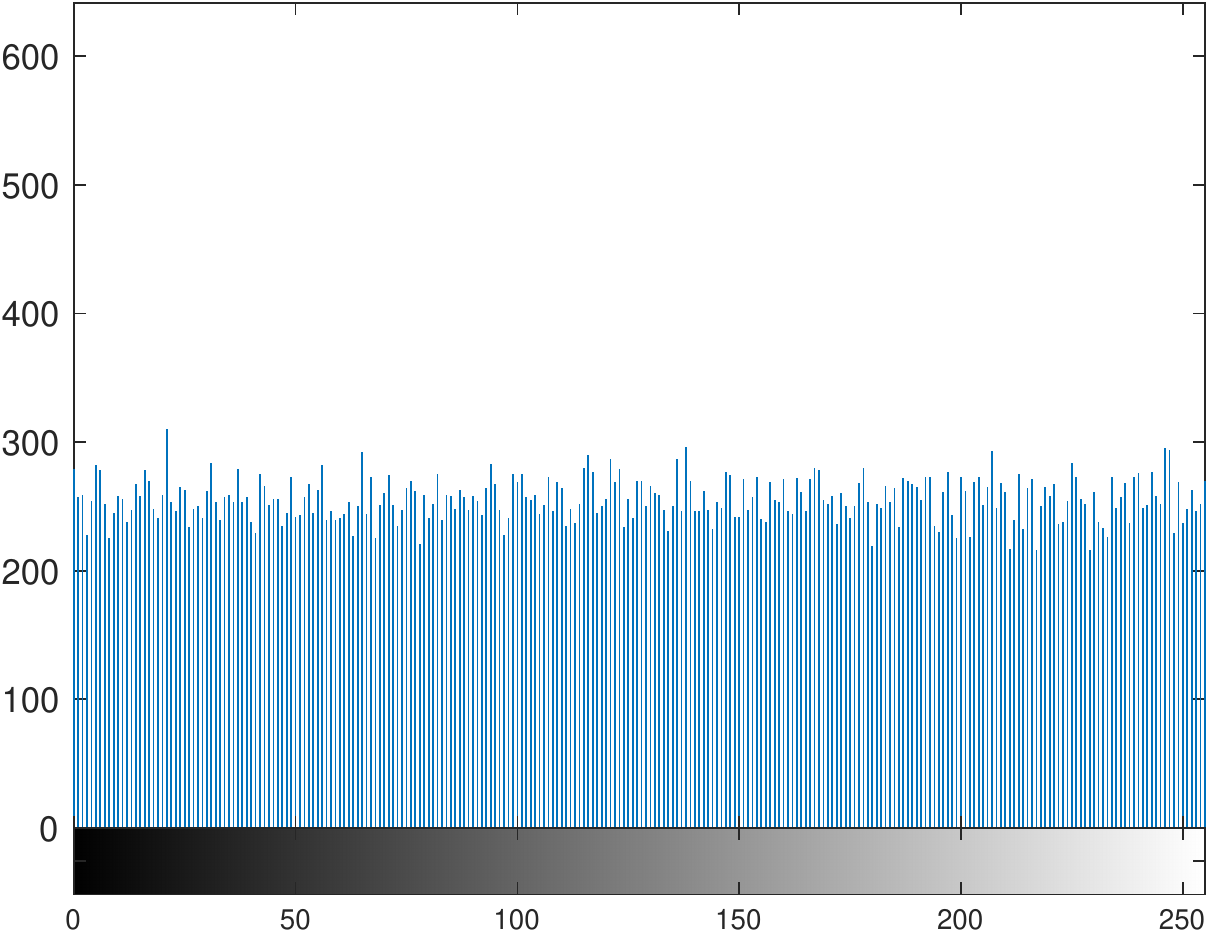}\label{fig:hist_color_p}}
	\caption{Histogram analysis: (a) Airplane; (b) Histogram of R channel of (a); (c) Histogram of G channel of (a); (d) Histogram of B channel of (a); (e) Encrypted Airplane; (f) Histogram of R channel of (e); (g) Histogram of G channel of (e); (h) Histogram of B channel of (e); (i) Baboo; (j) Histogram of R channel of (i); (k) Histogram of G channel of (i); (l) Histogram of B channel of (i); (m) Encrypted Baboo; (n) Histogram of R channel of (m); (o) Histogram of G channel of (m); (p) Histogram of B channel of (m);}
	\label{fig:hist_color}
\end{figure}

The histograms presents the distribution of pixels intuitively. In order to quantitatively examine whether the distribution of pixels in cipher is exact uniform, a $\chi^2$-test is performed on encrypted images \cite{zhangUnifiedImageEncryption2018}. For a 8-bit $M\times N$ grayscale with the hypothesis of uniform distribution, the test statistic is calculated as:
\begin{equation}
	\chi^2=\sum_{i=0}^{255}\frac{(f_i-f_e)^2}{f_e},
\end{equation}
where $f_e=MN/256$, $f_i$ is the occurrence frequency of gray level $i$ in image. $\chi^2$ has a $\chi^2$ distribution of 255 degrees of freedom.

Set the significant level as $\alpha=0.05$, then we may accept the hypothesis if $\chi^2 < \chi_{0.05}^2(255)=293.2478$. Otherwise we reject the hypothesis. The $\chi^2$-test values of six images as well as their ciphers are listed as Table \ref{tab:chi2_test}. As we can see, The plaintext images have $\chi^2$ values far larger than $\chi_{0.05}^2(255)$ while all the ciphers have $\chi^2$ values less than $293.2478$. That is, all ciphertext images pass the test and we can confidently conclude the pixels in encrypted images are evenly distributed.

% Table generated by Excel2LaTeX from sheet 'Sheet1'
\begin{table}[htbp]
	\centering
	\caption{$\chi^2$ test for original images and encrypted images}
	\begin{tabular}{c|cccccccccccc}
		\toprule
		Image & \multicolumn{3}{c}{Lena} & \multicolumn{3}{c}{Cameraman} & \multicolumn{3}{c}{Peppers} & \multicolumn{3}{c}{Starfish} \\
		\midrule
		Original image & \multicolumn{3}{c}{11097} & \multicolumn{3}{c}{29979} & \multicolumn{3}{c}{36778} & \multicolumn{3}{c}{16292} \\
		Encrypted image & \multicolumn{3}{c}{267.57} & \multicolumn{3}{c}{250.41} & \multicolumn{3}{c}{210.09} & \multicolumn{3}{c}{256.22} \\
		\midrule
		Image & \multicolumn{6}{c}{Airplane}                  & \multicolumn{6}{c}{Baboon} \\
		Channels & \multicolumn{2}{c}{R} & \multicolumn{2}{c}{G} & \multicolumn{2}{c}{B} & \multicolumn{2}{c}{R} & \multicolumn{2}{c}{G} & \multicolumn{2}{c}{B} \\
		\midrule
		Original image & \multicolumn{2}{c}{15362} & \multicolumn{2}{c}{15059} & \multicolumn{2}{c}{25618} & \multicolumn{2}{c}{23763} & \multicolumn{2}{c}{35738} & \multicolumn{2}{c}{20450} \\
		Encrypted image & \multicolumn{2}{c}{280.55} & \multicolumn{2}{c}{233.14} & \multicolumn{2}{c}{258.70} & \multicolumn{2}{c}{279.74} & \multicolumn{2}{c}{253.49} & \multicolumn{2}{c}{212.15} \\
		\bottomrule
	\end{tabular}%
	\label{tab:chi2_test}%
\end{table}%

\begin{figure}[htbp]
	\centering
	\subfigure[$x_1$]{
		\includegraphics[width=0.3\textwidth]{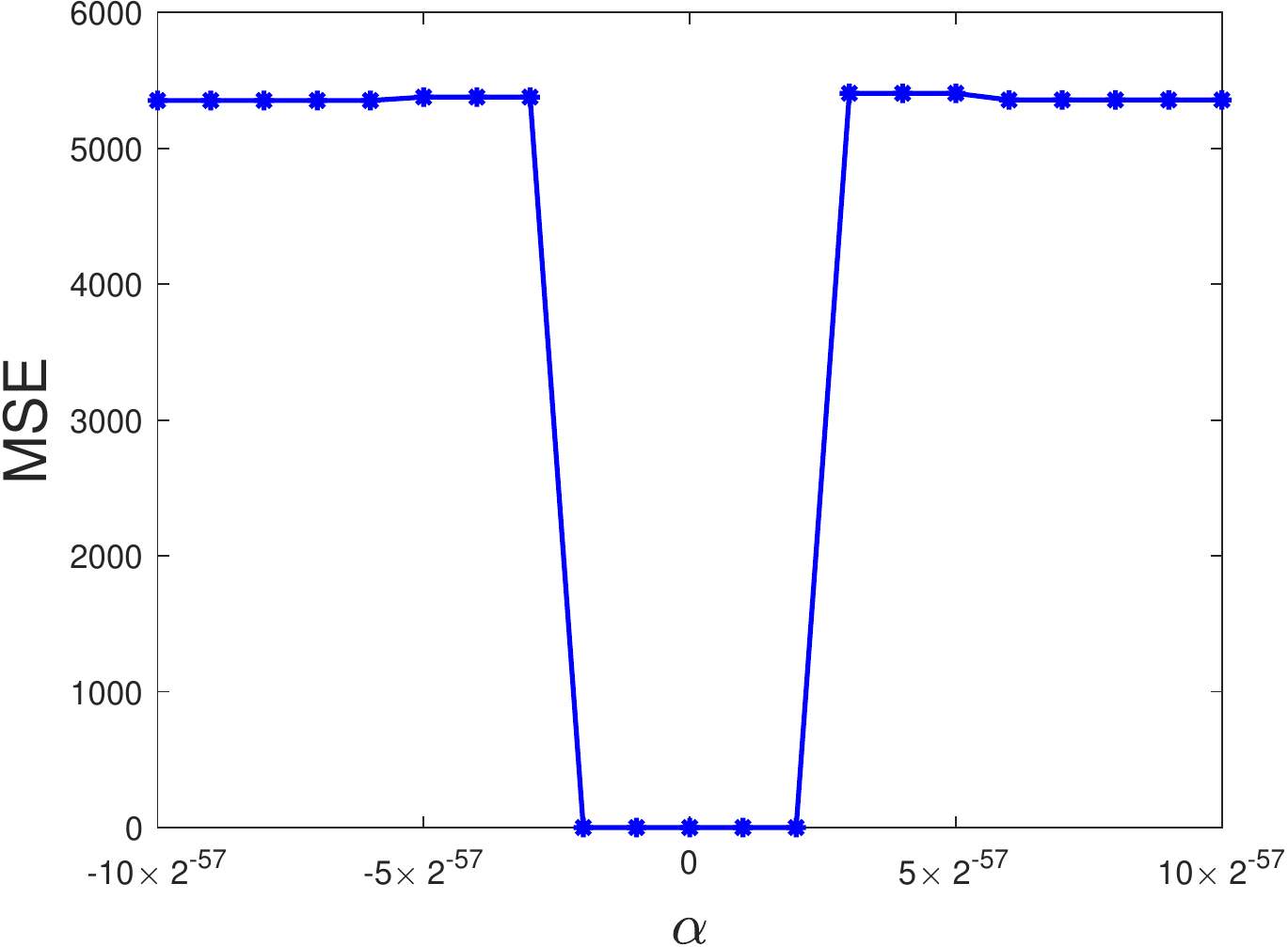}}
	\subfigure[$r_1$]{
		\includegraphics[width=0.3\textwidth]{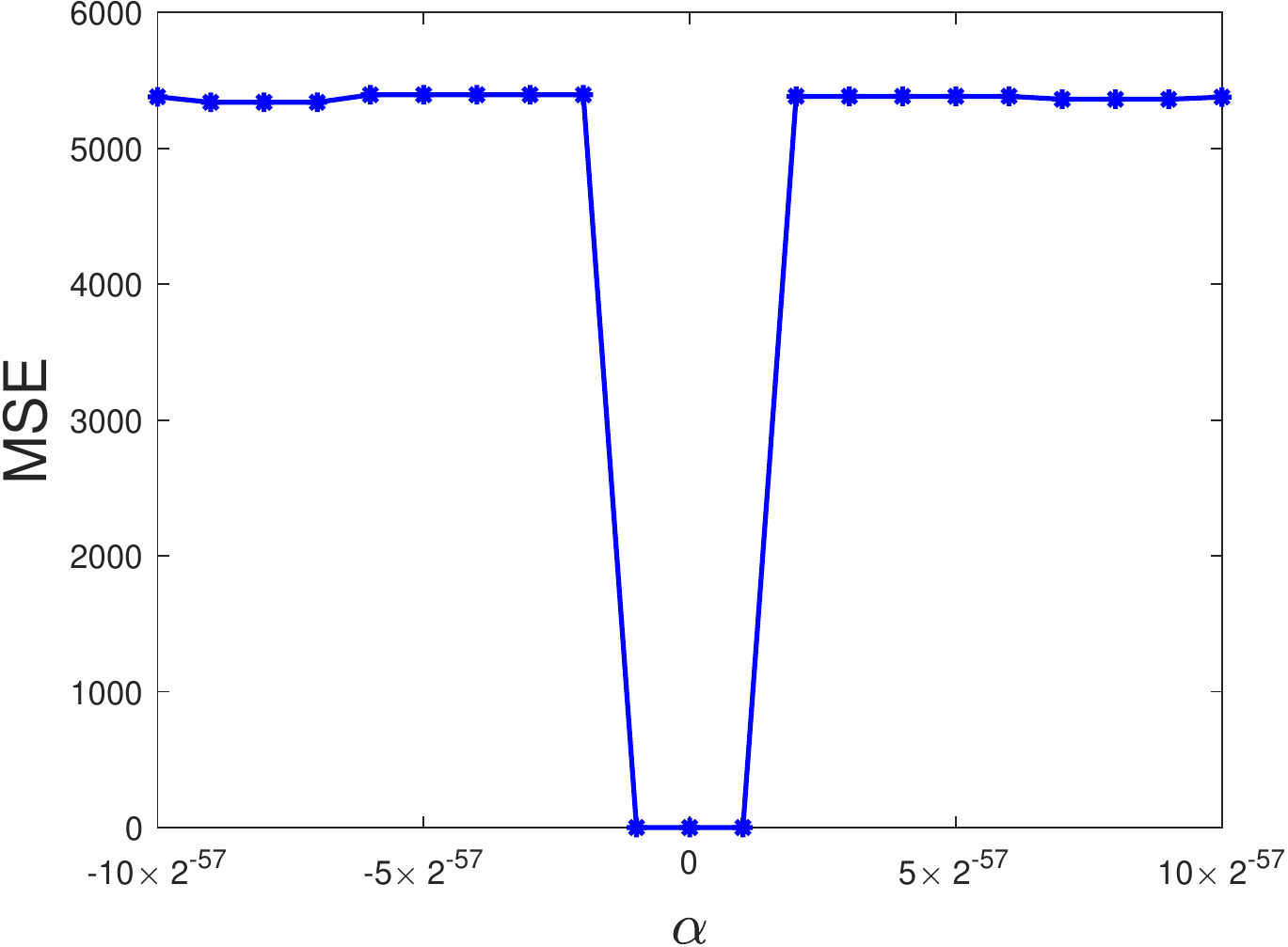}}
	\subfigure[$x_2$]{
		\includegraphics[width=0.3\textwidth]{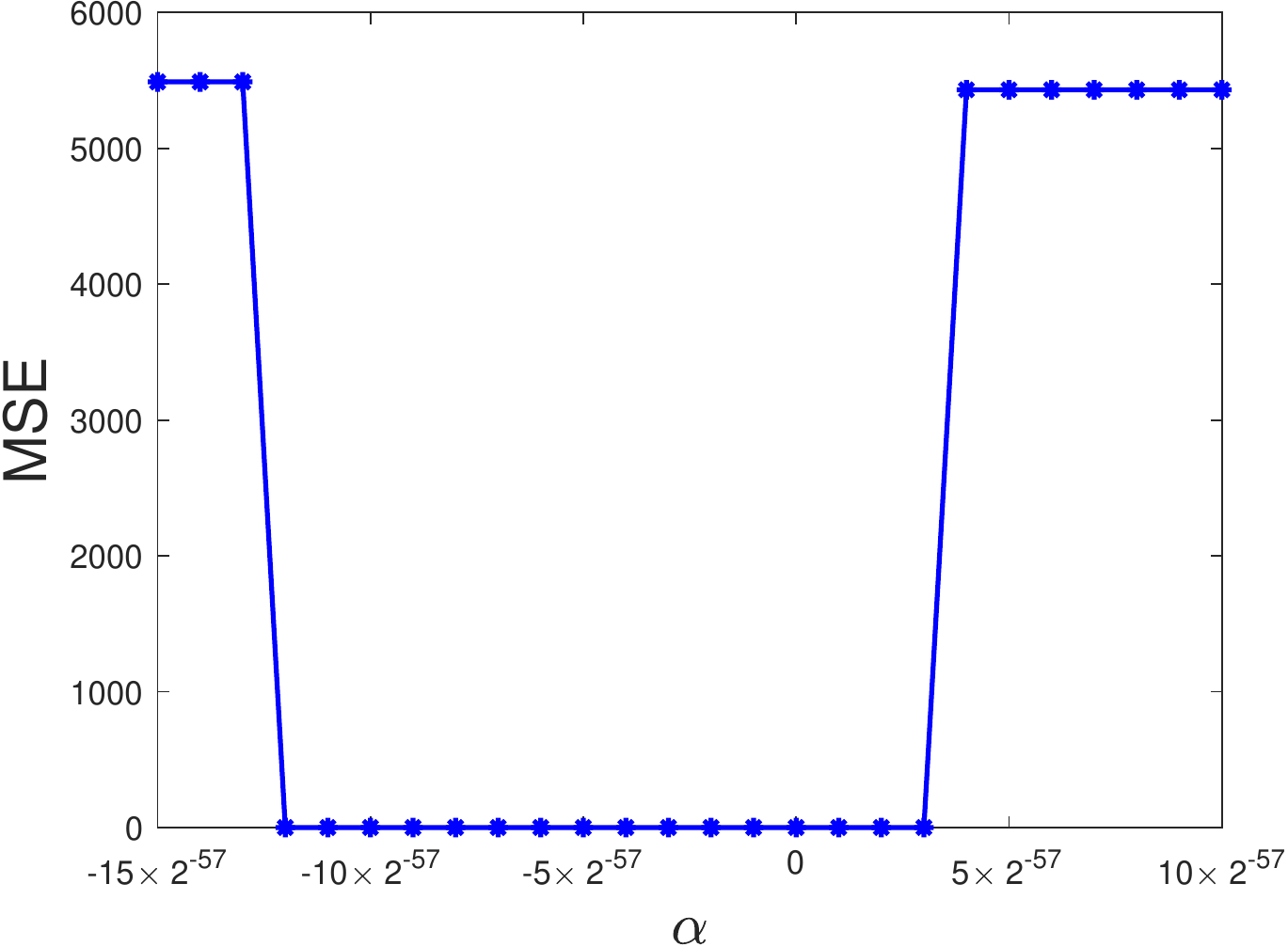}}
	\subfigure[$r_2$]{
		\includegraphics[width=0.3\textwidth]{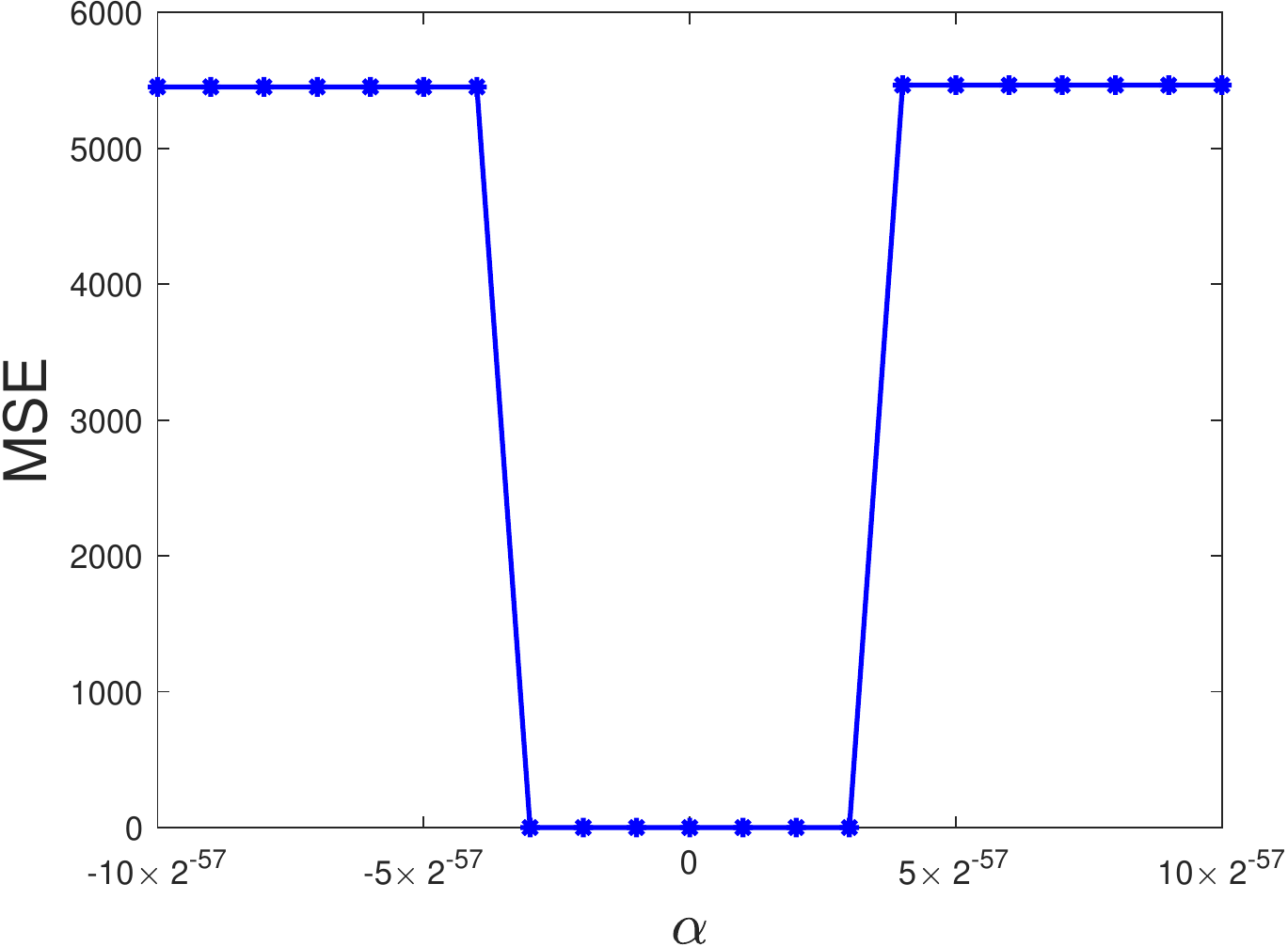}}
	\subfigure[$x_3$]{
		\includegraphics[width=0.3\textwidth]{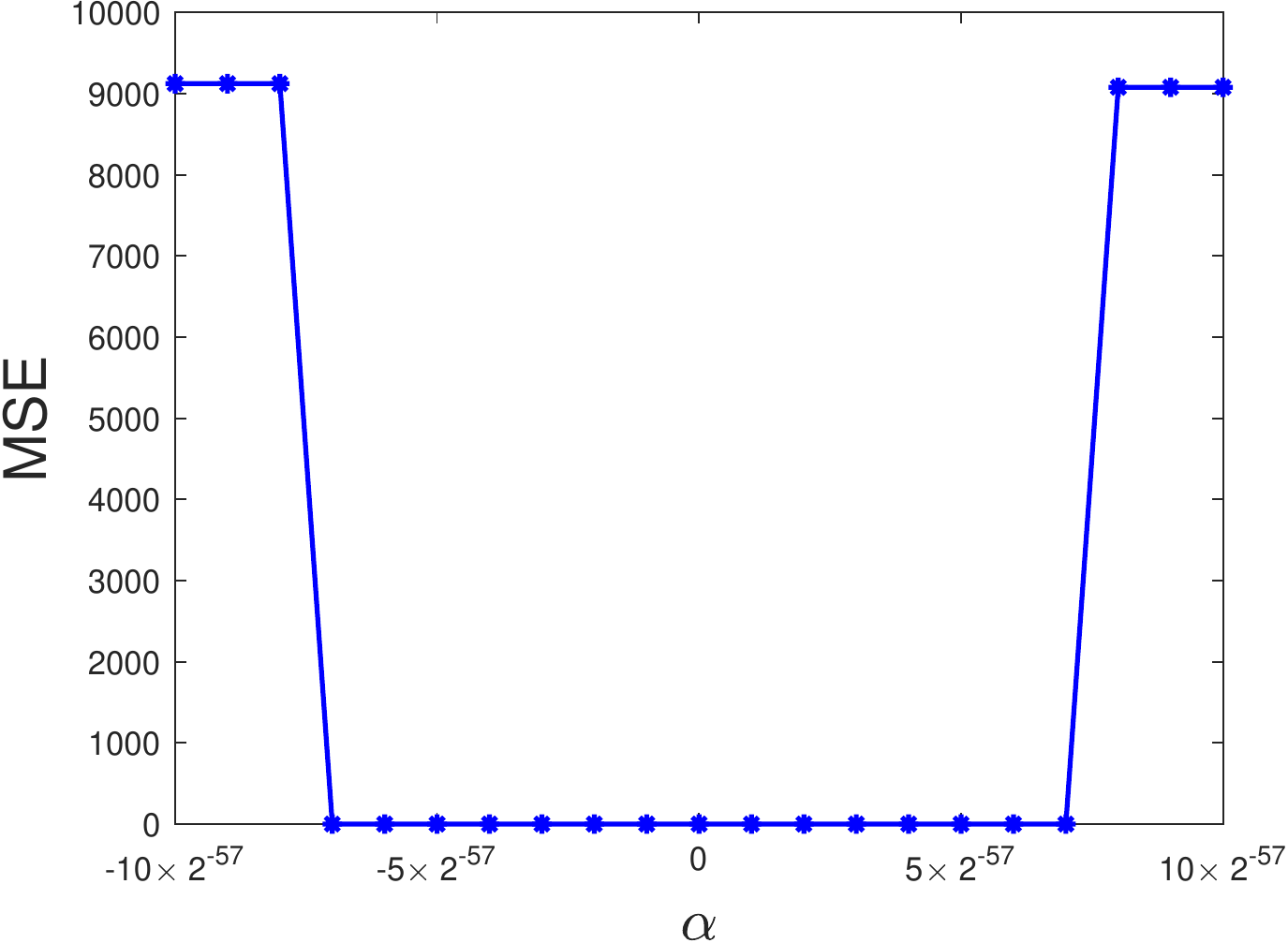}}
	\subfigure[$r_3$]{
		\includegraphics[width=0.3\textwidth]{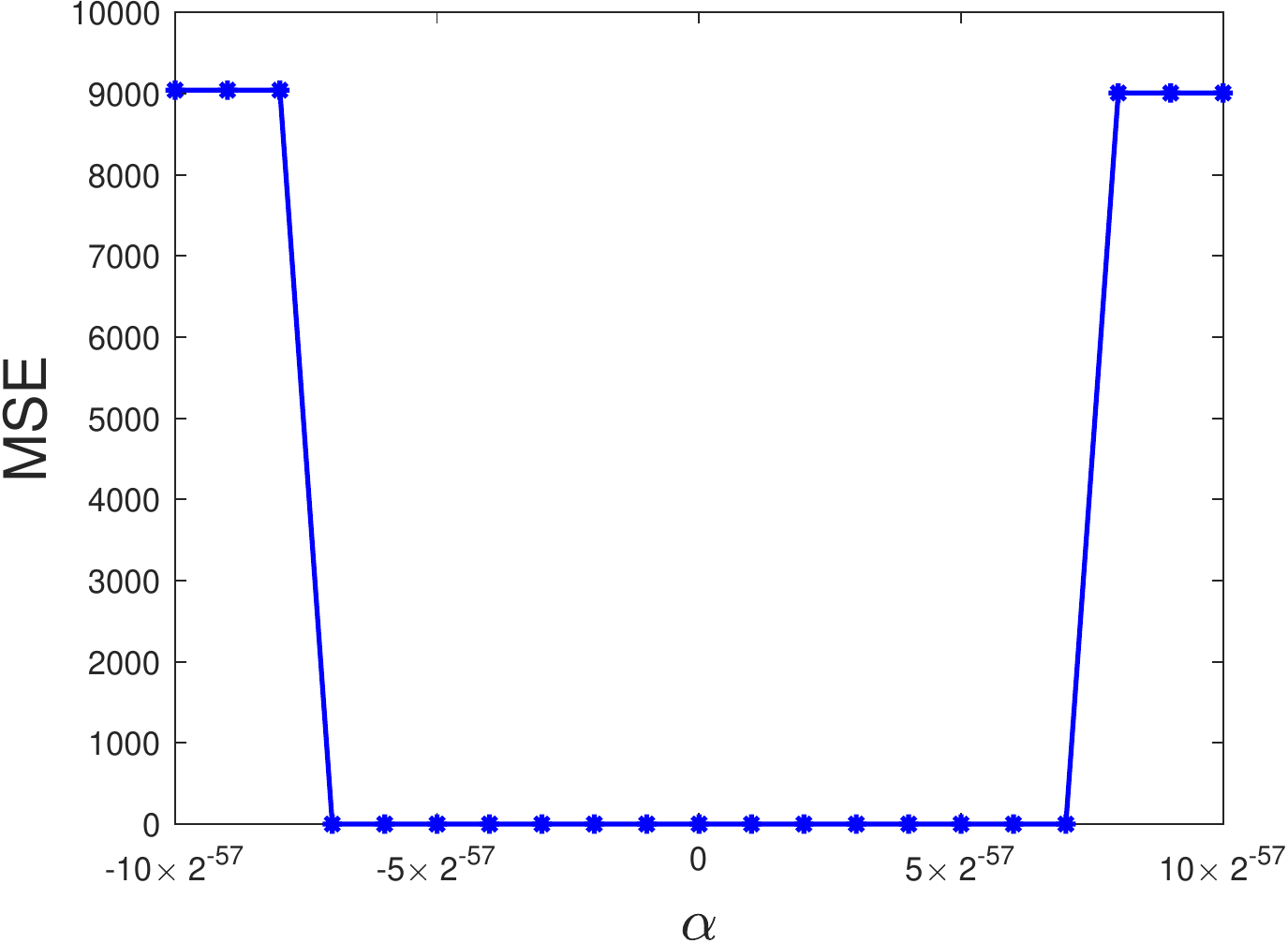}}
	\caption{MSEs between decrypted image and original image with the subtle change $\alpha$ on various keys}
	\label{fig:key_sensitivity}
\end{figure}

\subsection{Key space and key sensitivity}
Key space size is the total number of the different key groups that can be used in encryption. The key space should be large enough to resist the brute-force attack. Key sensitivity tells us how slightly a key changes will make difference to the encryption system. In our proposed algorithm, the secret keys include $x_1$, $r_1$ $x_2$, $r_2$, $x_3$, $r_3$. The range of these keys are $(0,1)$.

In order to test the sensitivities of these keys, one common approach is to decrypt cipher image with a key slightly changed by $\alpha$ (i.e. change $x_1$ to $x_1+\alpha$). Then we adjust the change intensity $\alpha$ until we decrypt the correct plaintext image. We use mean square error (MSE) to assess the difference between decrypted image and original image with the variety of $\alpha$. The MSE of two $M\times N$ images $F_1$, $F_2$ is defined as:
\begin{equation}
	{\rm MSE}=\frac{1}{MN}\sum_{i=1}^M{\sum_{j=1}^N{\left( F_1\left( i, j \right) -F_2\left( i, j \right) \right) ^2}}
\end{equation}
MSE $=0$ means that the two images are exactly the same. Considering that the secret keys in our proposed encryption scheme are mutual independent, we can test the key sensitivity by changing one key and leave others unchanged.

The $\alpha$-MSE plots of six keys are shown as Figure \ref{fig:key_sensitivity}. Form these diagrams we obtain the sensitivities of secrete key listed as Table \ref{tab:key_sensitivity}. Since the typical value of smallest computational precision is $2^{-52}$ according to IEEE floating-point standard \cite{kahanIEEEStandard7541996}, the total number of different value for each key above is more than $2^{52}$ as it is sensitive enough. Therefore, the key space size of the proposed encryption scheme is more than $2^{52\times 6} = 2^{312}$, which is completely enough to resist all kinds of brute-force attack.

%%%可以补充微小改变的密钥无法正确解密的图像

\begin{table}[htbp]
	\centering
	\caption{Sensitivities of secret keys}
	\begin{tabular}{c|cccccc}
		\toprule
		Keys &$x_1$ &$r_1$ &$x_2$ &$r_2$ &$x_3$ &$r_3$ \\
		\midrule
		Sensitivity &$2^{-55}$ &$2^{-56}$ &$2^{-53}$ &$2^{-54}$ &$2^{-54}$ &$2^{-54}$ \\
		\bottomrule
	\end{tabular}
	\label{tab:key_sensitivity}
\end{table}

\subsection{Resistance to differential attack}
By constructing special plain image to encrypt and find the correspondence among pixels, the chosen plaintext attack is a powerful attack technique. However, most chosen plaintext attack will be useless faced with our proposed encryption algorithm since we will get totally different cipher images once the plaintext image changes slightly. When we say "totally different", we tend to use the Number of Pixel Change Rate (NPCR) and the Unified Average Changing Intensity (UACI) to measure the difference between two images. Let $F_1\left(i,j\right)$ and $F_2\left(i,j\right)$ be the $\left(i,j\right)th$ pixel of two image $F_1$ and $F_2$ with size $M \times N$,respectively, the definitions of NPCR and UACI are as follows \cite{wuNPCRUACIRandomness2011}:
\begin{equation}
\label{eq:NPCR}
NPCR=\frac{\sum_{i,j}D\left(i,j\right)}{MN} \times 100\%,
\end{equation}
where $D\left(i,j\right)$ is defined as
\begin{equation}
D\left(i,j\right)=\left\{
\begin{aligned}
&0, &F_1\left(i,j\right) = F_2\left(i,j\right), \\
&1, &F_1\left(i,j\right) \neq F_2\left(i,j\right), 
\end{aligned}\right.
\end{equation}

\begin{equation}
\label{eq:UACI}
UACI = \frac{1}{MN} \sum_{i,j} \frac{|F_1\left(i,j\right) - F_2\left(i,j\right)|}{L-1} \times 100\%,
\end{equation}
where $L$ denotes the largest supported gray value compatible with the image format.

For 8-bit grayscale images, it has been proved that the expect values of NPCR and UACI between two random image (each pixel takes a random number independently) are 99.6094\% and 33.4635\%, respectively \cite{wuNPCRUACIRandomness2011}.

In the test, We change only one bit in plain image and encrypt the two images with the same keys. Then NPCR and UACI value between two cipher images are calculated by Eq.(\ref{eq:NPCR})-(\ref{eq:UACI}). Of course, we have to do the experiments for many times and take the average value as the ultimate result. The results of sensitivity test for six images and the theoretical values defined by the 0.1\%, 1\% and 5\% confidence intervals are listed as Table \ref{tab:NPCR_my} and Table \ref{tab:UACI_my}.
As we can see from the above table, NPCR and UACI are close to their expected values, which has proved that we will get a totally different cipher image every time we encrypt the same plain image with one-bit-change. Thus, our proposed algorithm has achieved high resistance to all kinds of differential attack. 

Moreover, for Lena iamge, Table \ref{tab:NPCR_cmp} and Table \ref{tab:UACI_cmp} list the NPCR and UACI results  of some existing image encryption algorithms. It can be seen that our proposed method passes all the test while some existing methods fail some items. Thus, the proposed scheme exhibits excellent performance by comparison, which proves the progress of this work.
	
\begin{table}[htbp]
	\scriptsize
	\centering
	\caption{NPCR test results of the proposed image encryption algorithm}
	\begin{tabular}{ccccc}
		\hline
		Image &NPCR &NPCR critical values \\
		\hline
		& &${\rm NPCR}_{0.05}^{*} = 99.5693\%$ &${\rm NPCR}_{0.01}^{*} = 99.5527\%$ &${\rm NPCR}_{0.001}^{*} = 99.5341\%$ \\
		\hline
		Lena &99.6964\%  &Passed &Passed &Passed\\
		Cameraman &99.6002\% &Passed &Passed &Passed\\
		Peppers &99.6063\% &Passed &Passed &Passed\\
		Starfish &99.6351\% &Passed &Passed &Passed\\
		Airplane &99.6228\% &Passed &Passed &Passed\\
		Baboon &99.6191\% &Passed &Passed &Passed\\
		\hline
	\end{tabular}
	\label{tab:NPCR_my}
\end{table}

\begin{table}[htbp]
	\scriptsize
	\centering
	\caption{UACI test results of the proposed image encryption algorithm}
	\begin{tabular}{ccccc}
		\hline
		Image &UACI &UACI critical values \\
		\hline
		& &\tabincell{c}{${\rm UACI}_{0.05}^{*-} = 33.2824\%$ \\
	${\rm UACI}_{0.05}^{*+} = 33.6447\%$} &\tabincell{c}{${\rm UACI}_{0.01}^{*-} = 33.2255\%$ \\
	${\rm UACI}_{0.01}^{*+} = 33.7016\%$} &\tabincell{c}{${\rm UACI}_{0.001}^{*-} = 33.1594\%$ \\
	${\rm UACI}_{0.001}^{*+} = 33.7677\%$} \\
		\hline
		Lena &33.4758\%  &Passed &Passed &Passed\\
		Cameraman &33.5031\% &Passed &Passed &Passed\\
		Peppers &33.4523\% &Passed &Passed &Passed\\
		Starfish &33.4875\% &Passed &Passed &Passed\\
		Airplane &33.5115\% &Passed &Passed &Passed\\
		Baboon &33.4624\% &Passed &Passed &Passed\\
		\hline
	\end{tabular}
	\label{tab:UACI_my}
\end{table}

\begin{table}[htbp]
	\scriptsize
	\centering
	\caption{NPCR test results of different image encryption algorithms}
	\begin{tabular}{ccccc}
		\hline
		Algorithm &NPCR &NPCR critical values \\
		\hline
		& &${\rm NPCR}_{0.05}^{*} = 99.5693\%$ &${\rm NPCR}_{0.01}^{*} = 99.5527\%$ &${\rm NPCR}_{0.001}^{*} = 99.5341\%$ \\
		\hline
		Proposed &99.6964\%  &Passed &Passed &Passed\\
		Ref.\cite{asgari-chenaghluNovelImageEncryption2019} &99.6191\% &Passed &Passed &Passed\\
		Ref.\cite{gayathriEfficientSpatiotemporalChaotic2019} &99.6096\% &Passed &Passed &Passed\\
		Ref.\cite{alawidaNewHybridDigital2019} &99.620\% &Passed &Passed &Passed\\
		Ref.\cite{wangNovelChaoticBlock2015} &99.5865\% &Passed &Passed &Passed\\
		Ref.\cite{wangFastImageAlgorithm2015} &99.61\% &Passed &Passed &Passed\\
		Ref.\cite{xuNovelBitlevelImage2016} &99.62\% &Passed &Passed &Passed\\
		Ref.\cite{liuFastImageEncryption2016} &99.61\% &Passed &Passed &Passed\\
		Ref.\cite{wuDesignImageCipher2014} &99.6689\% &Passed &Passed &Passed\\
		Ref.\cite{ahmadChaosbasedDiffusionHighly2015} &99.36\% &\textbf{Failed} &\textbf{Failed} &\textbf{Failed}\\
		\hline
	\end{tabular}
	\label{tab:NPCR_cmp}
\end{table}

\begin{table}[htbp]
	\scriptsize
	\centering
	\caption{UACI test results of different image encryption algorithms}
	\begin{tabular}{ccccc}
		\hline
		Image &UACI &UACI critical values \\
		\hline
		& &\tabincell{c}{${\rm UACI}_{0.05}^{*-} = 33.2824\%$ \\
	${\rm UACI}_{0.05}^{*+} = 33.6447\%$} &\tabincell{c}{${\rm UACI}_{0.01}^{*-} = 33.2255\%$ \\
	${\rm UACI}_{0.01}^{*+} = 33.7016\%$} &\tabincell{c}{${\rm UACI}_{0.001}^{*-} = 33.1594\%$ \\
	${\rm UACI}_{0.001}^{*+} = 33.7677\%$} \\
		\hline
		Proposed &33.4758\%  &Passed &Passed &Passed\\
		Ref.\cite{asgari-chenaghluNovelImageEncryption2019} &33.6751\% &\textbf{Failed} &Passed &Passed\\
		Ref.\cite{gayathriEfficientSpatiotemporalChaotic2019} &33.4283\% &Passed &Passed &Passed\\
		Ref.\cite{alawidaNewHybridDigital2019} &33.505\% &Passed &Passed &Passed\\
		Ref.\cite{wangNovelChaoticBlock2015} &33.2533\% &\textbf{Failed} &Passed &Passed\\
		Ref.\cite{wangFastImageAlgorithm2015} &33.13\% &\textbf{Failed} &\textbf{Failed} &\textbf{Failed}\\
		Ref.\cite{xuNovelBitlevelImage2016} &33.51\% &Passed &Passed &Passed\\
		Ref.\cite{liuFastImageEncryption2016} &33.33\% &\textbf{Failed} &Passed &Passed\\
		Ref.\cite{wuDesignImageCipher2014} &33.4936\% &Passed &Passed &Passed\\
		Ref.\cite{ahmadChaosbasedDiffusionHighly2015} &32.72\% &\textbf{Failed} &\textbf{Failed} &\textbf{Failed}\\
		\hline
	\end{tabular}
	\label{tab:UACI_cmp}
\end{table}

\subsection{Correlation analysis}
A meaningful image has high correlation among adjacent pixels. Thus an important feature for image encryption algorithm is to break this high correlation and make it possible to resist all kinds of statistical attack. To test correlation of image, we randomly select 5000 pairs of adjacent pixels from three different directions (horizontal, vertical and diagonal) in plaintext images and ciphertext images, and calculate the correlation coefficient by:
\begin{equation}
CC=\frac{\sum_{i=1}^{N}\left(x_i-\bar{x}\right)\left(y_i-\bar{y}\right)}{\sqrt{\left(\sum_{i=1}^{N}\left(x_i-\bar{x}\right)^2\right)\left(\sum_{i=1}^{N}\left(y_i-\bar{y}\right)^2\right)}},
\end{equation}
where $\bar{x}=\frac{1}{N}\sum_{i=1}^{N}x_i$, $\bar{y}=\frac{1}{N}\sum_{i=1}^{N}y_i$.
The result is listed in Table.\ref{tab:correlation}. As we can see, the plain images  hold the correlation coefficient close to 1 in all direction, which demonstrates high correlation. After encryption, the correlation coefficient is close to 0. That is, our proposed algorithm has broken the correlation between adjacent pixels. The distribution of gray values in Lena for two adjacent pixels is shown in Figure.\ref{Fig:CC}, which has proved the conclusion above visually. Besides, Table \ref{tab:correlation_cmp} presents the correlation coefficients of encrypted Lena using different encryption algorithms. It can be seen that the proposed encryption scheme gets relatively favorable results among these methods.

\begin{figure}[htbp]
	\centering
	\subfigure[]{
		\label{Fig:CC_src_hor}
		\includegraphics[width=0.3\textwidth]{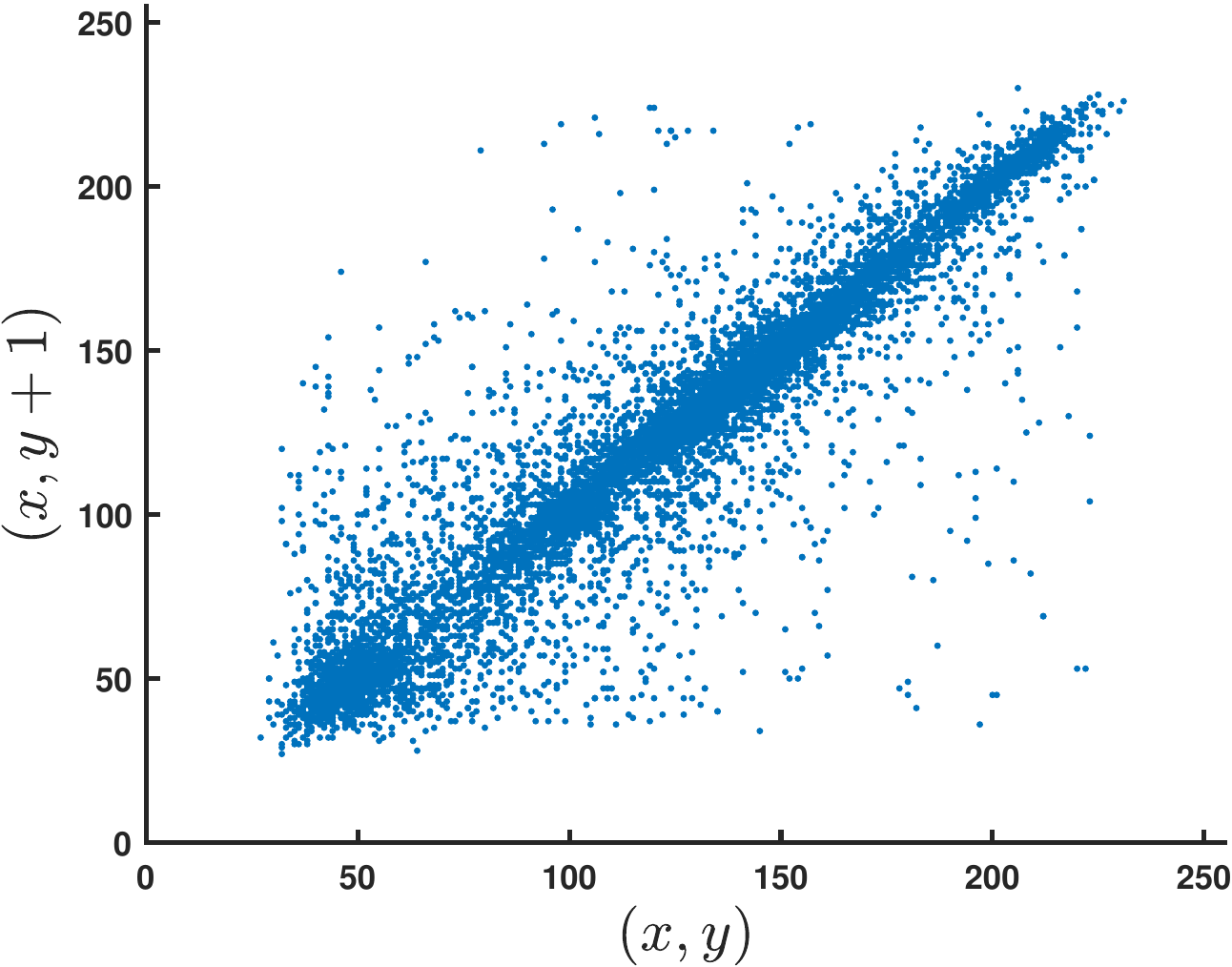}}
	\subfigure[]{
		\label{Fig:CC_src_ver}
		\includegraphics[width=0.3\textwidth]{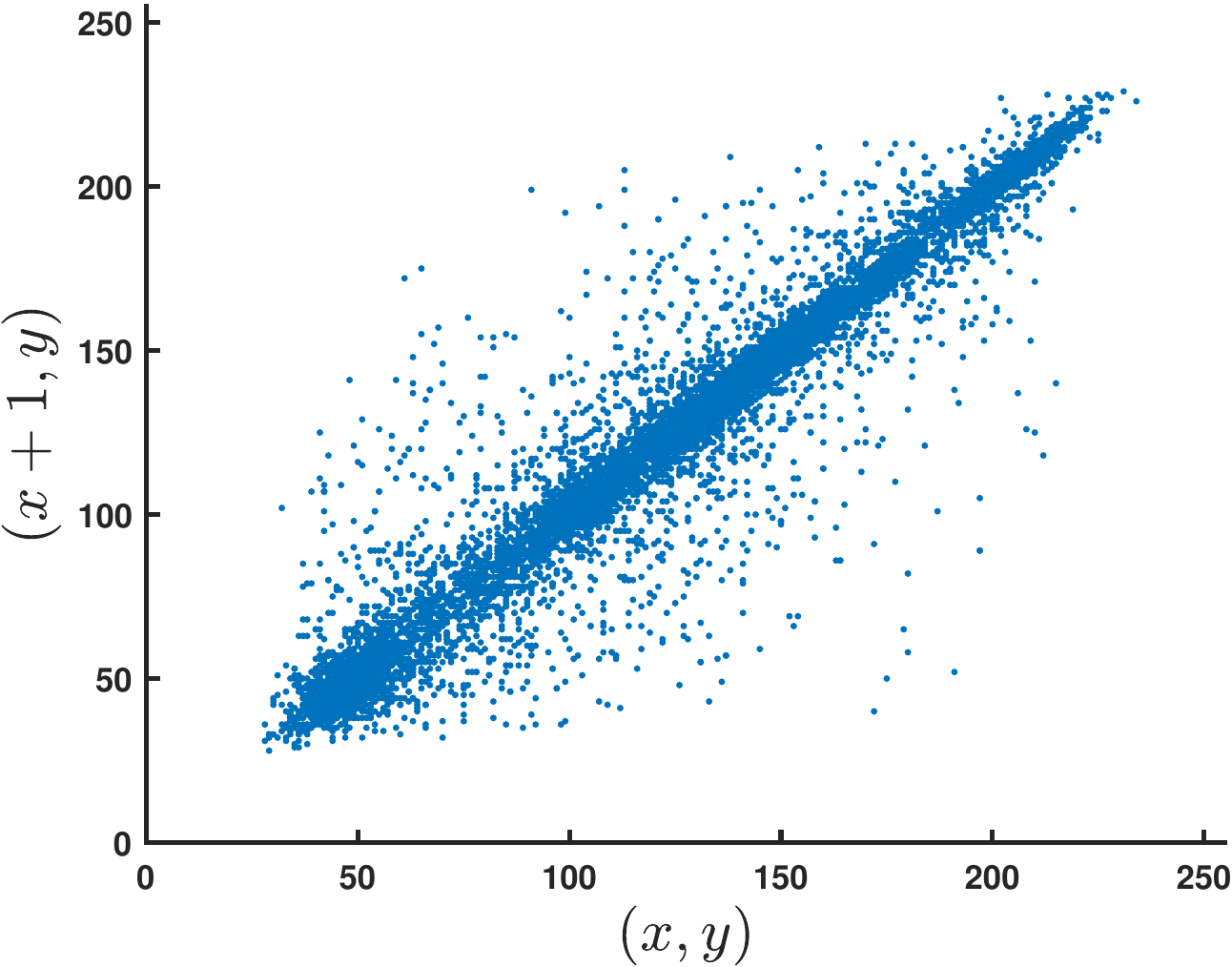}}
	\subfigure[]{
		\label{Fig:CC_src_dia}
		\includegraphics[width=0.3\textwidth]{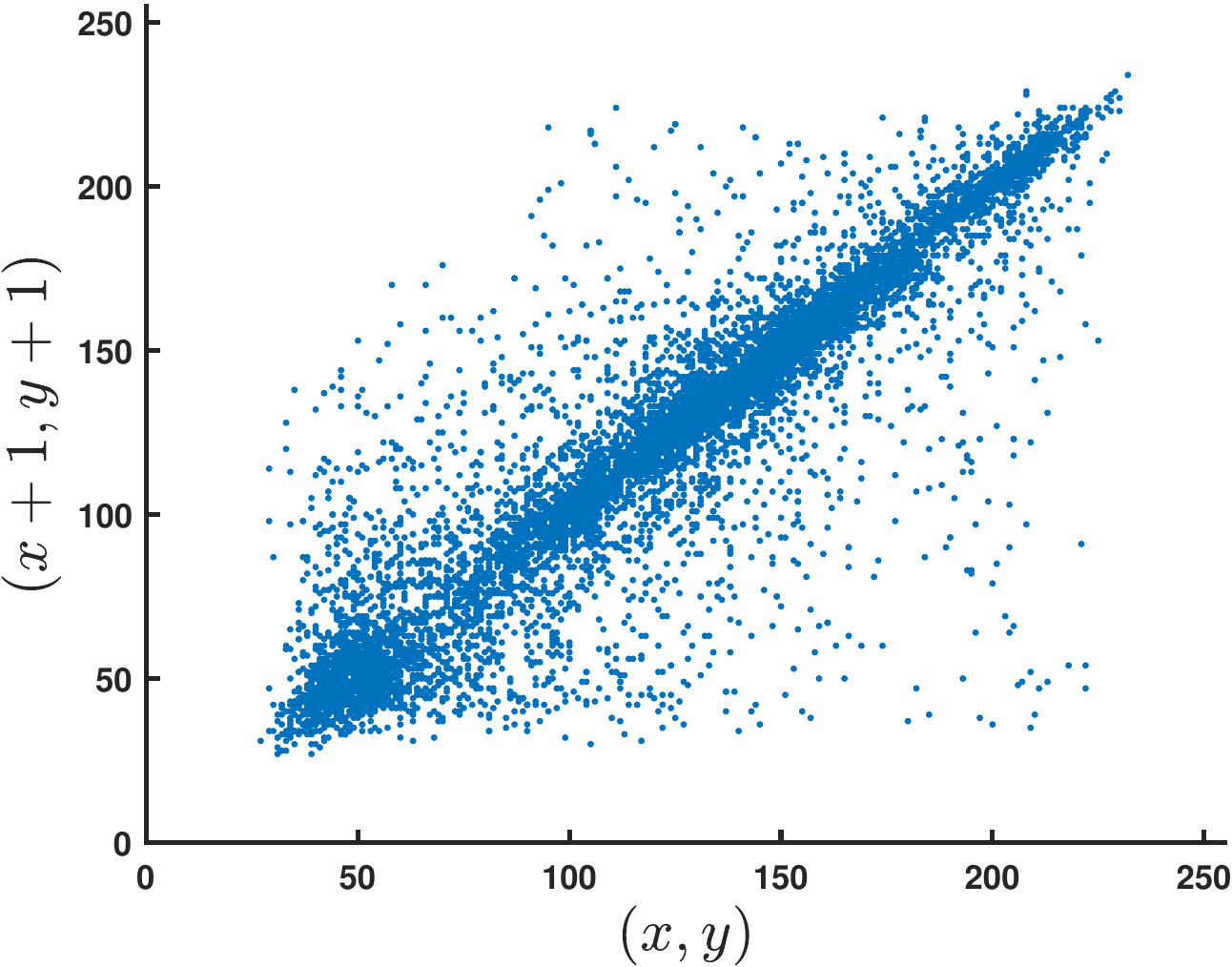}}
	\subfigure[]{
		\label{Fig:CC_en_hor}
		\includegraphics[width=0.3\textwidth]{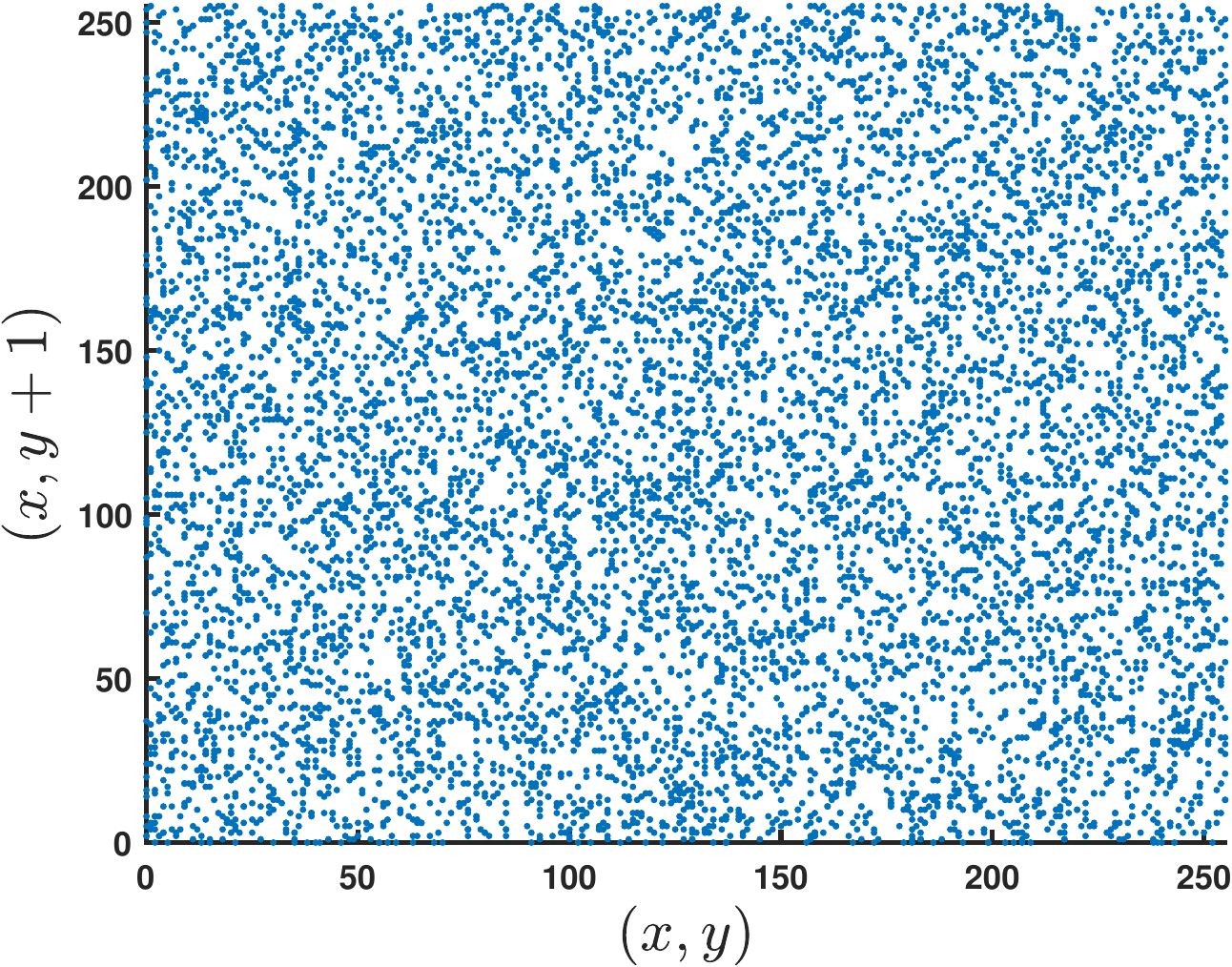}}
	\subfigure[]{
		\label{Fig:CC_en_ver}
		\includegraphics[width=0.3\textwidth]{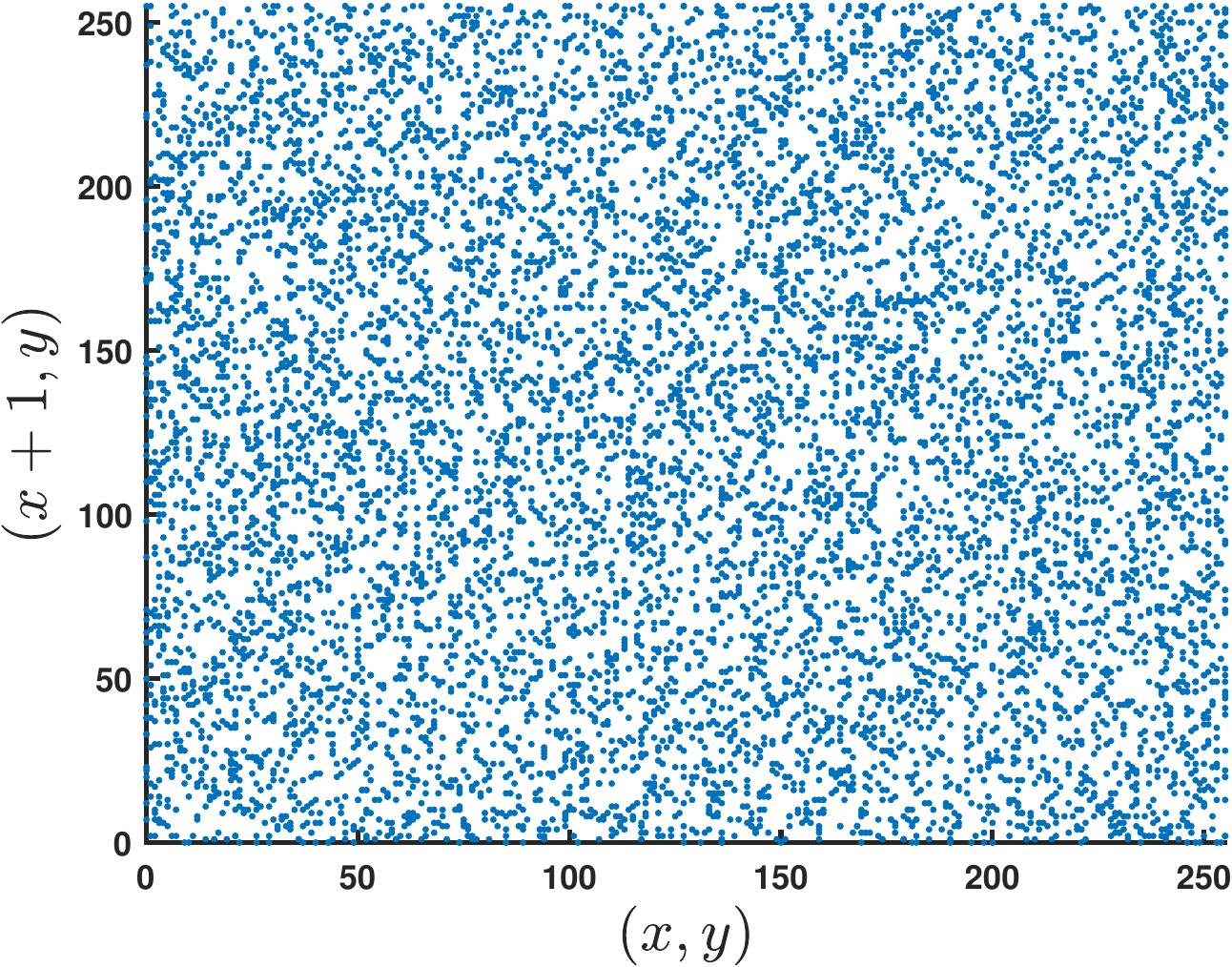}}
	\subfigure[]{
		\label{Fig:CC_en_dia}
		\includegraphics[width=0.3\textwidth]{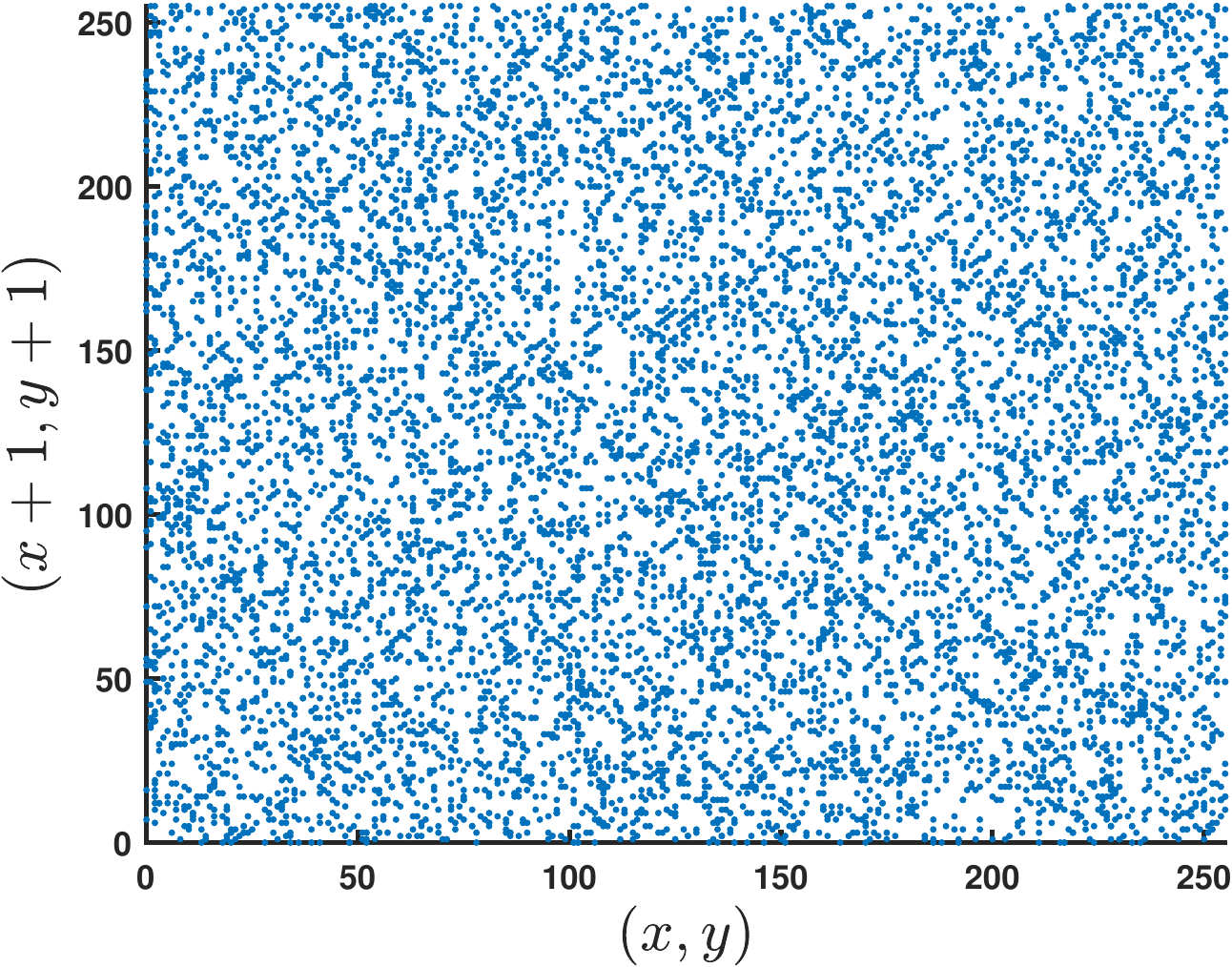}}
	\caption{Adjacent pixel correlation in different directions: (a) Horizontal direction of Lena; (b) Vertical direction of Lena; (c) Diagonal direction of Lena; (d) Horizontal direction of Encrypted Lena; (e) Vertical direction of Encrypted Lena; (f) Diagonal direction of Encrypted Lena;}
	\label{Fig:CC}
\end{figure}

\begin{table}[ht]
	\scriptsize
	\centering
	\caption{Correlation coefficients}
	\begin{tabular}{c|cc|cc|cc}
		\toprule
		\multirow{2}{*}{Image} & \multicolumn{2}{|c|}{Horizontal}& \multicolumn{2}{|c|}{Vertical}& \multicolumn{2}{|c}{Diagonal} \\
		\cline{2-7}
		&plain-image &cipher-image &plain-image &cipher-image &plain-image &cipher-image \\
		\midrule
		Lena& 0.973187& 0.001135& 0.960615& 0.002457& 0.945709& 0.003213 \\
		Cameraman& 0.807831& 0.001258& 0.842404& 0.007285& 0.802359& 0.003244 \\
		Peppers& 0.923041& 0.002659& 0.921635& 0.004842& 0.910057& 0.002405 \\
		Starfish& 0.963034&  0.001167& 0.963555& 0.008413& 0.942439& 0.000175 \\
		Woman& 0.882540&  0.002617& 0.893154& 0.008413& 0.884573& 0.003216 \\
		Baboon& 0.935587&  0.003578& 0.942316& 0.000311& 0.939928& 0.006542 \\
		\bottomrule
	\end{tabular}
	\label{tab:correlation}
\end{table}

\begin{table}[htbp]
	\centering
	\caption{Correlation coefficients of various algorithms}
	\begin{tabular}{c|ccc}
		\toprule
		Algorithm &Horizontal &Vertical &Diagonal \\
		\midrule
		Proposed &0.001135 &0.002457 &0.003213 \\
		Ref.\cite{wangFastImageAlgorithm2015} &-0.0331 &0.0057 &0.0169 \\
		Ref.\cite{kangRealityPreservingMultipleParameter2019} &0.0015 &0.0017 &-0.0033 \\
		Ref.\cite{liaoNovelImageEncryption2010} &0.0127 &-0.0190 &-0.0012 \\
		Ref.\cite{asgari-chenaghluNovelImageEncryption2019} &-0.00222 &0.00137 &0.00291 \\
		Ref.\cite{gayathriEfficientSpatiotemporalChaotic2019} &0.01410 &0.02960 &0.00540 \\
		Ref.\cite{alawidaNewHybridDigital2019} &-0.0017 &-0.0084 &-0.0019 \\
		Ref.\cite{zhangUnifiedImageEncryption2018} &-0.009448 &0.024064 &-0.041171 \\
		Ref.\cite{huaCosinetransformbasedChaoticSystem2019} &-0.003280 &-0.000777 &-0.000181 \\
		Ref.\cite{xuNovelBitlevelImage2016} &-0.0230 &0.0019 &-0.0034 \\
		\bottomrule
	\end{tabular}
	\label{tab:correlation_cmp}
\end{table}

\subsection{Robustness to data loss}
In actual communication system, some information of the encrypted image may be lost when it is transmitted in channel for all kinds of reasons. And attackers may intercept the information and cause data loss as well. Then receivers would like to recover the original image as much as possible from the partial information received in such situation. Thus it is useful if the encryption algorithm shows robustness to data loss.

In our experiment, we let the encrypted Lena lose 25\% and 50\% information in different pixels, respectively, and  decrypt them with correct keys. The result is shown in Figure.\ref{Fig:data_loss}. As we can see, the decrypted images have restored most of the original information visually, which has proved the robustness to data loss of our proposed encryption scheme.

\begin{figure}[htbp]
	\centering
	\subfigure[]{
		\label{Fig:en_loss_1}
		\includegraphics[width=0.22\textwidth]{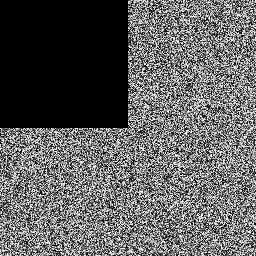}}
	\subfigure[]{
		\label{Fig:en_loss_2}
		\includegraphics[width=0.22\textwidth]{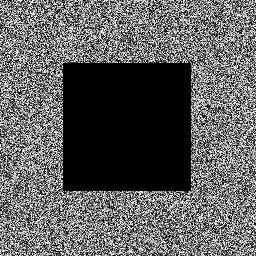}}
	\subfigure[]{
		\label{Fig:en_loss_3}
		\includegraphics[width=0.22\textwidth]{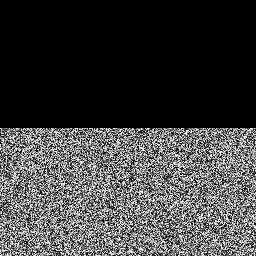}}
	\subfigure[]{
		\label{Fig:en_loss_4}
		\includegraphics[width=0.22\textwidth]{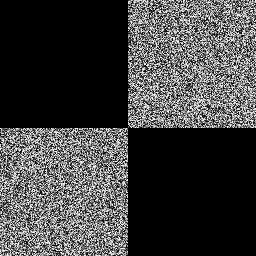}}
	\subfigure[]{
		\label{Fig:de_loss_1}
		\includegraphics[width=0.22\textwidth]{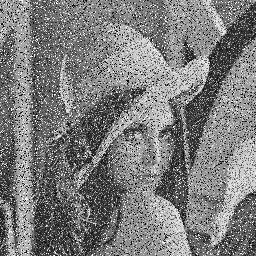}}
	\subfigure[]{
		\label{Fig:de_loss_2}
		\includegraphics[width=0.22\textwidth]{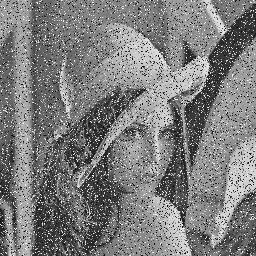}}
	\subfigure[]{
		\label{Fig:de_loss_3}
		\includegraphics[width=0.22\textwidth]{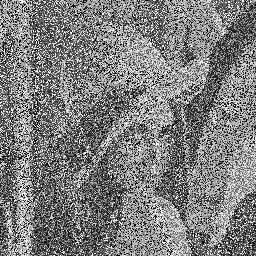}}	
	\subfigure[]{
		\label{Fig:de_loss_4}
		\includegraphics[width=0.22\textwidth]{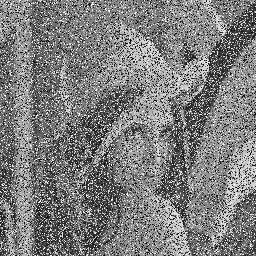}}
	\caption{Data loss analysis: (a) 25\% data loss of encrypted Lena; (a) 25\% data loss of encrypted Lena; (c) 50\% data loss of encrypted Lena; (d) 50\% data loss of encrypted Lena; (e) decrypted Lena from (a); (f) decrypted Lena from (b); (g) decrypted Lena from (c); (h) decrypted Lena from (d).}
	\label{Fig:data_loss}
\end{figure}

%	\subsection{Information entropy analysis}
%	Like the general sense of entropy in information science, information entropy for digital image describes the randomness of the distribution of pixel values and it is defined by:
%	\begin{equation}
%		H\left(k\right)=-\sum_{i=1}^{L-1} P\left(k_i\right) log_2 P\left(k_i\right)
%	\end{equation}
%	where $P\left(k_i\right)$ denotes the probability of symbol. The ideal ciphered image has the maximum entropy 8 as the probability is the sane. Thus we can compute the entropy of the cipher image gained from our encryption scheme. The closer it is to 8, the better security there will be. we have tested the information entropy of four encrypted images and the results is shown as \hyperref[Tab:entropy]{Table. \ref{Tab:entropy}}.
%	\begin{table}[ht]
%		\centering
%		\caption{Information entropy}
%		\label{Tab:entropy}
%		\begin{tabular}{c|cc}
%			\hline\noalign{\smallskip}
%			Testing image &Information entropy &Error to ideal value \\
%			\noalign{\smallskip}\hline\noalign{\smallskip}
%			Encrypted Lena &7.99712832 &0.00287168\\
%			Encrypted Baboo &7.99559999 &0.00440001\\
%			Encrypted Peppers &7.99630853 &0.00369147\\
%			Encrypted Photography &7.99678446 &0.00321554\\
%			\noalign{\smallskip}\hline
%		\end{tabular}
%	\end{table}
%	
%	As we can see that the cipher images all hold the information entropy quite close to 8, which has proved high security of our algorithm.

\section{Conclusion}
\label{sec:conclusion}
In this paper, a novel chaotic system named as UT-CCS is proposed as a general chaotic framework that covers several existing chaotification models. The UT-CCS can generate new chaotic maps with excellent performance using any two existing chaotic maps. A theorem leading us to choose better UTFs, is given and proved. Several new chaotic maps are generated as examples and the results of chaotic behavior analysis demonstrate the superiority of UT-CCS. Besides, a new PRNG based on generated chaotic maps is designed for image encryption, of which the uniformity of numerical distribution is also proved mathematically. At last, a digital image encryption algorithm based on CBPRNG is proposed. The confusion phase and diffusion phase of the encryption scheme are based on different CBPRNG to achieve better randomness. We implement the algorithm in simulation environment and analyze the security through general tests. The results indicates the reliability of ciphertext images and the competitiveness of our algorithm compared with similar methods.

This paper makes contributions to chaos application and image information security. In future work, we will continue improving our model such as extending the UT-CCS to two-dimensional case.

%% The Appendices part is started with the command \appendix;
%% appendix sections are then done as normal sections
%% \appendix

\section*{Acknowledgement}
\label{sec:ack}
This work was supported in part by NSFC under Grant U1536104, 11301504.

%% If you have bibdatabase file and want bibtex to generate the
%% bibitems, please use
%%
\bibliographystyle{elsarticle-num}
\bibliography{chaos_IE}

%% else use the following coding to input the bibitems directly in the
%% TeX file.

%\begin{thebibliography}{00}
%
%%% \bibitem{label}
%%% Text of bibliographic item
%
%\bibitem{}
%
%\end{thebibliography}

\end{document}